%% file: main.tex
\documentclass[11pt]{article}
\newif\ifsubmission
\submissiontrue
\submissionfalse %

\newif\ifcomment
\commenttrue
\commentfalse %

\ifsubmission
\commentfalse
\author{Anonymous Authors}
\else
\author{
  Yotam Kenneth-Mordoch
  \\ Weizmann Institute of Science
  \\ \texttt{yotam.kenneth@weizmann.ac.il}
}

\fi

\input{paper.tex}

\title{Faster Pseudo-Deterministic Minimum Cut}
\begin{document}

\maketitle
\begin{abstract}
  Pseudo-deterministic algorithms are randomized algorithms that, with high constant probability, output a fixed canonical solution.
  The study of pseudo-deterministic algorithms for the global minimum cut problem was recently initiated by Agarwala and Varma [ITCS'26],
  who gave a black-box reduction incurring an $O(\log n \log \log n)$ overhead.
  We introduce a natural graph-theoretic tie-breaking mechanism that uniquely selects a canonical minimum cut.
  Using this mechanism, we obtain:
  (i) A pseudo-deterministic minimum cut algorithm for weighted graphs running in $O(m\log^2 n)$ time, eliminating the $O(\log n \log \log n)$ overhead of prior work and matching existing randomized algorithms.
  (ii) The first pseudo-deterministic algorithm for maintaining a canonical minimum cut in a fully-dynamic unweighted graph, with $\mathrm{polylog}(n)$ update time and $\tilde{O}(n)$ query time.
  (iii) Improved pseudo-deterministic algorithms for unweighted graphs in the dynamic streaming and cut-query models of computation, matching the best randomized algorithms.
\end{abstract}
\setcounter{page}{0}
\thispagestyle{empty}
\newpage

\section{Introduction}
\label{sec:introduction}

\input{introduction}

\section{Preliminaries}
\label{sec:preliminaries}
\input{preliminaries}

\section{Pseudo-Deterministic Minimum-Cut Algorithm}
\label{sec:minimum-cut-alg}
\input{minimum-cut-alg}

\section{Additional Models of Computation}
\label{section:additional-models}
\input{additional_models.tex}

\ifsubmission
\else
\subsection*{Acknowledgements}
The author thanks Robert Krauthgamer for helpful comments on earlier drafts of this work.
\fi

{\small
  \bibliographystyle{alphaurl}
  \bibliography{bibliography}
} %

\appendix

\end{document}

%% file: paper.tex
\usepackage[letterpaper, margin=1in]{geometry}
\usepackage{amsthm}
\usepackage{amsmath}
\usepackage{algorithm}
\usepackage[noend]{algpseudocode}

\usepackage{tikz}
\usepackage{amsfonts}
\usepackage{import} 
\usepackage{xifthen}
\usepackage{mathtools}
\usepackage{dsfont}
\usepackage[cal=esstix]{mathalfa}
\usepackage{thm-restate}
\usepackage{makecell} %
\usepackage{booktabs} %
\usepackage{array} %
\newcolumntype{M}[1]{>{\centering\arraybackslash}m{#1}}

\usepackage{comment}
\usepackage[colorinlistoftodos,textsize=tiny,textwidth=2cm,color=red!25!white,obeyFinal]{todonotes}

\usepackage[hypertexnames=false]{hyperref}
\hypersetup{colorlinks=true,allcolors=blue}
\usepackage{cleveref} 

\AddToHook{cmd/appendix/before}{\crefalias{section}{appendix}}

\newcommand{\mintcut}{\mathrm{cut}}
\newcommand{\N}{\mathbb{N}}

\newcommand{\Probability}[1]{\Pr\left[ #1 \right]}

\newcommand{\tO}{\tilde{O}}

\DeclareMathOperator{\poly}{poly}
\DeclareMathOperator{\polylog}{polylog}
\providecommand{\set}[1]{{\{#1\}}}

\ifcomment

\else
\fi

\newtheorem{theorem}{Theorem}[section]

\newtheorem{claim}[theorem]{Claim}

\newtheorem{lemma}[theorem]{Lemma}
\newtheorem{corollary}[theorem]{Corollary}

\newtheorem{definition}[theorem]{Definition}

\newcommand{\LN}[1]{LN(#1)}
\newcommand{\ADD}[2]{#1.ADD(#2)}

\newcommand{\SUBTREE}[2]{#1.SUBTREE(#2)}

\newcommand{\A}{\mathcal{A}}
\newcommand{\priority}{P}

%% file: introduction.tex
The study of pseudo-deterministic algorithms was initiated by \cite{GG11}.
Such algorithms are randomized, yet output a fixed solution with high constant probability.
Formally, we call a randomized algorithm $\A$ \emph{pseudo-deterministic} if for every input $x$, there exists a \emph{canonical output} $s(x)$ such that $\Probability{\A(x)=s(x)} \ge 2/3$.
Such algorithms are often useful when a consistent output is desired, for example when interacting with an adversary or when the algorithm is used as a building block in a larger algorithm.
These attractive properties have motivated the study of pseudo-deterministic algorithms in various contexts, including  matching problems, matroid intersection, undirected connectivity, and more \cite{GG17,OS17,GL19,AV20,GG21,CLORS23}.

The study of pseudo-deterministic algorithms for the global minimum cut problem was only initiated recently by \cite{AV26}.
Their approach is based on the isolation lemma of \cite{MVV87}, and the strategy for its derandomization of \cite{GG17}.
Using this, they show that given a weighted graph on $n$ vertices and $m$ edges, it suffices to make $O(\log n \log \log n)$ calls to any randomized minimum cut algorithm in a black box manner to find a canonical minimum cut with high probability.
Applying this approach to the $O(m\log^2 n)$ time randomized minimum cut algorithm of \cite{GMW20} yields an $O(m\log^3 n \log \log n)$ time algorithm, which is faster than current deterministic algorithms that use $O(m\log^c n)$ time for some large constant $c$ \cite{HLRW24}.%
\footnote{The exact value of $c$ is not explicitly stated in \cite{HLRW24}, but can be inferred to be a large constant.}

In contrast, our approach is based on a graph theoretic tie-breaking mechanism that ensures the uniqueness of the canonical minimum cut.
This approach offers two main advantages.
First, it yields a faster pseudo-deterministic minimum cut algorithm running in $O(m\log^2 n)$ time by modifying the algorithm of \cite{GMW20} to account for the tie-breakers.
Second, our algorithm can also be applied to find the same canonical minimum cut in a contraction of the original graph.%
\footnote{As long as no edge of the canonical minimum cut is contracted.}
We exploit this property to design the first pseudo-deterministic fully-dynamic minimum cut algorithm.
This algorithm is also adversarially-robust, as pseudo-deterministic algorithms are inherently resistant to adaptive updates.

\subsection{Results}
Our main result is a new randomized algorithm for finding a canonical minimum cut in $O(m\log^2 n)$ time.
\begin{theorem}
    \label{theorem:sequential-result}
    There exists a randomized algorithm that returns a canonical minimum cut in a weighted graph $G$ in $O(m\log^2 n)$ time with high probability.
\end{theorem}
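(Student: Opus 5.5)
Our plan is to fix a deterministic tie-breaking rule that singles out one minimum cut, and then to modify the tree-packing algorithm of \cite{GMW20} so that it returns exactly that cut.

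\textbf{Canonical cut.} Fix the vertex labels $1,\dots,n$. Among all minimum cuts, consider those $(S,V\setminus S)$ with vertex $1 \notin S$, so each cut is named by the side $S$ avoiding vertex $1$. The canonical cut is the minimum cut whose side $S$ has the fewest vertices. Any further ties are broken by comparing the sorted label sets of $S$ lexicographically. This choice can be phrased through perturbed weights: set $w'(e)=w(e)\cdot N+\epsilon(e)$, where the perturbation $\epsilon(e)$ is defined through the vertices. For instance, give each vertex $v\neq 1$ a weight $2^{v}$, and charge a cut the total weight of $S$. This is not an edge weight, but it can be handled as a cut function $f(S)=(w(\partial S), |S|, \text{labels})$ compared lexicographically. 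The properties we need are the following.
\begin{itemize}
\item The canonical cut is a minimum cut of $G$.
\item It is determined by $G$ alone, independent of the random bits.
\item It stays the same after contracting any edge that it does not cross, which supports the later dynamic result.
\end{itemize}

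\textbf{Algorithm.} The algorithm of \cite{GMW20} samples $O(\log n)$ spanning trees from an approximate packing. With high probability, every minimum cut $2$-respects at least one sampled tree; this is Karger's argument, and it holds for every minimum cut simultaneously, in particular for the canonical one. For each tree $T$, \cite{GMW20} then finds the minimum $2$-respecting cut in $O(m\log n)$ time, using heavy-light (path) decomposition and a monotone-matrix / dynamic-tree search. We will change this step so that it returns the \emph{lexicographically smallest} $2$-respecting cut of $T$ under the key $f$, rather than an arbitrary minimum. The overall output is then the smallest key over all sampled trees. This equals the canonical cut whenever some tree is $2$-respected by it, which happens with high probability. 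The output is therefore canonical with high probability, and the running time is $O(\log n)\cdot O(m\log n)=O(m\log^2 n)$.

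\textbf{Main obstacle.} The hard step is carrying the tie-breaking key through the $2$-respecting subroutine without losing its efficiency. A $2$-respecting cut is given by one or two tree edges $e,e'$, and its side $S$ is a subtree, or a difference or union of subtrees. So $|S|$ can be computed from subtree sizes in $O(1)$ time. We plan to augment every cut value with the secondary key $|S|$, represented as the pair $(w,|S|)$ or as the single number $w\cdot 2n+|S|$. We will then check that the data structures of \cite{GMW20} only add cut values and take minima; if so, they remain correct under this shifted ordering. We would also have to verify that the partial-monotonicity property used in their matrix search still holds when the ordering is refined. If it fails, we would instead use the alternative characterization that the canonical side is an inclusion-minimal minimum cut, and argue monotonicity from submodularity together with uncrossing.

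For the remaining ties, among minimum cuts of equal size, we would restrict the key to $(w,|S|)$ and replace the lexicographic rule with a cheaper one. By submodularity, two minimum cuts of equal minimum size avoiding vertex $1$ either coincide or are disjoint. So a minimum-size minimum cut may not be unique, and we need a final rule, such as taking the side containing the smallest label. This is handled by a third coordinate equal to $\min S$, which is computable as a subtree minimum or via the path decomposition. With that, the proof is complete.
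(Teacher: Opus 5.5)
\paragraph{Verdict: genuine gap in the per-tree step.}
Your canonical cut is sound. Ordering by $(w(S),|S|,\min S)$, with $S$ the side avoiding vertex $1$, has a unique minimizer, because (as you note) two minimum-size minimum cuts avoiding $1$ either coincide or are disjoint. The outer skeleton (tree packing, a per-tree subroutine, best result over trees) is also the paper's.

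The gap is in the per-tree step, which is the real content of the theorem (\Cref{lemma:modified-minimum-cut-2-respecting-algorithm}).
\begin{itemize}
\item \textbf{The key $|S|$ is fine.} It is additive over tree edges. Put the static secondary cost $-|v^{\downarrow}|$ on $v$ in the descendant link-cut tree, since $|u^{\downarrow}\setminus v^{\downarrow}|=|u^{\downarrow}|-|v^{\downarrow}|$. Put $|v^{\downarrow}|$ on tree edges in the bipartite instances. This is what the paper does with its priority components. Use pairs rather than $w\cdot 2n+|S|$ unless the weights are integral.
\item \textbf{Your third key is not additive.} For a descendant cut, $\min(u^{\downarrow}\setminus v^{\downarrow})$ depends jointly on $u$ and $v$. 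So it cannot be stored as a cost of $v$ in the tree over which $\SUBTREE{T'}{u}$ minimizes. Being ``computable as a subtree minimum'' only lets you evaluate it for a given pair; it does not let you search over it.
\item \textbf{Why this breaks canonicity.} If the search breaks $(w,|S|)$-ties arbitrarily, then in every tree that the canonical cut $S^*$ $2$-respects it may return a different, disjoint, minimum-size minimum cut. The output is then not canonical.
\end{itemize}
Your worry about partial monotonicity is beside the point: the \cite{GMW20} link-cut tree and bipartite solver only add tuple costs and take minima. The $\min S$ issue, however, is real. In the paper's ordering the analogous non-additive key is $\LN{S}$, and handling it is exactly what forces the two-link-cut-tree Case~1/Case~2 construction.

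\paragraph{How to close it with your own lemma.}
With your ordering, $\min S$ never needs to enter a data structure. \cite{GMW20} already produces one candidate per \emph{anchor}: for each $u$, the best descendant cut $u^{\downarrow}\setminus v^{\downarrow}$; and for each tree edge with lower endpoint $u$, the best independent partner (as the paper's footnote on the reduction notes). Root $T$ at vertex $1$ and run these searches with the pair key $(w,|S|)$ only.

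Suppose $S^*$ is anchored at $u$. Then the candidate $S'$ for $u$ satisfies $(w(S'),|S'|)\le(w(S^*),|S^*|)$. So $S'$ is itself a minimum-size minimum cut avoiding $1$. Since $u\in S'\cap S^*$, your coincide-or-disjoint lemma forces $S'=S^*$.

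It then suffices to evaluate $\min S$ on the $O(n)$ candidates via \Cref{corollary:subtree-minimum-query} and take the overall $f$-minimum; this returns $S^*$. With this argument added your route is correct, and its descendant case is simpler than the paper's. The price is a different canonical cut: you order by size before $\min S$, whereas the paper orders by $\LN{S}$ before size.
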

Our algorithm improves by a factor of $O(\log n \log \log n)$ over the algorithm of \cite{AV26}, and matches known randomized minimum-cut algorithms \cite{GMW20,MN20}.

Another advantage of our approach is that it uses a tie-breaking mechanism that can be applied consistently to contractions of the original graph.
We exploit this property to design the first pseudo-deterministic fully-dynamic minimum cut algorithm for unweighted graphs, leveraging recent dynamic minimum cut algorithms that are based on graph contractions \cite{GHNSTW23,KK25b,HKMR25}.
\begin{theorem}
    \label{theorem:dynamic-result}
    There exists a randomized fully-dynamic algorithm that, reports a canonical minimum cut in a dynamic unweighted graph $G=(V,E)$ on $n$ vertices with worst-case update time $O(\polylog(n))$ and query time $\tO(n)$.
    The algorithm is randomized and succeeds with high probability.
\end{theorem}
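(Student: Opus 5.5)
We prove \Cref{theorem:dynamic-result} by combining the tie-breaking rule with the contraction-based dynamic framework of \cite{GHNSTW23,KK25b,HKMR25}. The rule will be the one used for \Cref{theorem:sequential-result}. The key property we rely on is that it is \emph{contraction-consistent}: if $H$ is obtained from $G$ by contracting edge sets that avoid the canonical minimum cut $C^*(G)$, and $\lambda(H)=\lambda(G)$, then the canonical minimum cut of $H$, with tie-breaking inherited from $G$, is exactly $C^*(G)$. Concretely, we define the rule through a fixed ordering or priority $\priority$ on vertices, for instance lexicographic on the side not containing a fixed root, or on edge identifiers. Under contraction, a supervertex inherits the minimum priority of its members. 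We then check that the comparison between any two cuts of $H$ coincides with the comparison of the corresponding cuts of $G$. This is the first lemma we establish.

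Next we recall the dynamic framework. It maintains, with $\polylog(n)$ worst-case update time, a contracted multigraph $H$ with $\tO(n)$ edges. In fact it keeps an expander-decomposition or cluster-contraction structure, giving $\tO(n/\delta)$ vertices and $\tO(n)$ edges. With high probability against an adaptive adversary, $H$ preserves every non-trivial minimum cut of $G$, meaning no edge of any non-trivial minimum cut is contracted. Trivial (singleton) cuts are tracked separately via degrees. Since our algorithm is pseudo-deterministic, adaptivity is not a concern: outputs depend only on $G$ with high probability. We will need the guarantee that \emph{all} non-trivial minimum cuts are preserved, not just one, so that $C^*(G)$ in particular survives contraction. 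If the cited structures only guarantee preserving one minimum cut, we instead use their stronger form, which states that every cut of size below $(1+\epsilon)\lambda$, or every non-trivial minimum cut, crosses no contracted cluster. This step is the main obstacle, and we would verify it carefully against \cite{HKMR25}.

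To answer a query, we compute the minimum degree vertex set, which gives the candidate trivial cuts, in $O(1)$ time from a maintained heap. We then run the static algorithm of \Cref{theorem:sequential-result} on $H$ with inherited priorities, in $O(|E(H)|\log^2 n)=\tO(n)$ time, and obtain the canonical cut among non-trivial cuts of $H$. Finally, we compare this cut against the canonical trivial cut using the same tie-breaking rule and return the winner. To be correct, the rule must be total and consistent across both categories, so that the global canonical cut of $G$ is the better of the canonical trivial cut and the canonical non-trivial cut, and contraction-consistency yields the latter from $H$. Running time is $\polylog(n)$ worst case per update and $\tO(n)$ per query. Success holds with high probability by a union bound over polynomially many operations, or per operation, which suffices for pseudo-determinism.
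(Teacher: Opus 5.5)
Your architecture matches the paper's proof. You keep the \cite{HKMR25} structure plus a degree heap tie-broken by $h$. On a query you build the NMC $H$ and run the tie-breaking algorithm on $H$ with data inherited from $G$. You then compare its output with the $h$-first minimum-degree singleton, evaluating both in $G$. The property you flag as the main obstacle is exactly what \Cref{theorem:hkmar-data-structure} provides: an NMC preserves \emph{every} non-trivial minimum cut. That part is settled by citation.

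The real gap is in your contraction-consistency lemma. The rule behind \Cref{theorem:sequential-result} is the triple $(\mintcut_G(S),\LN{S},|S|)$. Inheriting the minimum is right for the ordering, $h(u)=\min_{x\in C_u}h(x)$, because $\LN{\cdot}$ is itself a minimum. The third coordinate, however, is a sum. Suppose a supervertex gets priority $1$, or equivalently the minimum of its members' unit priorities. Then the statement you plan to check is false, namely that every comparison between two cuts of $H$ agrees with the comparison of the corresponding cuts of $G$. Take two cuts with equal value and equal $\LN{\cdot}$: one made of two supervertices holding ten original vertices, the other made of three singleton supervertices. The first wins in $H$ and the second wins in $G$.

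The paper instead sets $\priority(u)=|C_u|$. Then $(\mintcut_H(S),\LN{S},\priority(S))=(\mintcut_G(S'),\LN{S'},|S'|)$ holds for every cut $S$ of $H$, with $S'=\bigcup_{u\in S}C_u$. This is precisely why \Cref{theorem:modified-karger-algorithm} is proved for arbitrary vertex priorities. It is that lemma, with source the supervertex containing $s$, that must be run on $H$. Calling \Cref{theorem:sequential-result} as a black box only yields a cut canonical for $H$ under some ordering of its own vertices, which need not be $C^*(G)$.

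If you prefer to keep unit priorities on $H$, your lemma is still true, but it needs a different proof than pairwise agreement.
\begin{itemize}
\item Let $v$ be the $h$-first vertex separable from $s$ by a minimum cut. The minimum cuts $S$ with $s\notin S$ and $\LN{S}=h(v)$ are exactly the minimum cuts with $v\in S$ and $s\notin S$.
\item By the submodularity argument of \Cref{claim:uniqueness-of-lex-first-cut}, this family is closed under intersection, so $C^*(G)$ is contained in every member.
\item If $C^*(G)$ survives in $H$, then $\lambda(H)=\lambda(G)$. The minimum cuts of $H$ attaining the smallest $\LN{\cdot}$ form a subfamily that contains $C^*(G)$.
\item Hence $C^*(G)$ is the unique minimizer under min-inherited $h$ and \emph{any} strictly positive priorities on $V_H$.
\end{itemize}
With either repair your proof goes through. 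As written, the key lemma rests on a verification that fails.
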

This time bound is optimal up to polylogarithmic factors, as simply outputting the vertices of a minimum cut requires $\Omega(n)$ time.
In comparison, known deterministic minimum cut algorithms take $n^{3/2+o(1)}$ worst-case update time \cite{KK25d,VC25}.
Finally, one of the main advantages of pseudo-deterministic algorithms is their robustness to adaptive updates.
Viewed through this lens, our algorithm also improves upon existing adversarially-robust minimum-cut algorithms, which take $O(n^{1+o(1)})$ worst-case update time \cite{HKMR25}.

The insights of \Cref{theorem:sequential-result} can also be applied to the cut-query and streaming models of computation.
Note that our results in both these models apply only to unweighted graphs, unlike the algorithms of \cite{AV26} that work for weighted graphs as well.
However, our algorithms improve upon the complexity of \cite{AV26} in these settings.
Our first result is an algorithm for finding a canonical minimum cut of an unweighted graph in the dynamic streaming setting using $O(n\log n)$ space and $2$ passes.
\begin{theorem}
    \label{theorem:streaming-result}
    There exists a randomized dynamic streaming algorithm that, given an unweighted graph $G=(V,E)$ on $n$ vertices, finds a canonical minimum cut of $G$ using $O(n\log n)$ bits in $2$ passes with high probability.
\end{theorem}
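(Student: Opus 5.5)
Our plan combines the tie-breaking mechanism (used to define the canonical minimum cut) with a linear-sketching / cut-sparsification pipeline in the style of the known two-pass dynamic streaming minimum-cut algorithms for unweighted graphs. The canonical cut is defined by a deterministic tie-breaking rule over all minimum cuts of $G$, and that rule is preserved under contractions that avoid the canonical cut's edges. So it suffices to compute, in small space, a contraction $H$ of $G$ in which every minimum cut of $G$ survives (no minimum-cut edge is contracted), and then run the tie-breaking selection on $H$ exactly.

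\textbf{First pass.} We maintain $\ell_0$-sampling connectivity sketches (AGM-style), using $O(n\log n)$ bits in total, for a constant number of subsampled graphs $G_p$. We also maintain a sketch from which a $(1\pm\epsilon)$-approximation of $\lambda$ can be recovered. From these we extract a spanning forest of a sample at rate $p\approx \Theta(\log n/\lambda)$. Alternatively, following the contraction-based approach of \cite{GHNSTW23,KK25b}, we extract a forest certifying a contraction to $O(n/\lambda)$ super-vertices with $O(n)$ remaining edges. The key property we need is that, with high probability, no edge of any minimum cut is contracted, which makes the guarantee independent of the tie-breaker. The standard statement, that non-trivial minimum cuts are preserved by such expander- or sample-based contractions while trivial (singleton) cuts are handled separately via degrees, gives this. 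Degrees are computable exactly in one pass using $O(n\log n)$ bits.

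\textbf{Second pass.} With the contraction map fixed (it is stored in $O(n\log n)$ bits), we store all edges of the contracted graph $H$ explicitly. $H$ has $O(n)$ edges after discarding self-loops, costing $O(n\log n)$ bits. We also record vertex degrees. We then compute exactly, offline, all minimum cuts of $H$ together with the singleton cuts of $G$. Next we apply the tie-breaking rule. This requires that the rule is a function only of information that is preserved in $H$, plus vertex identities within super-vertices, which we record in the contraction map. Because the canonical cut of $G$ is a minimum cut whose edges are never contracted, it remains a cut of $H$ and is selected by the same rule.

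\textbf{Main obstacle.} The main obstacle is guaranteeing that the contraction preserves \emph{all} minimum cuts (or at least the canonical one) with high probability while keeping $H$ within $O(n)$ edges. Simply preserving one minimum cut is not enough, since the canonical choice depends on the tie-breaker. Our first attempt would invoke the known fact that an unweighted graph has $O(n^2)$ minimum cuts. Together with a union bound, this gives that each fixed minimum cut survives with probability $1-n^{-c}$, and hence all survive simultaneously. If the sampling argument only yields constant survival probability per cut, we would instead use the structural contraction of \cite{KK25b}, which deterministically preserves all non-trivial minimum cuts given the sampled forest.
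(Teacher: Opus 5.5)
\textbf{There is a genuine gap.} Your plan needs, in two passes and $O(n\log n)$ bits, a \emph{single} contraction $H$ with $O(n)$ edges in which every minimum cut (or at least the canonical one) survives with high probability. None of the routes you sketch delivers this.

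For a sample-and-contract scheme at rate $p$, a fixed minimum cut survives only if none of its $\lambda$ edges is sampled. That happens with probability $(1-p)^{\lambda}\approx e^{-p\lambda}$:
\begin{itemize}
\item At your rate $p=\Theta(\log n/\lambda)$ this is $n^{-\Omega(1)}$, so the cut is destroyed with high probability. For a large enough constant the sample is even connected, which collapses $G$ to a point.
\item At $p=\Theta(1/\lambda)$ the survival probability is only a constant.
\item Survival probability $1-n^{-c}$ would force $p\lambda=O(n^{-c})$, i.e.\ essentially no contraction at all.
\end{itemize}
So your union bound over the $O(n^2)$ minimum cuts has nothing to rest on. It presupposes per-cut failure probability $n^{-c}$ with $c>2$, which is exactly what is unavailable. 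Constant per-cut survival is also all that the streaming construction of \cite{AD21} gives (a weak NMC in the sense of \Cref{definition:weak-nmc}).

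Your fallback to structural contractions such as \cite{KK25b} is unsupported. Those constructions are dynamic data structures, and you give no argument that they run in two passes over a dynamic stream with $O(n\log n)$ bits. Separately, standard $\ell_0$-sampling connectivity sketches cost $O(n\log^3 n)$ bits, not $O(n\log n)$.

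\textbf{The missing idea} is that constant per-cut survival suffices, provided you boost it by independent repetition rather than a union bound. You note that only the canonical cut matters, but then dismiss preserving a single cut. In fact it is enough that the canonical cut $S^*$ survives in \emph{one of several} independent contractions, because the tie-breaker is consistent under contraction:
\begin{itemize}
\item Give each super-vertex $u$ priority $\priority(u)=|C_u|$ and label $h(u)=\min_{x\in C_u}h(x)$, with the super-vertex of $s$ as source.
\item Then every cut of a contraction $H_i$ has exactly the tuple $(\mintcut_G(S'),\LN{S'},|S'|)$ of the corresponding cut $S'$ of $G$.
\item Hence no candidate from any $H_i$ can beat $S^*$, and by \Cref{claim:uniqueness-of-lex-first-cut} none ties with it unless it is $S^*$.
\end{itemize}

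This is the paper's proof, via \Cref{lemma:weak-nmc-meta}:
\begin{itemize}
\item Build $k=O(1)$ independent weak NMCs with \cite{AD21}.
\item Run \Cref{theorem:modified-karger-algorithm} on each with these priorities.
\item Add the minimum-degree vertex (ties broken by $h$) to cover a trivial $S^*$.
\item Output the lexicographic minimum.
\end{itemize}
A non-trivial $S^*$ survives in some $H_i$ with probability $1-(1-p)^k$, which can be made any desired constant. This gives success probability $2/3$, the pseudo-determinism threshold, and never requires a contraction that is simultaneously good for all minimum cuts.
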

Our algorithm matches the space and pass complexity of existing randomized minimum cut algorithms in this model \cite{RSW18,ACK19,AD21}.
Compared to the algorithm of \cite{AV26}, our approach requires only $2$ passes instead of $O(\log^2 n)$ passes and achieves an $O(\log^3 n \log \log n)$ factor improvement in space complexity.%
\footnote{The polylogarithmic factors in the space complexity of \cite{AV26} are not explicitly stated but can be inferred to be at least $O(n\log^4 n \log\log n)$ bits from the weighted minimum cut algorithm of \cite{MN20}.}
Finally, there are no known deterministic algorithms for finding a minimum cut in the streaming model using $o(n^2)$ space.

We also present an algorithm for the cut-query model that finds a canonical minimum cut of an unweighted graph using $O(n)$ cut-queries.
\begin{theorem}
    \label{theorem:cut-query-result}
    There exists a randomized algorithm that, given an unweighted graph $G=(V,E)$ on $n$ vertices, finds a canonical minimum cut of $G$ using $O(n)$ cut-queries with high probability.
\end{theorem}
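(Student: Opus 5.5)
The plan is to reduce \Cref{theorem:cut-query-result} to the known $O(n)$-cut-query randomized minimum cut algorithm for unweighted graphs, combined with the tie-breaking mechanism that defines the canonical minimum cut. The canonical cut will be defined for unweighted graphs by a lexicographic rule: first minimize the cut value, then break ties by a fixed vertex-dependent criterion, for example preferring the side not containing a fixed vertex $v_1$ with minimum size, and then the lexicographically smallest vertex set. Such a rule should be expressible as a perturbed weight function $w'(e) = 1 + \varepsilon(e)$, where $\varepsilon$ is small enough that any minimum cut of $w'$ is a minimum cut of the original graph. The key property needed is that $w'$-cut values can be computed from $O(1)$ cut queries on the original graph plus locally known quantities, such as $|S|$ or sums of vertex labels in $S$. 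This is exactly what the earlier tie-breaking mechanism should give: the perturbation is designed to depend only on vertex-level data, so the query model is not changed.

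With this in place, I would take the following steps in order.

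\textbf{Step 1.} Use $O(n)$ cut queries to learn the min-cut value $\lambda$ and a sparse certificate, following the approach of the randomized algorithm for the cut-query model. That algorithm learns a $k$-edge-connectivity certificate, or a contracted multigraph in which every nontrivial min cut survives, with $O(n)$ queries with high probability.

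\textbf{Step 2.} Argue that this contracted graph preserves every minimum cut, not just one. Then the canonical cut, which is a particular minimum cut, survives by the earlier observation that the tie-breaker behaves consistently under contractions that avoid canonical cut edges. Contracted super-vertices inherit the aggregated tie-breaking data, namely their size and their label sums.

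\textbf{Step 3.} Enumerate or compute, offline and with no further queries, the canonical minimum cut of the contracted graph. Also handle singleton cuts separately: the degrees give all trivial cuts using $n$ queries.

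\textbf{Step 4.} Compare the best trivial and nontrivial candidates under the tie-breaking order.

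The main obstacle is Step 2. Standard $O(n)$-query algorithms only guarantee that \emph{some} minimum cut survives, or at best all nontrivial ones in the Kawarabayashi--Thorup sense. I would try to show that the contraction, built from low-conductance clusters or a $2$-out sample followed by a learned $O(n)$-edge sparsifier, preserves all non-trivial minimum cuts with high probability. Trivial cuts are then handled by the degree queries. If only a $(1+\epsilon)$-approximate cut certificate is available, I would fall back on storing a forest packing, via Nagamochi--Ibaraki with $\lambda$ forests, which preserves all cuts of value at most $\lambda$ exactly. If $\lambda$ is large, the contraction step bounds the total size to $O(n)$ edges, which can be learned in $O(n)$ queries by the standard edge-learning argument in sparse graphs.
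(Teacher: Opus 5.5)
Your argument rests on Step~2, and that step is a genuine gap. The $O(n)$-query constructions you would build on (the star contraction of \cite{AEGLMN22}, and likewise $2$-out contractions) only guarantee that each \emph{fixed} non-trivial minimum cut survives with constant probability. A single run can lose a given minimum cut with constant probability, so ``all non-trivial minimum cuts survive with high probability'' is not something these tools deliver. Boosting with $\Theta(\log n)$ independent repetitions would cost $O(n\log n)$ queries. Your description of the cited algorithm in Step~1 is therefore too strong. The fallback does not rescue it either: a Nagamochi--Ibaraki certificate with $\lambda$ forests has up to $\lambda(n-1)$ edges, so it is learnable in $O(n)$ queries only when $\lambda=O(1)$, and for large $\lambda$ you again invoke the all-cuts contraction. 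You would also need the large-$\delta$ branch of \cite{AEGLMN22} to output an explicit $O(n)$-edge graph. The paper does this by applying the sparsify-and-contract step of \cite{RSW18} to the $O(n/\delta)$-vertex star-contracted graph, whose sparsifier costs only $o(n)$ queries, and then reconstructing the result via \Cref{lemma:cut-query-reconstruction}.

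The missing idea is that, because the canonical cut $S^*$ is unique, you never need all minimum cuts at once. The paper (\Cref{lemma:weak-nmc-meta}) proceeds as follows:
\begin{itemize}
\item It builds $k=O(1)$ independent weak NMCs $H_1,\dots,H_k$, each preserving any fixed non-trivial minimum cut with constant probability $p$.
\item It gives each super-vertex $u$ the inherited data $\priority(u)=|C_u|$ and $h(u)=\min_{x\in C_u}h(x)$, and computes the lexicographically-first cut of every $H_i$.
\item It adds the minimum-degree singleton, found with $n$ degree queries (ties broken by $h$), and outputs the lexicographic minimum of all candidates.
\end{itemize}
Each candidate is a genuine cut of $G$ whose tuple $(\mintcut_G(S),\LN{S},\priority(S))$ is computed exactly. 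So no candidate can beat $S^*$, and by \Cref{claim:uniqueness-of-lex-first-cut} none can tie it. Hence $S^*$ is output as soon as it survives in a single $H_i$, which happens with probability $1-(1-p)^k$. This gives success probability at least $2/3$, which is what pseudo-determinism requires. Your perturbed-weight digression is also unnecessary: offline computation is free in the cut-query model, so any fixed total order on cuts that can be evaluated from the contraction map would serve as the tie-breaker.
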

Our algorithm achieves the same query complexity as existing randomized minimum cut algorithms in this model \cite{RSW18,MN20,AEGLMN22}. 
Additionally, the query complexity of our algorithm improves upon \cite{AV26} by a $\polylog n$ factor. 
Finally, our algorithm also improves upon known deterministic minimum cut algorithms in this model, which use $\tilde{O}(n^{5/3})$ cut-queries \cite{ASW25}.
A summary of our results across various computational models appears in Table~\ref{table:results}.

\begin{table}[ht]
    \centering
    \renewcommand{\arraystretch}{2}
    \begin{tabular}{|M{1.8cm}|M{2.1cm}|M{1.8cm}|M{2.4cm}|M{3cm}|M{2.7cm}|}
        \hline
        \textbf{Model} & \textbf{Complexity Measure} & \textbf{Graph Type} & \textbf{Our Pseudo-Det. Results} & \textbf{Randomized} & \textbf{Existing Pseudo-Det.} \cite{AV26} \\
        \hline
        sequential 
        & time
        & weighted 
        & $O(m\log^2 n)$ 
        & \makecell{$O(m\log^2 n)$ \\ \cite{GMW20,MN20}}
        & $O(m\log^3 n)$ \\
        \hline
        fully dynamic 
        & \makecell{worst-case \\ time}
        & unweighted 
        & $\polylog (n)$ update, $\tO(n)$ query
        & \makecell{$\polylog(n)$ update \\ $\tO(n)$ query \\ \cite{HKMR25}}
        & -- \\
        \hline
        dynamic streaming 
        & \makecell{bits, \\ passes}
        & unweighted 
        & \makecell{$O(n\log n)$ \\ $2$ passes}
        & \makecell{$O(n\log n)$ \\ $2$ passes \\ \cite{AD21}}
        & \makecell{$O(n\log^4 n)$ \\ $O(\log^2 n)$ passes} \\
        \hline
        cut query 
        & queries
        & unweighted 
        & $O(n)$ 
        & \makecell{$O(n)$ \\ \cite{AEGLMN22}}
        & $\tO(n)$ \\
        \hline
    \end{tabular}
    \caption{Comparison of our pseudo-deterministic minimum cut algorithms with existing randomized and pseudo-deterministic minimum cut algorithms, across computational models. All complexities omit $O(\log \log n)$ factors.}
    \label{table:results}
\end{table}

\subsection{Technical Overview}
At a high level, our tie-breaking mechanism for returning a specific minimum cut can be described as follows.
It first finds the first vertex (under some ordering) that can be separated from a designated source vertex by a (global) minimum cut, and among all such minimum cuts, it returns the one with the fewest vertices.
To formalize this, begin by designating some vertex $s\in V$ as the \emph{source} and fix an ordering $h:V\setminus\set{s}\to [n-1]$ of the other vertices.
For every cut $S\subseteq V$, define its lexicographic number as $\LN{S}=\min_{x\in S} h(x)$ if $s\not\in S$ and $\LN{S}=\min_{x\in V\setminus S} h(x)$ otherwise.
Similarly, the total vertex priority of a cut $S$, denoted $\priority(S)$, is the sum of priorities of all vertices on the side of the cut that does not contain $s$, i.e.,  $\priority(S)=\sum_{v\in S} \priority(v)$ if $s\not\in S$, and otherwise $\priority(S)=\sum_{v\in V\setminus S} \priority(v)$.
For finding the canonical minimum cut we set $\priority(v)=1$ for all $v\in V$, and hence it is simply the cardinality of the side of the cut that does not contain $s$.%
\footnote{We refer to the cardinality as priority to make terminology consistent throughout the paper, where we later use priorities that are not necessarily $1$.}
We say that a minimum cut $S\subseteq V$ is \emph{lexicographically first} if it minimizes the tuple $(\mintcut_G(S),\LN{S},\priority(S))$ among all minimum cuts with $s\not\in S$, where $\mintcut_G(S)$ is the value of the cut $S$.%
\footnote{A tuple $(a,b,c)$ is considered smaller than a tuple $(a',b',c')$ if $a<a'$, or $a=a'\wedge b<b'$, or $a=a'\wedge b=b'\wedge c<c'$}
Notice that a lexicographically-first cut is always a minimum cut of the graph.
The following claim states that lexicographically-first is a tie-breaker among all minimum cuts.
\begin{claim}
    \label{claim:uniqueness-of-lex-first-cut}
    Let $G=(V,E)$ be a graph with a designated vertex $s\in V$ and an ordering $h:V\setminus\set{s}\to[n-1]$.
    Then, there exists a unique lexicographically-first minimum cut in $G$.
\end{claim}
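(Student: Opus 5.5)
Existence is immediate: the minimum cuts with $s\notin S$ form a finite nonempty set, and the tuples $(\mintcut_G(S),\LN{S},\priority(S))$ take values in a totally ordered set, so a minimizer exists. (Every cut can be represented by the side not containing $s$, since $\LN{\cdot}$ and $\priority(\cdot)$ are defined symmetrically.) The content of the claim is uniqueness. Let $\lambda$ be the minimum cut value. Let $i^*$ be the smallest value of $\LN{S}$ over all minimum cuts $S$ with $s\notin S$, and let $v^*=h^{-1}(i^*)$. Every minimum cut $S$ with $s\notin S$ and $\LN{S}=i^*$ contains $v^*$ and avoids $s$. It is therefore a minimum $(v^*,s)$-cut of value $\lambda$, and its other vertices all satisfy $h(x)\ge i^*$.

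The main step is submodularity. Let $S_1,S_2$ be two minimizers of the tuple, so both are minimum cuts containing $v^*$, avoiding $s$, with $\LN{\cdot}=i^*$ and with equal priority $p$. By submodularity of the cut function,
\[
\mintcut_G(S_1\cap S_2)+\mintcut_G(S_1\cup S_2)\le \mintcut_G(S_1)+\mintcut_G(S_2)=2\lambda .
\]
Both $S_1\cap S_2$ and $S_1\cup S_2$ contain $v^*$ and avoid $s$, so both are nonempty proper subsets and thus have cut value at least $\lambda$. Hence both have cut value exactly $\lambda$. Moreover $S_1\cap S_2$ avoids $s$ and contains $v^*$, and all its vertices have $h$-value at least $i^*$, so $\LN{S_1\cap S_2}=i^*$. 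It is therefore a competitor with the same first two coordinates. Its priority is $\priority(S_1\cap S_2)=\sum_{v\in S_1\cap S_2}\priority(v)$.

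Now suppose $S_1\neq S_2$. Then $S_1\cap S_2$ is a proper subset of at least one of them, say $S_1$. With $\priority(v)=1$ for all $v$, the priority of $S_1\cap S_2$ is strictly smaller than $p$, contradicting the minimality of $S_1$. So $S_1=S_2$. The only delicate point is the positivity of the priorities, which is needed so that strict inclusion gives strictly smaller priority. With unit priorities this holds trivially. For general positive priorities the same argument works, and one could also phrase it as: the unique minimizer is the inclusion-minimal minimum $(v^*,s)$-cut, which is unique because minimum $(v^*,s)$-cuts of value $\lambda$ are closed under intersection.
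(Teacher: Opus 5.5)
Your proof is correct and follows the same route as the paper: fix the first vertex (by $h$) that some minimum cut separates from $s$, then use submodularity to show that the intersection of two distinct minimizers is again a minimum cut with the same lexicographic number and strictly smaller priority. You also spell out two points the paper leaves implicit, namely why $\LN{S_1\cap S_2}$ equals the optimal lexicographic number and why $S_1\cap S_2$ is a nonempty proper subset, so that its cut value is at least the minimum cut value.
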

\begin{proof}
    Let $v\in V\setminus{s}$ be the first vertex, according to $h$, such that there exists some minimum cut separating $v$ from $s$.
    Note that such a vertex always exists since the graph has at least one minimum cut.
    Assume towards contradiction that there exist two distinct minimum cuts $S_1,S_2$ separating $s$ from $v\in S_1,S_2$ and both $S_1,S_2$ have minimum $\priority(S)=|S|$ among all minimum cuts separating $v,s$.
    By submodularity,
    \begin{equation*}
        \mintcut_G(S_1) + \mintcut_G(S_2)
        \ge \mintcut_G(S_1 \cup S_2) + \mintcut_G(S_1 \cap S_2)
        .
    \end{equation*}
    Since $S_1,S_2$ are minimum cuts, we have that both $\mintcut_G(S_1 \cup S_2),\mintcut_G(S_1 \cap S_2) \ge \mintcut_G(S_1)=\mintcut_G(S_2)$, and hence equality must hold, and $S_1\cap S_2$ is a minimum cut separating $v$ from $s$ with $|S_1\cap S_2|<|S_1|=|S_2|$ in contradiction to our assumption.
\end{proof}
We note that finding the minimum cut with minimal total vertex priority for every $v\in V$ is a key component in fast constructions of cactus graphs, succinct representations of all minimum cuts of a graph \cite{KP09,HHS24}.
We observe that one can use these algorithms for cactus graph construction for pseudo-deterministic minimum cut algorithms.
Specifically, using the algorithm of \cite{HHS24} we obtain the following corollary.
\begin{corollary}[\cite{HHS24}]
    There exists a randomized algorithm that given a weighted graph $G=(V,E,w)$ and a vertex $s\in V$, finds for every vertex $v\in V$, the minimum cut separating $v$ from $s$ with minimum total vertex priority (if one exists) in $O(m\log^3 n)$ time with high probability.
\end{corollary}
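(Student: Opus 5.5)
This corollary follows by invoking the cactus construction algorithm of \cite{HHS24} and reading off the required cuts from its output. The cactus can be recovered from the per-vertex minimal cuts, so we expect those cuts to appear as an explicit intermediate object inside the algorithm. Concretely, \cite{HHS24}, following \cite{KP09}, builds the cactus by computing, for a root $s$ and every $v\neq s$, the minimum cut separating $v$ from $s$ whose $v$-side is inclusion-minimal. They compute this only when the $(s,v)$-connectivity equals the global minimum cut value $\lambda$. The plan has three steps. First, check that "inclusion-minimal" and "minimum total vertex priority" pick out the same cut. Second, check that the \cite{HHS24} procedure supplies these cuts for all $v$ simultaneously within $O(m\log^3 n)$ time. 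Third, confirm the representation and failure-probability claims.

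For the first step we reuse the uncrossing argument from the proof of \Cref{claim:uniqueness-of-lex-first-cut}. Take any two minimum cuts $S_1,S_2$ with $v\in S_1\cap S_2$ and $s\notin S_1\cup S_2$. Submodularity together with minimality of $\lambda$ forces $S_1\cap S_2$ to also be a minimum cut. Hence the family of minimum cuts separating $v$ from $s$ is closed under intersection and has a unique inclusion-minimal member $S_v$. Because priorities are positive, $S_v$ also uniquely minimizes $\priority(S)$ among these cuts, for any positive priority function and in particular for $\priority\equiv 1$. If no minimum cut separates $v$ from $s$, we report that no such cut exists. This is detectable because the $(s,v)$-connectivity then exceeds $\lambda$.

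For the second step, the algorithm of \cite{HHS24} computes $\lambda$ together with the sets $S_v$, or equivalently the vertex partitions defining the cactus nodes, via an isolating-cuts style recursion. It runs in $O(m\log^3 n)$ time and succeeds with high probability. Since the cactus is a tree-like structure rooted at the node containing $s$, each $S_v$ is the union of the vertex sets of a rooted subtree of the cactus. We therefore do not write all $S_v$ out explicitly, which could take $\Theta(n^2)$ space. Instead we represent them implicitly, by a pointer from $v$ to the cactus edge or cycle segment defining $S_v$, from which any $S_v$ can be listed in $O(|S_v|)$ time.

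The main obstacle is this second step. We must verify that the algorithm's output is exactly the inclusion-minimal cuts rather than some other minimum cut, and that the running time bound counts producing all of them. If their algorithm only yields the cactus, we instead recover $S_v$ by locating $v$'s cactus node and taking the minimal subtree or cycle arc cut excluding $s$. This costs $O(n)$ additional time on top of $O(m\log^3 n)$. The high-probability guarantee is inherited directly from \cite{HHS24}.
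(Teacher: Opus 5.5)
Your proposal is correct and takes the same route as the paper, which justifies the corollary only by citing the per-vertex minimal-min-cut computation inside the $O(m\log^3 n)$ cactus construction of \cite{HHS24}. Your additions are sound. The uncrossing argument shows that the inclusion-minimal cut $S_v$ is the minimum-priority one for every positive priority function. The fallback reads $S_v$ off the rooted cactus in $O(n)$ extra time. Together they make your argument rely only on the headline cactus theorem of \cite{HHS24}, not on how that algorithm computes the cuts internally.
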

Combining this corollary with the tie-breaking mechanism above yields a pseudo-deterministic minimum cut algorithm with running time $O(m\log^3 n)$.
Observe that this algorithm already improves the time complexity over \cite{AV26} by an $O(\log \log n)$ factor.

\paragraph{Pseudo-Deterministic Minimum Cut} 
Our main technical contribution (\Cref{theorem:sequential-result}) is a new randomized algorithm that finds the lexicographically-first minimum cut in $O(m\log^2 n)$ time.
The algorithm is actually more general, as it allows for general vertex priorities to be used as a third tie-breaker instead of using $\priority(S)=|S|$.
Given vertex priorities $\priority:V\to\N$, denote the weight of a set of vertices $S$ as $\priority(S)=\sum_{v\in S} \priority(v)$.
Using vertex priorities allows us to use our minimum cut algorithm in the fully-dynamic setting, as we explain in the next section.
Note that setting all vertex priorities to $1$ is equivalent to finding the lexicographically-first minimum cut.
\begin{lemma}
    \label{theorem:modified-karger-algorithm}
    There exists a randomized algorithm that given a graph $G=(V,E)$ a designated source vertex $s$, an ordering $h:V\setminus\set{s}\to[n-1]$ and vertex priorities $\priority:V\to \N$, returns a cut $S\subseteq V\setminus\set{s}$ that minimizes the tuple $(\mintcut_G(S),\LN{S},\priority(S))$.
    The algorithm runs in $O(m\log^2 n)$ time and succeeds with high probability.
\end{lemma}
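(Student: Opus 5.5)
We will adapt the tree-packing framework of \cite{GMW20} (building on Karger) so that, instead of returning an arbitrary minimum cut, every stage carries the tie-breaking tuple $(\mintcut_G(S),\LN{S},\priority(S))$ along with the cut value. Recall the structure: in $O(m\log^2 n)$ time one obtains $O(\log n)$ spanning trees such that, with high probability, every minimum cut of $G$ $2$-respects at least one of them. This holds for \emph{every} minimum cut simultaneously: by Karger's argument each minimum cut $2$-respects a constant fraction of a suitable packing, so $O(\log n)$ sampled trees cover any fixed min cut with high probability. A union bound over the at most $\binom{n}{2}$ minimum cuts then covers them all. In particular, the unique lexicographically-first cut guaranteed by \Cref{claim:uniqueness-of-lex-first-cut} (whose proof goes through verbatim with general positive priorities, since $\priority(S_1\cap S_2)<\priority(S_1)$) $2$-respects some sampled tree $T$. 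It therefore suffices, for each tree $T$, to find the tuple-minimizing cut among cuts that $1$- or $2$-respect $T$ in $O(m\log n)$ time, and then take the overall minimum tuple.

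For a fixed tree $T$, root it at $s$. A cut that $1$-respects $T$ is a subtree $T_v$. A cut that $2$-respects $T$ is either $T_u\cup T_v$ for incomparable $u,v$, or $T_u\setminus T_v$ with $v$ a descendant of $u$. In every case the side not containing $s$ is determined, so $\LN{S}$ and $\priority(S)$ are functions of these subtree sets. First, compute in $O(n)$ time by a DFS the quantities $\min_{x\in T_v}h(x)$ and $\priority(T_v)$ for all $v$. Then $\priority(T_u\cup T_v)=\priority(T_u)+\priority(T_v)$ and $\priority(T_u\setminus T_v)=\priority(T_u)-\priority(T_v)$. The value $\LN{\cdot}$ is the minimum of two subtree minima in the union case. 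For the difference case it equals the minimum of $h$ over $T_u\setminus T_v$, which I would handle through an Euler-tour ordering: $T_u\setminus T_v$ is the union of two contiguous intervals, so a range-minimum structure with $O(1)$ query answers it.

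The GMW20 procedure for a fixed tree reduces, via heavy-path decomposition, to $O(\log n)$ levels of ``for each $u$, find the $v$ minimizing a cut value expressed as $a_u+b_v-2w(u,v)$'', solved with dynamic-tree/min-plus structures supporting path additions and path minimum queries. My plan is to replace every scalar key by the lexicographic key $(\text{cut value},\LN{\cdot},\priority(\cdot))$ and check that the operations remain valid. Path additions only touch the cut-value component, and the priority component is additive in the same way, since it is $\pm\priority(T_v)$ plus a term depending on $u$. So each stored key is a triple whose first and third coordinates shift additively and whose second coordinate is fixed per $v$ up to a $\min$ with a $u$-dependent value. Since comparisons of triples are $O(1)$ and lazy additive updates commute with lexicographic minimum, the asymptotic bounds are unchanged.

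The main obstacle is the $\LN{}$ coordinate. It combines with the $u$-dependent part via $\min(\cdot,\cdot)$ rather than addition, and taking a minimum does not commute with the lexicographic order. For a query at $u$ with value $m_u=\min_{T_u}h$ or the analogous term, candidates $v$ with $\LN$-part $\ge m_u$ all collapse to the same second coordinate. I would resolve this by noting that the second coordinate only matters among cut-value ties, where it equals $\min(m_u, c_v)$. I would then store two keys per node: the triple with the $\LN$ slot set to $c_v$, and one with the slot ignored, which in the collapsed case falls back to $(\text{value},\priority)$. A query at $u$ takes the better of the best candidate with $c_v<m_u$ and the best overall candidate evaluated with $\LN=m_u$. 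Splitting by the threshold $c_v<m_u$ may require sorting or processing vertices by $h$, and I expect verifying that this fits within the $O(m\log n)$ per-tree budget (for the descendant case in particular) to be the delicate part. If it does not fit directly, the fallback is to observe that the lexicographically-first cut has $\LN$ equal to the globally first separable vertex $v^*$. One then finds $v^*$ as a first pass using only the cut value and $h$, and restricts the second pass to cuts containing $v^*$ while minimizing $(\text{value},\priority)$, both of which are additive keys.
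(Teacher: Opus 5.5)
Your outer structure matches the paper's: tree packing, rooting at $s$, the three shapes of $2$-respecting cuts, subtree minima and priorities, RMQ for $\LN{u^{\downarrow}\setminus v^{\downarrow}}$, and priority as an additive tie-breaker. But the step that carries the lemma, handling the $\LN{\cdot}$ coordinate inside the $O(m\log n)$ per-tree search, is left open, and neither remedy you sketch works as stated.

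\textbf{The threshold idea.} It assumes the second coordinate is $\min(m_u,c_v)$. That holds only for independent cuts $u^{\downarrow}\cup v^{\downarrow}$. For a descendant cut $u^{\downarrow}\setminus v^{\downarrow}$, every candidate has $\LN{\cdot}\ge \ell_u:=\LN{u^{\downarrow}}$, with equality exactly when $v$ is not on the path from $u$ to $\ell_u$. For $v$ on that path, the lexicographic number is a minimum over the path from $u$ to $v$ (excluding $v$) and the subtrees hanging off it. It is not a function of $v$ alone. So there is no set $\{c_v<m_u\}$ to split off. Moreover, ``evaluate the best overall candidate with lexicographic number $m_u$'' misreports on-path candidates. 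The $(\text{value},\priority)$-best candidate may lie on the path with a worse lexicographic number than an off-path candidate of equal value.

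\textbf{The fallback.} It is circular. Computing the first separable vertex $v^*$ ``using only the cut value and $h$'' is exactly the minimization of $(\mintcut_G(S),\LN{S})$ you could not perform. Given $v^*$, your second pass would be fine.

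\textbf{The missing idea.} The paper does a per-$u$ case split that exploits the fact that the candidate cuts along the path to $\ell_u$ are nested.

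Descendant cuts:
\begin{itemize}
\item Keep a link-cut tree keyed by $(\text{value},-\LN{v^{\downarrow}},\priority(v^{\downarrow}))$. One query $\SUBTREE{T_1}{u}$ reveals whether some minimum descendant cut of $u$ uses an off-path $v$.
\item If so, penalize the path from $\ell_u$ to the root in a second link-cut tree keyed by $(\text{value},\cdot,-\priority(v^{\downarrow}))$ and query it. This returns the optimum, whose lexicographic number is $\ell_u$.
\item If not, the first query returns the lowest path vertex $x$ of minimum value. By nesting, $\ell'=\LN{u^{\downarrow}\setminus x^{\downarrow}}$ (one RMQ) is the best attainable lexicographic number. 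Two path additions then restrict the second query to the path vertices strictly below the branching point of $\ell'$, down to $x$, and the query minimizes priority among them.
\end{itemize}

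Independent cuts need no lexicographic-number tracking at all. Suppose $u^{*\downarrow}\cup v^{*\downarrow}$ is optimal with $\LN{u^{*\downarrow}}<\LN{v^{*\downarrow}}$. The $(\text{value},\priority)$-best partner for the edge above $u^*$ gives a cut of equal value containing $u^{*\downarrow}$. That cut therefore has the same lexicographic number and no larger priority. So the sort-by-$h$ splitting you were worried about is unnecessary there.
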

Our algorithm is based on the $O(m\log^2n)$ time algorithm of \cite{GMW20}.
We modify the algorithm to use our tie-breakers and find the lexicographically-first minimum cut, without increasing the running time of the original algorithm.

\paragraph{Pseudo-deterministic fully-dynamic algorithm.}
Our pseudo-deterministic fully-dynamic minimum cut algorithm (Theorem~\ref{theorem:dynamic-result}) is based on \cite{HKMR25}.
Their main result is a data structure that can be either deterministic or randomized, depending on the known components used in its construction.
It supports edge updates, and upon request constructs a non-trivial minimum cut sparsifier (NMC) of the current graph defined as follows.
\begin{definition}[Non-Trivial Minimum Cut Sparsifier (NMC)]
    \label{definition:nmc}
    A non-trivial minimum cut sparsifier (NMC) of a graph $G=(V,E)$, is a contraction $H=(V_H,E_H)$ of $G$ such that for every $S\subseteq V$ that is a non-trivial minimum cut in $G$, it holds that $\mintcut_H(S)=\mintcut_G(S)$.%
    \footnote{
    We define $\mintcut_H(S)$ as follows.
    For every vertex $u\in V_H$ let $C_u\subseteq V$ be the set of vertices of $G$ that were contracted to form $u$.
    Then, for any cut $S\subseteq V$ in $G$, let $\mintcut_H(S)=\mintcut_G(S)$ if $S$ does not partition any contracted vertex of $H$ and otherwise $\mintcut_H(S)=\infty$.
}
Here, a cut $S\subseteq V$ is called \emph{non-trivial} if $|S|\ne 1,n-1$.
\end{definition}

Our dynamic algorithm first fixes some source vertex $s\in V$ and an ordering $h:V\setminus\set{s}\to[n-1]$ of the other vertices.
It then uses the data structure of \cite{HKMR25} to maintain an NMC of the dynamic input graph $G$, alongside the first vertex in the ordering with minimum degree in $G$.
When queried for a minimum cut, the algorithm constructs the NMC and runs our algorithm of \Cref{theorem:modified-karger-algorithm} on it to find a lexicographically-first minimum cut.
Finally, the algorithm compares the corresponding cut in $G$ to the trivial cut given by the first vertex with minimum degree in $G$, and returns the lexicographically-first among the two.
Notice that since all of our algorithms provide high-probability guarantees, we can use a union bound to show that the algorithm returns the canonical minimum cut with high probability against an adaptive adversary throughout $\poly(n)$ updates and queries.
We argue below that this cut corresponds to the lexicographically-first minimum cut in the original graph.

We note that \cite{HKMR25} also give an adversarially-robust fully-dynamic minimum cut algorithm, but it is slower than ours.
It uses a version of their NMC data structure that is deterministic during edge updates, and leverages randomization only for constructing the NMC.
Since the update procedure is fully deterministic, it is robust to adaptive updates.
Unfortunately, the deterministic components lead update and query time of $n^{o(1)},n^{1+o(1)}$ respectively.
In contrast, using an NMC data structure that is randomized achieves an update time $\polylog(n)$ and query time $\tO(n)$.
This opens the algorithm to adversarial attacks, however we fix this issue using our pseudo-deterministic minimum cut algorithm (applied on the NMC).
The following theorem summarizes the randomized data structure of \cite{HKMR25} and using it we prove Theorem~\ref{theorem:dynamic-result}.
\begin{theorem}[Theorem 1 of \cite{HKMR25}]
    \label{theorem:hkmar-data-structure}
    There is a fully-dynamic algorithm, that given as an input an unweighted graph $G=(V,E)$ on $n$ vertices undergoing edge insertions and deletions, maintains a data structure with $\polylog(n)$ worst-case update time that upon request constructs a non-trivial minimum cut sparsifier (NMC) of $G$ in $\tO(n)$ time with high probability.
    Furthermore, the NMC has $O(n/\delta)$ vertices and $O(n)$ edges, where $\delta$ is the minimum degree of $G$.
\end{theorem}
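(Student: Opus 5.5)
Since this is Theorem~1 of \cite{HKMR25}, in the paper it suffices to invoke it as a black box. Still, here is how I would reconstruct its proof; the plan is \emph{sample, decompose, trim and shave, contract}, carried out lazily at query time.

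\textbf{Maintenance.} The data structure maintains three things under edge updates.
\begin{itemize}
  \item The degree of every vertex in a bucketed structure, so that the minimum degree $\delta$ is always available.
  \item Using hash functions of limited independence, a family of $O(\log n)$ nested random edge samples $G_i$. Level $i$ keeps each edge with probability $2^{-i}$. For the level with $2^i\approx \delta/\Theta(\log n)$, the sample $G_i$ has $\tO(n)$ edges and, by Karger's sampling theorem, approximates all cuts of $G$ to within a constant factor.
  \item For every vertex, a linear $\ell_0$-sampling sketch of its incident edges. These sketches are additive, so the sketch of any vertex set is the sum of its vertices' sketches.
\end{itemize}
Each edge update touches only $O(\log n)$ hash evaluations and sketch entries. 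This gives the $\polylog(n)$ worst-case update time, which is independent of the adversary's choices except through the random seed.

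\textbf{Query.} Upon a query, I would run an expander decomposition on the appropriate sample $G_i$ with conductance $\phi=1/\polylog(n)$; this takes $\tO(n)$ time. I would then apply the Kawarabayashi--Thorup trimming and shaving steps, using the exact degrees in $G$ that the counters store:
\begin{itemize}
  \item \emph{Trimming} repeatedly removes from a cluster $C$ any vertex with fewer than $\tfrac{2}{5}\deg_G(v)$ neighbors inside $C$.
  \item \emph{Shaving} removes any vertex with at most $\deg_G(v)/2+1$ neighbors inside the cluster.
\end{itemize}
The resulting cores are contracted, and every other vertex stays a singleton.

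Correctness is the KT core lemma for simple graphs. If a non-trivial minimum cut $S$ crossed a trimmed core $C$, then the expansion of $C$ in $G$, which is inherited from $G_i$ up to constants, forces $\min(\mathrm{vol}(S\cap C),\mathrm{vol}(C\setminus S))=O(\delta/\phi)$. Trimming then forces each side to contain at least $\Omega(\delta)$ vertices, since $G$ is simple. Finally, shaving ensures that moving a boundary vertex across the cut would strictly decrease the cut value, which is a contradiction. So $\mintcut_H(S)=\mintcut_G(S)$ for every non-trivial minimum cut $S$.

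For the size bounds, the total volume of all clusters is $O(m)$. Each core has volume $\Omega(\delta^2)$, which gives $O(n/\delta)$ cores. The number of inter-cluster edges plus trimmed edges is $O(\phi m\log n)$, and choosing parameters appropriately (with $m\ge n\delta/2$) makes this $O(n)$. With the contracted cores, this gives $O(n/\delta)$ vertices and $O(n)$ edges.

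\textbf{Main obstacle.} The step I expect to be hardest is outputting the edges of $H$ from $G$ itself, not from the sample, in $\tO(n)$ time without scanning all $m$ edges. My first attempt would be to sum the $\ell_0$ sketches over each core and repeatedly extract and delete crossing edges. This costs $\polylog(n)$ per recovered edge, so $\tO(n)$ overall by the $O(n)$ edge bound. The singleton vertices left outside the cores have total degree $O(n)$ after trimming, so their adjacency lists can be read directly.

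A second subtlety is making the constant-factor cut approximation of $G_i$ hold simultaneously for the sample used at every query. I would get this by a union bound over $\poly(n)$ queries, using the high-probability guarantee of Karger's sampling theorem.
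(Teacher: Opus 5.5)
You are right that the paper gives no proof of this statement. It is quoted as Theorem~1 of \cite{HKMR25} and used only as a black box in the proof of \Cref{theorem:dynamic-result}, so citing it is exactly the paper's treatment. As a proof, however, your reconstruction has genuine gaps.

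\textbf{Missing bound on $m$.} Nothing in your construction bounds $m$ in terms of $n\delta$, and this breaks several claims.
\begin{itemize}
\item \emph{Sample size and query time.} Your sampling rate is tied to the minimum degree, so $G_i$ has about $m\log n/\delta$ edges. This is $\tO(n)$ only if $m=\tO(n\delta)$. For $K_{n-1}$ plus one pendant vertex, $\delta=1$, so $G_i=G$ has $\Theta(n^2)$ edges. The expander decomposition alone then exceeds the $\tO(n)$ query budget.
\item \emph{Edge bound.} With $\phi=1/\polylog(n)$ there can be $m/\polylog(n)$ inter-cluster edges. The inequality $m\ge n\delta/2$ points the wrong way for turning $O(\phi m\log n)$ into $O(n)$.
\item \emph{Vertex bound.} ``Volume $\Omega(\delta^2)$ per core'' gives only $O(m/\delta^2)$ cores. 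You should count vertices instead: a trimmed core in a simple graph has more than $2\delta/5$ vertices. The trimmed and unclustered singletons must also number $O(n/\delta)$, but you never bound them.
\end{itemize}
So neither the $O(n/\delta)$-vertex bound, the $O(n)$-edge bound, nor the $\tO(n)$ query time follows from your sketch.

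\textbf{Two further steps are wrong as stated.}

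First, Karger's sampling theorem needs rate $\Theta(\log n/(\varepsilon^2\lambda))$, not $\Theta(\log n/\delta)$. Non-trivial minimum cuts are exactly where $\lambda$ can be far below $\delta$. For two copies of $K_{n/2}$ joined by one edge, the bridge survives in $G_i$ with probability only $O(\log n/n)$. Transferring expansion from $G_i$ to $G$ therefore needs a separate union-bound argument over vertex sets, using $\mathrm{vol}(X)\ge|X|\delta$. That argument controls conductance only additively.

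Second, the core lemma is garbled. Trimming does not give $\Omega(\delta)$ vertices on each side of $S\cap C$; that is a property of $S$ itself. The correct chain runs as follows.
\begin{itemize}
\item The expansion bound gives the smaller side $X$ at most $1/\phi\ll\delta$ vertices. This needs $\delta\gg\polylog(n)$, so small $\delta$ must be handled separately.
\item Trimming then makes each $v\in X$ send roughly $2\delta/5$ edges across $S$. Together with $\lambda\le\delta$, this forces $|X|\le 2$.
\item Local optimality of $S$ then leaves each $v\in X$ with at most $\deg(v)/2+1$ neighbours in $C$.
\item These are precisely the vertices that shaving removes.
\end{itemize}
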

\begin{proof}[Proof of \Cref{theorem:dynamic-result}]
    Choose deterministically an ordering of the vertices $h:V\to[n]$ and a source vertex $s\in V$, e.g. $h^{-1}(n)$.
    Then, maintain the data structure of \Cref{theorem:hkmar-data-structure} on the dynamic graph $G$, and also maintain a heap storing the degrees of all vertices in $G$, using the ordering $h$ to break ties.
    Notice that by \Cref{theorem:hkmar-data-structure}, the worst-case update time for both data structures is $\polylog(n)$.
    When queried for a minimum cut, construct an NMC $H=(V_H,E_H)$ of $G$ using the data structure of \Cref{theorem:hkmar-data-structure}, and assign each vertex in $H$ a priority equal to the number of vertices of $G$ that were contracted into it, i.e., if $u\in V_H$ corresponds to $C_u\subseteq V$ then set $\priority(u)=|C_u|$.
    In addition, set the lexicographic number of every $u\in V_H$ to be the minimum lexicographic number of a vertex in $C_u$, i.e., $h(u)=\min_{x\in C_u} h(x)$.
    Finally, set the source vertex of $H$ to be the contracted vertex containing $s$.
    To find the lexicographically-first minimum cut of $H$, run \Cref{theorem:modified-karger-algorithm} on it and let $S^*\subseteq V_H$ be the cut returned by the algorithm.
    Furthermore, let $v$ be the vertex with minimum degree in $G$ (using $h$ to break ties).
    Then, if the cut $\set{v}$ is lexicographically smaller than $S^*$ return $\set{v}$ and otherwise return $\bigcup_{u\in S^*} C_u$.

    We now argue that the algorithm returns the lexicographically-first minimum cut in $G$.
    Notice that for every cut $S\subseteq V_H$ we have $(\mintcut_H(S),\LN{S},\priority(S))=(\mintcut_G(S'),\LN{S'},|S'|)$, where $S'=\bigcup_{u\in S} C_u$.
    In particular this holds for every non-trivial minimum cut $T \subseteq V$ since they are all preserved by the definition of an NMC.
    Therefore, if the lexicographically-first cut in $G$ is non-trivial, then it will be found in $H$.
    Otherwise, if the lexicographically-first cut in $G$ is trivial, i.e., a single vertex $v$, then it must be that $\mintcut_G(v)=\delta$ and $\LN{v} \le \LN{S'}$ for every non-trivial minimum cut $S'$ in $G$.
    Therefore, the algorithm returns the lexicographically-first minimum cut in $G$.

    To conclude, we analyze the query time of the algorithm.
    Constructing the NMC takes $\tO(n)$ time by \Cref{theorem:hkmar-data-structure}.
    Setting the vertex priorities and lexicographic numbers in $H$ can be achieved in $O(n)$ time by iterating over the vertices of $G$.
    Finally, the NMC has $O(n)$ edges by \Cref{theorem:hkmar-data-structure}, and hence running \Cref{theorem:modified-karger-algorithm} on $H$ takes $\tO(n)$ time.
    Therefore, the total query time is $\tO(n)$.
\end{proof}

\paragraph{Additional models of computation.} 
Our algorithms for the dynamic streaming and cut-query models leverage our tie-breaking mechanism (\Cref{claim:uniqueness-of-lex-first-cut}) to find the lexicographically-first minimum cut.
We note that both algorithms are limited to unweighted graphs, unlike \cite{AV26} which works for weighted graphs as well. 
However, our approach achieves better complexities. 
The improvements stem from two factors: (i) our tie-breaking mechanism can be implemented using $O(1)$ minimum cut instances instead of $O(\log n)$ repetitions as in \cite{AV26}, and (ii) the black-box reduction of \cite{AV26} invokes weighted minimum cut algorithms, which are more expensive than their unweighted counterparts.

Our algorithms for both models are based on existing randomized algorithms for finding minimum cuts in these models \cite{AD21,AEGLMN22}.
Both algorithms use a weaker version of NMC that preserves each minimum cut with constant probability, rather than preserving all of them with high probability as in the NMC of \Cref{theorem:hkmar-data-structure}.
To obtain a pseudo-deterministic algorithm, construct $k=O(1)$ such sparsifiers independently and assign vertex priorities and lexicographic numbers in the same way as in \Cref{theorem:dynamic-result}.
Then, find in each sparsifier, a lexicographically-first minimum cut using our tie-breaking mechanism. %
\footnote{There may be multiple lexicographically-first cuts since our algorithm assigns general vertex priorities.}
Also, find the lexicographically-first trivial cut, i.e., the vertex with minimum degree in the original graph, using the ordering $h$ to break ties.
Finally, return the lexicographically-first cut among all cuts found.
Notice that if the lexicographically-first cut is non-trivial, it is preserved with constant probability by each sparsifier, and hence at least one of the sparsifiers will preserve it with probability at least $2/3$ when $k$ is sufficiently large.
Otherwise, if the lexicographically-first cut is trivial, it will be found by the algorithm with probability $1$.
The full proof of \Cref{theorem:streaming-result,theorem:cut-query-result} appears in \Cref{section:additional-models}.

%% file: preliminaries.tex
Consider a tree $T=(V,E)$, rooted at a vertex $r\in V$. 
We use the following notation for subtrees.
\begin{definition}
    $v^{\downarrow}$ is the set of vertices that are descendants of $v$ in the rooted tree (including $v$).
\end{definition}

\subsection{Subtree Minimum Query}
In this section we present a data structure that, given a tree and associated vertex values, efficiently returns the minimum value found in a given subtree;
it is based on the standard range minimum query data structure, see e.g. \cite{BF00}.
\begin{definition}[Range Minimum Query (RMQ)]
    Given an array $A$ with $n$ values, a \emph{range minimum query} takes as input two indices $1\le i \le j \le n$ and returns the minimum value in the range $A[i],A[i+1],\ldots,A[j]$;
    \begin{equation*}
        RMQ(i,j)
        =
        \min_{k\in\set{i,i+1,\ldots,j}} A[k]
        .
    \end{equation*}
\end{definition}
\begin{claim}[Folklore]
    There exists a data structure that can be constructed in $O(n)$ time, and answers an RMQ query in $O(1)$ time.
\end{claim}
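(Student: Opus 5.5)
The plan is the standard reduction of RMQ to an Euler-tour-style array combined with a sparse table, with the $O(\log n)$ preprocessing overhead removed by blocking (Bender and Farach-Colton \cite{BF00}).

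\emph{Sparse table.} First I build a sparse table storing $M[i][k]=\min_{i\le t<i+2^k}A[t]$ for all $i$ and all $k\le\log n$. This takes $O(n\log n)$ time via the recurrence $M[i][k]=\min(M[i][k-1],M[i+2^{k-1}][k-1])$. A query on $[i,j]$ is then answered in $O(1)$ time. Set $k=\lfloor\log(j-i+1)\rfloor$, which is read from a precomputed table. Return $\min(M[i][k],M[j-2^k+1][k])$, using the fact that the two windows overlap and together cover $[i,j]$.

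\emph{Blocking.} To get $O(n)$ preprocessing, I partition $A$ into blocks of size $b=\tfrac12\log n$. I build the sparse table only on the array of $n/b$ block minima, which costs $O((n/b)\log n)=O(n)$. I also store prefix and suffix minima inside each block, which costs $O(n)$. A query spanning several blocks is answered by combining three values in $O(1)$:
\begin{itemize}
\item the suffix minimum of the first block,
\item the sparse-table query over the full blocks in between,
\item the prefix minimum of the last block.
\end{itemize}

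\emph{In-block queries (main obstacle).} The remaining case is a query contained in a single block. This is the main obstacle. The answer depends only on the relative order of the entries in the block, which is captured by the block's Cartesian tree. I encode that tree by a bit string of length $2b$, produced by the stack-based linear-time construction that records push and pop operations. There are at most $4^{b}=\sqrt n$ distinct codes. For each code I precompute all $O(b^2)$ in-block answers, stored as positions, in total $O(\sqrt n\log^2 n)=o(n)$ time. Each block stores its code, so an in-block query is a single table lookup.

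\emph{Correctness.} The argument rests on one fact: two arrays with the same Cartesian tree have the argmin of every range at the same position. Therefore storing positions rather than values is valid. This argument works in the word-RAM model with $\Theta(\log n)$-bit words, which the folklore claim implicitly assumes.
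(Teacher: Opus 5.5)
\textbf{Gap: the block-size arithmetic breaks the $O(n)$ preprocessing bound.} Your scheme is the standard Fischer--Heun variant of the structure the paper cites from \cite{BF00}: a sparse table over block minima, prefix/suffix minima inside blocks, and Cartesian-tree codes for in-block queries. The paper gives no proof of its own, so this route is fine. However, with your choice $b=\tfrac12\log n$ you get $4^b=2^{2b}=2^{\log n}=n$, not $\sqrt n$.

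\begin{itemize}
\item Codes are then $\log n$ bits long. A table over all $4^b=n$ codes, with $\Theta(b^2)$ entries each, costs $\Theta(n\log^2 n)$.
\item Restricting to valid codes, or only to codes that actually occur, does not rescue it. There are $C_b=\Theta(4^b/b^{3/2})=\Theta(n/\log^{3/2}n)$ Cartesian-tree shapes, which is fewer than the $n/b$ blocks. So an adversarial array can realize every shape, and the tables still cost $\Theta(n\sqrt{\log n})$.
\end{itemize}

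\textbf{Where $\tfrac12\log n$ comes from, and the fix.} That value is the block size in \cite{BF00}, and it is correct there for a different reason. There, blocking is applied to the $\pm1$ Euler-tour array of the Cartesian tree. A block is determined, up to an additive shift, by its $b-1$ difference bits, so there are only $2^{b-1}=O(\sqrt n)$ block types. Your opening sentence mentions this Euler tour, but your argument never uses it. Your per-block code has $4^b$ possible values rather than $2^b$.

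The fix is immediate: take $b=\tfrac14\log n$ (any $b=\varepsilon\log n$ with $\varepsilon<\tfrac12$ works). Then:
\begin{itemize}
\item $4^b=\sqrt n$, and each code has $\tfrac12\log n$ bits, so it fits in a word.
\item The in-block tables cost $O(\sqrt n\log^2 n)=o(n)$.
\item Since $b=\Theta(\log n)$, the sparse table over the $n/b$ block minima still costs $O((n/b)\log n)=O(n)$.
\end{itemize}
The rest of your argument goes through unchanged: the overlapping-window query, the three-part decomposition of multi-block queries, and the invariance of argmin positions for blocks with equal Cartesian trees (given a consistent tie-breaking rule in the stack construction).
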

\begin{corollary}
    \label{corollary:subtree-minimum-query}
    Let $T=(V,E)$ be a spanning tree rooted at some vertex $r\in V$, where each vertex $u\in V$ is associated with some value $a(u)$.
    One can construct in $O(n)$ time a data structure supporting two queries taking $O(1)$ time each:
    (i) Given a vertex $v\in V$ return $\min_{x\in v^{\downarrow}} a(x)$.
    (ii) Given two vertices $v,w\in V$ where $w$ is a descendant of $v$, return $\min_{x\in v^{\downarrow}\setminus w^{\downarrow}} a(x)$.
\end{corollary}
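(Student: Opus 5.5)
The statement to prove is \Cref{corollary:subtree-minimum-query}. The plan is to reduce both queries to range minimum queries over an Euler-tour (DFS preorder) layout of $T$.

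\textbf{Preprocessing.} Run a depth-first search from $r$ in $O(n)$ time. Assign each vertex $u$ its preorder index $\mathrm{in}(u)\in[n]$, and record $\mathrm{out}(u)=\mathrm{in}(u)+|u^{\downarrow}|-1$. Subtree sizes are computed bottom-up during the same traversal.

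The standard fact about preorder is that the descendants of $u$ occupy exactly the contiguous index interval $[\mathrm{in}(u),\mathrm{out}(u)]$. This holds because the DFS visits all of $u^{\downarrow}$ consecutively after entering $u$ and before returning to its parent.

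Build the array $A$ with $A[\mathrm{in}(u)]=a(u)$. Then construct the folklore RMQ structure on $A$ in $O(n)$ time. Total preprocessing is $O(n)$.

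\textbf{Query (i).} Return $RMQ(\mathrm{in}(v),\mathrm{out}(v))$. By the contiguity fact this equals $\min_{x\in v^{\downarrow}}a(x)$, and it takes $O(1)$ time.

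\textbf{Query (ii).} Since $w$ is a descendant of $v$, the interval of $w$ is nested inside that of $v$:
\[
[\mathrm{in}(w),\mathrm{out}(w)]\subseteq[\mathrm{in}(v),\mathrm{out}(v)].
\]
Hence $v^{\downarrow}\setminus w^{\downarrow}$ corresponds exactly to the union of two ranges, $[\mathrm{in}(v),\mathrm{in}(w)-1]$ and $[\mathrm{out}(w)+1,\mathrm{out}(v)]$. Either range may be empty, and the first is empty precisely when $v=w$.

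Answer with the minimum of the two RMQs, treating an empty range as $+\infty$. This uses two $O(1)$ queries.

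\textbf{Main point to verify.} The only nontrivial step is the contiguity and nesting property of preorder intervals. I would prove it by induction on the DFS. When the DFS enters $u$, the counter equals $\mathrm{in}(u)$. It then recursively numbers each child's subtree in consecutive blocks, so on leaving $u$ the counter has advanced by exactly $|u^{\downarrow}|$. Because nothing outside $u^{\downarrow}$ is numbered in between, the block is contiguous.

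Nesting for descendants follows by applying this to every vertex on the path from $v$ down to $w$.
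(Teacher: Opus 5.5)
Your proposal is correct and follows the paper's proof: both lay out $T$ in DFS preorder, build the RMQ structure on the resulting array, answer (i) with a single range query over the contiguous interval of $v^{\downarrow}$, and answer (ii) with the minimum of the two ranges flanking the interval of $w^{\downarrow}$ inside that of $v^{\downarrow}$. Your explicit handling of empty ranges (treated as $+\infty$, e.g.\ when $v=w$) and your inductive justification of contiguity are small details the paper leaves implicit.
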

\begin{proof}
    Begin by initializing an empty array $A$ of size $n$ and an index pointer $i=1$.
    Perform a DFS traversal of the tree and whenever encountering a vertex for the first time, insert its value into $A[i]$ and increment $i$.
    Then, construct an RMQ data structure for $A$.
    Furthermore, let $v_1,v_2$ be the values of $i$ when first and last encountering $v$ during the DFS, respectively.
    Observe that $A[v_1], \ldots, A[v_2-1]$ correspond exactly to the values stored in the subtree rooted at $v$ and hence $\min_{x\in v^{\downarrow}} a(x)=RMQ(v_1,v_2-1)$.
    For the second query type, return the minimum of $RMQ(v_1,w_1-1)$ and $RMQ(w_2,v_2-1)$, which correspond to the values in $v^{\downarrow}\setminus w^{\downarrow}$.
    To conclude, notice that performing the DFS and constructing the RMQ data structure takes $O(n)$ time, and each query is answered using $O(1)$ RMQ queries.
\end{proof}

\subsection{Link-Cut Tree}
Our algorithm uses the link-cut tree data structure \cite{ST83}, which is a rooted tree with dynamic edge costs that supports various operations in amortized logarithmic time.
We assume that the costs of the tree are on the vertices, by associating each edge's with its lower vertex; this mapping is bijective since the tree is rooted.
Specifically, we use the variant presented in \cite{GMW20}.
\begin{lemma}[Link-Cut Tree]
    A \emph{link-cut tree} is a data structure that maintains a rooted tree $T$ on a fixed vertex set $V$ with vertex costs, and supports the following operations in amortized $O(\log |V|)$ time.
    \begin{enumerate}
        \item $T.ADD(u,\Delta)$: adds $\Delta$ to the cost of every vertex from $u$ to the root (including $u$).
        \item $T.SUBTREE(u)$: returns the minimum cost vertex (and its cost tuple) in the subtree rooted at $u$.
    \end{enumerate}
\end{lemma}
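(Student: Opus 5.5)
Since the tree $T$ is static (no links or cuts are ever performed), I would not build a new dynamic-tree structure from scratch. Instead I would reduce the lemma to a known dynamic-tree structure that supports \emph{path updates} together with \emph{subtree aggregates}, and check two things: that the aggregate we need (a minimum) is compatible with the update we need (adding a constant along a root path), and that everything works when costs are tuples.

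The first observation I would make is that the minimum commutes with adding a constant. Costs are tuples $(\mintcut,\LN{\cdot},\priority(\cdot))$ compared lexicographically, and $T.ADD(u,\Delta)$ only shifts the first coordinate. So for any vertex set $X$, adding $\Delta$ to every cost in $X$ adds $\Delta$ to the first coordinate of $\min_{x\in X}$ and leaves the identity of the minimizer unchanged. This means lazy ``add'' tags can be stored at internal nodes of any balanced or splay-based representation: a tag shifts the stored minimum of its part and is pushed down when that part is restructured. The minimizing vertex is stored alongside the minimum, so $SUBTREE$ can return it.

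Second, I would implement the structure as a Sleator--Tarjan link-cut tree \cite{ST83} extended with subtree aggregates, which is also what I take to be the variant of \cite{GMW20}. Each splay node stores three things: its own cost, a lazy add tag for its splay subtree (its preferred-path segment), and the minimum over the whole represented subtree. That last minimum also includes the \emph{virtual} children, meaning the trees hanging off the path through non-preferred edges. The two operations then work as follows.
\begin{enumerate}
\item $ADD(u,\Delta)$: run $access(u)$, so that the root-to-$u$ path is exactly the splay tree at the root, and put a tag $\Delta$ at that splay root. Virtual subtrees hang off this path but are not on it, so they correctly receive no update. Their contribution to the aggregate is kept separately from the path part, so the tag applies only to the path part.
\item $SUBTREE(u)$: run $access(u)$. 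After this, $u$ has no preferred child, so $u^{\downarrow}$ is exactly $u$ together with its virtual children's subtrees. We return the minimum of $u$'s own cost and the virtual-children minimum stored at $u$.
\end{enumerate}
Correctness then reduces to the invariant that each stored aggregate equals the true minimum over the represented vertices after applying the pending tags above it. This invariant is preserved by rotations and by preferred-child switches.

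The step I expect to be the main obstacle is maintaining each node's virtual-children minimum within $O(\log n)$ amortized time. If the virtual children of a node are kept in a heap, every preferred-child switch during $access$ costs $O(\log n)$. There are $O(\log n)$ amortized switches per access, which would give $O(\log^2 n)$. My first attempt would be to exploit the static tree: with a heavy-light decomposition fixed in advance, I would keep the virtual children of a node in a structure whose cost is charged against the same potential argument, in the style of the ``rake'' trees of top trees. If that becomes delicate, I would instead invoke top trees (Alstrup--Holm--de Lichtenberg--Thorup) directly. They support path add and subtree minimum over any associative aggregate with lazy additive tags in $O(\log n)$ time per operation, which gives the lemma immediately by the first observation.
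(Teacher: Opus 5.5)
Your proposal is correct, but it takes a different route from the paper. The paper gives no construction at all: it cites the link-cut trees of \cite{ST83} in the variant of \cite{GMW20} and asserts that tuple costs come for free. You instead rebuild the virtual-subtree augmentation of link-cut trees, and you correctly isolate the one nontrivial point that the citation hides. Since $\min$ is not invertible, the virtual-children aggregate cannot be maintained by add/subtract as it can for sums, and a heap per node gives only $O(\log^2 n)$ amortized. Your $O(\log n)$ bound therefore rests on the fallback to top trees. There, each cluster keeps the minimum over its cluster path, which carries the lazy tag, separately from the minimum over its raked-in vertices; a subtree query cuts the parent edge of $u$, reads the root cluster, and relinks. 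Equivalently, you can use the self-adjusting top trees of Tarjan and Werneck, where the virtual children are themselves kept in splayed rake trees analyzed under the same potential. Either way your heavy--light ``first attempt'' is unnecessary and can be dropped. Citing top trees is legitimate, so your argument is as rigorous as the paper's, and it also makes explicit why plain \cite{ST83} trees do not suffice; the paper's version is shorter and tied to the exact structure analyzed in \cite{GMW20}. One correction to your first observation: $T.ADD$ does not only shift the first coordinate. The descendant-case procedure calls it with $\Delta=(0,\pm 1,0)$, and the stored tuples are $(a_1,-\LN{v^{\downarrow}},\priority(v^{\downarrow}))$ and $(b_1,b_2,-\priority(v^{\downarrow}))$ rather than $(\mintcut,\LN{\cdot},\priority(\cdot))$. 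The argument is unaffected, because lexicographic order on integer tuples is invariant under componentwise translation by any fixed tuple. Lazy tags therefore commute with the minimum and leave the minimizer unchanged, which is precisely the justification the paper omits for its remark on tuple costs.
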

Note that our algorithms use lexicographically ordered tuple costs $(x_1,\ldots,x_k)$, and that link-cut trees can support tuple costs without increasing time complexity.

%% file: minimum-cut-alg.tex
In this section we prove \Cref{theorem:modified-karger-algorithm}.
We begin by giving an overview of the algorithm of \cite{GMW20} (which follows closely the algorithm of \cite{Karger00}), and then proceed to explain how we modify it to find the lexicographically-first minimum cut.
We say that a cut \emph{$k$-respects} a spanning tree $T$, if the intersection between the edges of $T$ and the those of the cut has cardinality at most $k$.
The algorithm of \cite{GMW20} is composed of two main parts.
First, it finds a packing of $O(\log n)$ trees such that with high probability every minimum cut $2$-respects at least one tree in the packing.
\begin{theorem}[Theorem 3 of \cite{GMW20}]
    \label{theorem:tree-packing-algorithm}
    There exists a randomized algorithm that, given a weighted undirected graph $G$, in $O(m\log^2 n)$ time constructs a set of $O(\log n)$ spanning trees $\mathcal{T}$ such that with high probability every minimum cut $2$-respects at least one tree in $\mathcal{T}$.
\end{theorem}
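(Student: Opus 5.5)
The plan follows Karger's tree-packing framework \cite{Karger00}, with the running time reduced as in \cite{GMW20}. It has three stages: sparsify $G$ into a skeleton whose minimum cut is $\Theta(\log n)$; compute an approximately maximum tree packing in the skeleton; and sample $O(\log n)$ trees from that packing. The correctness argument is a counting argument based on Nash-Williams' theorem. The time bound needs the sparsification and packing steps to run in $O(m\log^2 n)$ total.

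\textbf{Step 1 (approximate value and skeleton).} First I compute a constant-factor approximation $\tilde\lambda$ of the minimum cut value $\lambda$. Matula-style or sampling-based estimators do this in near-linear time. Next I build a skeleton $H$ by sampling each edge of weight $w$ as a binomial number of unit edges with rate $p=\Theta(\log n/(\varepsilon^2\tilde\lambda))$. By Karger's cut-counting bound and a Chernoff bound, with high probability every cut of $G$ is scaled by $p$ up to a $(1\pm\varepsilon)$ factor. In particular, every minimum cut of $G$ becomes a $(1+O(\varepsilon))$-approximate minimum cut of $H$, and $\lambda_H=\Theta(\log n)$. To keep $H$ of size $O(m)$ in the weighted case, I would cap the sampled multiplicities.

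\textbf{Step 2 (packing).} Next I compute a fractional tree packing of $H$ of value at least $(1-\varepsilon)\tau_H$, where $\tau_H$ is the maximum packing value; by Nash-Williams, $\tau_H\ge\lambda_H/2$. I would use the multiplicative-weights or greedy packing method of Plotkin--Shmoys--Tardos, as in Thorup--Karger. Since $\lambda_H=O(\log n)$, it needs $O(\log n\cdot\varepsilon^{-2})$ rounds. Each round is a minimum spanning tree computation with respect to the current edge loads.

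\textbf{Step 3 (counting and sampling).} Fix a minimum cut $C$ of $G$. Every tree crosses $C$ at least once, and the average number of tree edges crossing $C$, weighted by the packing, is at most $\mathrm{cut}_H(C)/((1-\varepsilon)\tau_H)\le 2(1+O(\varepsilon))$. Let $x$ be the packing weight of trees crossing $C$ at least $3$ times. Then $(1-x)+3x\le 2+O(\varepsilon)$, so $x\le 1/2+O(\varepsilon)$. Hence a tree drawn from the packing distribution $2$-respects $C$ with probability at least a constant $c>0$. I then draw $O(\log n)$ trees independently, so $C$ fails to $2$-respect all of them with probability $n^{-\Omega(1)}$. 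Since there are at most $\binom n2$ minimum cuts, a union bound gives that every minimum cut $2$-respects some sampled tree with high probability.

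\textbf{Main obstacle.} The hard part is the running time of Step 2. A naive implementation uses $O(\log n)$ minimum spanning tree computations on an $O(m)$-edge skeleton. With sorting this costs $O(m\log^2 n)$, which is on budget only if each MST is computed in $O(m\log n)$ time with no extra factor from $\varepsilon$. I would fix $\varepsilon$ to a small constant (for example $1/10$), which keeps $c$ bounded away from $0$. I would also bound the skeleton size by $O(m)$ edges, or by $O(n\log n)$ after a preliminary sparsification. With these choices the total time is $O(m\log^2 n)$.
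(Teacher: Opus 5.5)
The paper does not prove this statement. It imports it as a black box from \cite{GMW20}, which in turn rests on Karger's tree-packing step \cite{Karger00}: sample a skeleton with minimum cut $\Theta(\log n)$, find a near-maximum tree packing of it, and apply the counting argument. Your outline follows that framework, and the correctness part is fine. The bound $\sum_T x_T|T\cap C|\le \mathrm{cut}_H(C)$ holds, so a tree drawn from the packing $2$-respects $C$ with constant probability, and the union bound over at most $\binom{n}{2}$ minimum cuts goes through.

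\textbf{The gap is in Step 2, where you yourself locate the difficulty.} Your round count is not what multiplicative-weights/PST or the Thorup--Karger greedy packing give.
\begin{itemize}
\item The width of this packing problem is $\tau_H/\min_e c_e\approx\lambda_H/2$.
\item The regret bound contributes a $\log m$ factor.
\item So the standard guarantees for a $(1-\varepsilon)$-approximate packing need $O(\lambda_H\log m/\varepsilon^2)$ MST computations. With $\lambda_H=\Theta(\log n)$ this is $O(\log^2 n)$ rounds even for constant $\varepsilon$.
\end{itemize}
You give no argument that drops the $\log m$ factor to reach $O(\log n/\varepsilon^2)$.

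Your ``naive implementation'' with sorting-based MSTs therefore costs $O(\log^2 n)\cdot O(m\log n)=O(m\log^3 n)$. Fixing $\varepsilon$ to a constant or bounding the skeleton by $O(m)$ edges does not recover the lost logarithm. Sparsifying the skeleton to $O(n\log n)$ edges gives $O(n\log^4 n)$, which is not $O(m\log^2 n)$ for sparse inputs.

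To stay within budget, each round must cost $O(m)$. For example, use the randomized linear-time MST algorithm of Karger, Klein and Tarjan. Its linear bound holds with probability $1-e^{-\Omega(m)}$, so it survives a union bound over the rounds, and the packing then costs $O(m\log^2 n)$. The multiplicative-weights updates touch only $n-1$ edges per round and are dominated by the MST cost.

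Alternatively, follow Karger: compute an integral packing of $\lambda_H/2$ edge-disjoint spanning trees with Gabow's algorithm. Then the $O(\log n)$ trees are the packing itself and no tree sampling is needed, but you would have to verify that this route meets the stated bound.

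Finally, ``near-linear'' for the constant-factor approximation of $\lambda$ in a weighted graph is not precise enough here. That step also needs an explicit $O(m\log^2 n)$ bound.
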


After obtaining the tree packing, the algorithm uses an efficient dynamic program to find the minimum cut that $2$-respects every tree in the set, and finally it returns the minimum cut found throughout its execution.
Our main technical lemma is a modified algorithm for this second part, which finds a lexicographically-first cut that $2$-respects a given tree, rather than just any minimum cut.
This lemma immediately implies \Cref{theorem:modified-karger-algorithm}.
\begin{lemma}
    \label{lemma:modified-minimum-cut-2-respecting-algorithm}
    There exists an algorithm that, given a spanning tree $T$ of a graph $G$, a source vertex $s$, and an ordering $h:V\setminus\set{s}\to [n-1]$, finds a lexicographically-first minimum cut that $2$-respects $T$ in time $O(m\log n)$.
\end{lemma}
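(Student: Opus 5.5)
We follow the structure of the $O(m\log n)$ two-respecting cut algorithm of \cite{GMW20}, but replace every scalar cut value by the tuple $(\mintcut_G(S),\LN{S},\priority(S))$ and compare tuples lexicographically. Root $T$ at the source $s$. Every cut $S$ with $s\notin S$ that $1$- or $2$-respects $T$ is then determined by one or two tree edges, identified with their lower endpoints $v,w$. There are three shapes: $S=v^{\downarrow}$ if the cut $1$-respects $T$; $S=v^{\downarrow}\cup w^{\downarrow}$ if $v,w$ are incomparable; and $S=v^{\downarrow}\setminus w^{\downarrow}$ if $w$ is a descendant of $v$. Because $s$ is the root, $s\notin S$ in all three cases, so $\LN{S}$ and $\priority(S)$ are computed on these sets directly.

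The two secondary coordinates decompose exactly like the cut value.
\begin{itemize}
  \item For $\priority$, precompute subtree sums $\priority(v^{\downarrow})$ in $O(n)$ time. Then $\priority(v^{\downarrow}\cup w^{\downarrow})=\priority(v^{\downarrow})+\priority(w^{\downarrow})$ and $\priority(v^{\downarrow}\setminus w^{\downarrow})=\priority(v^{\downarrow})-\priority(w^{\downarrow})$.
  \item For $\LN{\cdot}$, precompute $\min_{x\in v^{\downarrow}} h(x)$ for all $v$. In the incomparable case $\LN{S}$ is the minimum of the two subtree minima. In the nested case $\LN{S}=\min_{x\in v^{\downarrow}\setminus w^{\downarrow}} h(x)$, which \Cref{corollary:subtree-minimum-query}(ii) answers in $O(1)$ time after $O(n)$ preprocessing.
\end{itemize}
The $1$-respecting cuts are handled by computing all $\mintcut_G(v^{\downarrow})$ in $O(m)$ time (the standard LCA/subtree-sum trick), forming their tuples, and taking the minimum.

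For $2$-respecting cuts we reuse the GMW20 procedure. It decomposes $T$ into heavy paths and, for each vertex $v$, needs $\min_w \mintcut(v,w)$, where $\mintcut(v^{\downarrow}\cup w^{\downarrow})=\mintcut(v^{\downarrow})+\mintcut(w^{\downarrow})-2\,w(v^{\downarrow},w^{\downarrow})$ (with the analogous formula for the nested case). It keeps a link-cut tree whose cost at $w$ is $\mintcut(w^{\downarrow})$, or the relevant variant. Scanning $v$ in a suitable order, it calls $\ADD{T}{u,-2c}$ for each edge $(x,u)$ of weight $c$ with $x\in v^{\downarrow}$, queries $\SUBTREE{T}{\cdot}$ or the path minimum over the relevant region, and then undoes the updates. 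Our modification stores at each $w$ a tuple cost. Its first coordinate is the current cut contribution. Its second is the $h$-minimum of $w^{\downarrow}$, which is static. Its third is $\priority(w^{\downarrow})$ (or $-\priority(w^{\downarrow})$ in the nested case), also static. $ADD$ touches only the first coordinate, which the preliminaries note link-cut trees support without extra cost. After a query returns a candidate $w$, we assemble the full tuple for $S$ from $v$ and $w$ in $O(1)$ time using the precomputed data and keep a running lexicographic minimum. The total work is the same $O(m\log n)$ as in \cite{GMW20}.

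The main obstacle is that $\LN{S}$ does not add across $v$ and $w$: it is a minimum, and for a fixed $v$ the best partner $w$ under the combined order is not the minimizer of a single per-$w$ tuple. Specifically, once $\mintcut$ ties, $w$'s with small subtree $h$-minimum are better only until that minimum drops below $v$'s own $h$-minimum, and after that only $\priority$ matters.

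I would resolve this by making the per-$w$ stored tuple $(\text{cut},\ \min(h\text{-min}(w^{\downarrow}),\,\cdot),\ \priority)$ irrelevant to $v$ in the following way. Among $w$ attaining the minimum cut value for this $v$, the final $\LN{S}$ is $\min(\LN{v},\LN{w})$. So it suffices to find the $w$ minimizing $(\text{cut},\LN{w},\priority(w^{\downarrow}))$, together with the $w$ minimizing $(\text{cut},\priority(w^{\downarrow}))$. Taking the better of these two final tuples is always optimal: if the optimal $w$ has $\LN{w}<\LN{v}$, the first query finds it, and otherwise the second query does.

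We therefore maintain two link-cut trees, or equivalently two cost fields, which doubles the constant. The nested case is handled symmetrically, using \Cref{corollary:subtree-minimum-query}(ii) and the observation that $h$-minima of $v^{\downarrow}\setminus w^{\downarrow}$ decompose over the path from $v$ to $w$ in the same way. Verifying that this two-query trick is exhaustive in the nested case is the step I expect to need the most care.
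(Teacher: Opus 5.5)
Your $1$-respecting and incomparable cases are fine. For incomparable $v,w$ one has $\LN{S}=\min(\ell_v,\ell_w)$ with the static value $\ell_x=\min_{y\in x^{\downarrow}}h(y)$, and the better of your two queries is optimal for each $v$. The paper even drops your first query, since the optimal pair is recovered when processing the endpoint whose subtree holds the smaller $h$-value.

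The gap is the nested case, which you leave as ``symmetric''; it is not. For $S=u^{\downarrow}\setminus w^{\downarrow}$, let $z$ be the vertex of minimum $h$ in $u^{\downarrow}$.
\begin{itemize}
\item If $z\notin w^{\downarrow}$, then $\LN{S}=h(z)$.
\item For $w$ on the path from $u$ to $z$, the value $\LN{S}=\min_{y\in u^{\downarrow}\setminus w^{\downarrow}}h(y)$ depends on $u$ and is non-increasing as $w$ descends, while $\priority(S)$ increases.
\end{itemize}
All such $w$ have $\ell_w=h(z)$. So any query with static secondary keys $\ell_w,\pm\priority(w^{\downarrow})$ returns the highest or the lowest minimum-value vertex of that path, whereas the optimum can lie strictly in between.

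Concretely, take the cycle $s,u,a,b,z$ with weights $1,9,9,9,100$ on $su,ua,ab,bz,zs$. Let $T$ be the path $s-u-a-b-z$, set $h(z)=1$, $h(a)=2$, $h(u)=3$, $h(b)=4$, and use unit priorities. The minimum cuts are exactly $\{u\},\{u,a\},\{u,a,b\}$, namely $u^{\downarrow}\setminus w^{\downarrow}$ for $w=a,b,z$, with tuples $(10,3,1),(10,2,2),(10,2,3)$. The correct output is $\{u,a\}$. Your queries return only $w=a$ or $w=z$, and $\{u,a\}$ is produced nowhere else, since it crosses exactly the tree edges $su$ and $ab$.

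What is missing is a \emph{dynamic} restriction of the search region, implemented by ADDs on a secondary coordinate rather than only on the cut value. The paper does this with two link-cut trees.
\begin{itemize}
\item $T_1$ has costs $(a_1,-\ell_w,\priority(w^{\downarrow}))$. Among minimum-value partners it maximizes $\ell_w$, so its second coordinate differs from $-h(z)$ exactly when some such partner has $z\notin w^{\downarrow}$. Otherwise it returns the lowest minimum-value partner $x$ on the path to $z$, since $\priority(w^{\downarrow})$ is monotone along that path.
\item $T_2$ has costs $(b_1,b_2,-\priority(w^{\downarrow}))$ with $b_2\equiv 0$ initially.
\end{itemize}
In the first case, call $\ADD{T_2}{z,(0,1,0)}$, which penalizes the path from $z$ to the root, and then query $\SUBTREE{T_2}{u}$.

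In the second case, let $y$ be the vertex of minimum $h$ in $u^{\downarrow}\setminus x^{\downarrow}$, obtained via \Cref{corollary:subtree-minimum-query}(ii). Then call $\ADD{T_2}{x,(0,-1,0)}$ and $\ADD{T_2}{y,(0,1,0)}$. This favors exactly the vertices from $x$ up to just below $\mathrm{lca}(x,y)$. Among minimum-value partners, these are precisely the ones attaining the best possible lexicographic number $h(y)$, and the query then picks the one of least cut priority.

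The ADDs are undone afterwards, at $O(\log n)$ amortized cost per $u$. In the example, $x=z$ and $y=a$, the favored set is $\{b,z\}$, and the query correctly returns $b$.
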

\begin{proof}[Proof of \Cref{theorem:modified-karger-algorithm}]
    Begin by constructing a tree packing $\mathcal{T}$ using \Cref{theorem:tree-packing-algorithm}.
    With high probability the lexicographically-first minimum cut of $G$ $2$-respects some tree in $\mathcal{T}$ by the guarantees of the tree packing algorithm of \cite{GMW20}.
    Therefore, applying \Cref{lemma:modified-minimum-cut-2-respecting-algorithm} on every tree in $\mathcal{T}$ and returning the lexicographically-first minimum cut among all cuts found, yields the lexicographically-first minimum cut of the graph with high probability.
    Finally, the time complexity of the algorithm is $O(m\log^2 n)$ since we run \Cref{lemma:modified-minimum-cut-2-respecting-algorithm} $O(\log n)$ times, and the time complexity of the tree packing is also $O(m\log^2 n)$.
\end{proof}

The rest of the section is devoted to proving \Cref{lemma:modified-minimum-cut-2-respecting-algorithm}.
The algorithm for finding the lexicographically-first cut that $2$-respects the tree is divided into three cases - when it $1$-respects the tree, when it $2$-respects the tree and the edges $(e,e')$ defining the cut are not on a path to the root (independent case), and when it $2$-respects it and the edges $(e,e')$ defining the cut are on some path to the root (descendant case).
It is straightforward to see that every cut that $2$-respects the tree falls into one of these cases.
Note that the proof for the $1$-respecting and independent cases are relatively simple modifications of the algorithms of \cite{GMW20}, while the descendant case requires a more substantial modification.
Finally, throughout this section we root $T$ at the source vertex $s$.

\subsection{$1$-Respecting Cut Case}
For the first case, the algorithm of \cite{GMW20} finds $\mintcut_G(S)$ for every cut $S\subseteq V$ that $1$-respects $T$, and then returns the minimum among them.
Our modified algorithm additionally computes the values $\LN{S},\priority(S)$ for each $1$-respecting cut $S$, and returns the cut minimizing the tuple $(\mintcut_G(S),\LN{S},\priority(S))$.
The following lemma immediately yields the main result of this section.
\begin{lemma}
    \label{lemma:1-respecting-cuts}
    The tuple $(\mintcut_G(S),\LN{S},\priority(S))$ of every cut that $1$-respects a given spanning tree can be computed in time $O(m+n)$.
\end{lemma}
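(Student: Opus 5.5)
Since $T$ is rooted at the source $s$, every cut that $1$-respects $T$ is determined by a single tree edge $(v,\mathrm{parent}(v))$ with $v\neq s$. The side not containing $s$ is exactly the subtree $v^{\downarrow}$. So for each such $v$ I will compute the triple $(\mintcut_G(v^{\downarrow}),\LN{v^{\downarrow}},\priority(v^{\downarrow}))$. Here $\LN{v^{\downarrow}}=\min_{x\in v^{\downarrow}} h(x)$ and $\priority(v^{\downarrow})=\sum_{x\in v^{\downarrow}}\priority(x)$, because $s\notin v^{\downarrow}$. There are $n-1$ such cuts, and I handle the three coordinates separately, each within $O(m+n)$ total time.

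The last two coordinates are routine subtree aggregates. I will do one post-order traversal of $T$. In it I set $\priority(v^{\downarrow})=\priority(v)+\sum_{c}\priority(c^{\downarrow})$ over the children $c$ of $v$, and similarly take the minimum of $h(v)$ and the children's values, with $h(s)=\infty$ for convenience. This takes $O(n)$ time. Alternatively, the minimum can be read off \Cref{corollary:subtree-minimum-query} query (i) with $a(x)=h(x)$.

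The cut value follows the standard approach of \cite{Karger00,GMW20}. Let $d(x)$ be the weighted degree of $x$. Then
\begin{equation*}
\mintcut_G(v^{\downarrow}) = \sum_{x\in v^{\downarrow}} d(x) - 2\sum_{(x,y)\in E:\ \mathrm{lca}(x,y)\in v^{\downarrow}} w(x,y),
\end{equation*}
because an edge has both endpoints in $v^{\downarrow}$ exactly when its LCA lies in $v^{\downarrow}$. So I assign each vertex $x$ the value $d(x)-2\sum_{e:\mathrm{lca}(e)=x} w(e)$ and take subtree sums in the same post-order pass.

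The only nontrivial step is computing the LCA of all $m$ edges within $O(m+n)$ time. For this I will use an offline batch LCA algorithm. One option is Tarjan's offline LCA, which is $O((m+n)\alpha)$ with plain union-find and becomes linear with the Gabow--Tarjan static-tree union-find. The other option is an $O(n)$-preprocessing, $O(1)$-query LCA structure obtained by reducing LCA to RMQ over the Euler tour, using the folklore RMQ claim. I expect this LCA step to be the only point that needs care. Everything else is a constant number of linear traversals, so the total time is $O(m+n)$.
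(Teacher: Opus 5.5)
Your proposal is correct and follows the paper's proof. Like the paper, you identify the $1$-respecting cuts with the subtrees $v^{\downarrow}$, compute $\priority(v^{\downarrow})$ by a post-order traversal, and obtain $\LN{v^{\downarrow}}$ in $O(n)$ (the paper uses \Cref{corollary:subtree-minimum-query}). The only difference is that the paper cites Lemma 5.1 of \cite{Karger00} as a black box for the cut values, whereas you reprove it via the identity $\mintcut_G(v^{\downarrow})=\sum_{x\in v^{\downarrow}}d(x)-2\sum_{\mathrm{lca}(e)\in v^{\downarrow}}w(e)$ with linear-time batch LCA, which is essentially how that lemma is proved.
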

\begin{proof}
    Every cut $S\subseteq V$ that $1$-respects the tree $T$ is of the form $v^{\downarrow}$ for some $v\in V\setminus\set{s}$.
    Therefore, $\priority(S)=\priority(v^{\downarrow})=\sum_{u \in v^{\downarrow}} \priority(u)$, which can be rewritten recursively as $\priority(v^{\downarrow})=\priority(v)+\sum_{u\in children(v)} \priority(u^{\downarrow})$, where $children(v)$ are the children of $v$ in the rooted tree.
    This formula can be efficiently computed for every $v\in V$ using a postorder traversal of the tree in $O(n)$ time.
    To compute the lexicographic number of each $1$-respecting cut, we use a subtree minimum query data structure (\Cref{corollary:subtree-minimum-query}).
    This takes $O(n)$ time overall, as construction takes $O(n)$, plus $O(1)$ time per query.
    Finally, to find the value of each cut we use the following result from \cite{Karger00}.
    \begin{lemma}[Lemma 5.1 of \cite{Karger00}]
        The values of all cuts that $1$-respect a given spanning tree can be determined in $O(m+n)$ time.
    \end{lemma}
    Combining all three values we obtain the tuple $(\mintcut_G(S),\LN{S},\priority(S))$ for every cut $S$ that $1$-respects the tree in time $O(m+n)$.
\end{proof}
\begin{corollary}
    \label{corollary:1-respecting-cuts}
    There exists an algorithm that, given a spanning tree $T$ of a graph $G$, a source vertex $s$, and an ordering $h:V\setminus\set{s}\to [n-1]$, finds a lexicographically-first minimum cut that $1$-respects $T$ in time $O(m + n)$.
\end{corollary}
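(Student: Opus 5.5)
The corollary follows directly from \Cref{lemma:1-respecting-cuts}. Since $T$ is rooted at the source $s$, every cut that $1$-respects $T$ and does not contain $s$ has the form $v^{\downarrow}$ for some $v\in V\setminus\set{s}$. Conversely, each such $v^{\downarrow}$ is a $1$-respecting cut, since the only tree edge crossing it is the edge from $v$ to its parent. So the candidates are exactly the $n-1$ sets $v^{\downarrow}$. Here ``lexicographically-first minimum cut that $1$-respects $T$'' means the set among these that minimizes the tuple $(\mintcut_G(S),\LN{S},\priority(S))$.

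The algorithm first invokes \Cref{lemma:1-respecting-cuts} to compute, in $O(m+n)$ time, the triple $(\mintcut_G(v^{\downarrow}),\LN{v^{\downarrow}},\priority(v^{\downarrow}))$ for every $v\neq s$. It then scans the $n-1$ triples once, comparing them lexicographically in $O(1)$ time each, and returns the set $v^{\downarrow}$ with the smallest triple. The set can be output by reporting $v$, or by listing the subtree in $O(n)$ time.

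Correctness holds because the scan returns a minimizer of the tuple over exactly the family of $1$-respecting cuts. The definition of $\LN{\cdot}$ for cuts not containing $s$ matches the subtree-minimum computation $\min_{x\in v^{\downarrow}}h(x)$ used in the lemma, since $s\notin v^{\downarrow}$. The minimizer may not be unique under general priorities, but the statement only asks for \emph{a} lexicographically-first cut, so ties are harmless.

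The only point needing care is that the family of sets enumerated is precisely the set of $1$-respecting cuts, both sides accounted for. This follows from rooting $T$ at $s$ and always taking the side without $s$. It is not a real obstacle. The total running time is $O(m+n)$.
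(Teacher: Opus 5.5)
Your proposal is correct and matches the paper's argument. The paper also obtains this corollary directly from \Cref{lemma:1-respecting-cuts}: with $T$ rooted at $s$, it computes the tuple $(\mintcut_G(S),\LN{S},\priority(S))$ for every $1$-respecting cut $S=v^{\downarrow}$ and returns the minimizer, and your remarks on why these subtrees are exactly the $1$-respecting cuts and why ties are harmless just make explicit what the paper leaves implicit.
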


\subsection{Descendant Case}
We say that a cut is a \emph{minimum descendant cut} of $u$ if it is of the form $u^{\downarrow} \setminus v^{\downarrow}$ for some descendant $v$ of $u$ and has minimum value among all such cuts.
Similarly, a \emph{lexicographically-first descendant cut} is a descendant cut $S\subseteq V$ of $u$  minimizing the tuple $(\mintcut_G(S),\LN{S},\priority(S))$.
Finally, throughout this section we abuse the notation by referring to vertices by their place in the ordering $h$, i.e. given a cut $S\subseteq V$ we denote $\LN{S}=\arg\min_{x\in S} h(x)$.

The main result in this section is an algorithm that finds a lexicographically-first descendant cut for each $u\in V$, extending the approach of \cite{GMW20} with additional tie-breakers.
The algorithm of \cite{GMW20} performs an Euler tour of the tree, and updates a link-cut tree $T'$ on the same edges as $T$ after each step of the tour.
When encountering the vertex $u$ on the way down, the algorithm finds the minimum descendant cut of $u$ by calling $\SUBTREE{T'}{u}$.
Our algorithm employs a similar process, but maintains two link-cut trees $T_1,T_2$ that have the same edges as the spanning tree $T$, each with a different cost tuple.
The first element of both cost tuples is calculated in the same manner as the cost of the link-cut tree of \cite{GMW20}.
The algorithm uses $T_1$ to check whether there exists a minimum descendant cut of $u$ whose lexicographic number is $\LN{u^{\downarrow}}$, that is it achieves the lowest possible lexicographic number of any descendant cut of $u$.
It then updates the edge costs of $T_2$ to refine the search for the lexicographically-first minimum cut, and finds the lexicographically-first minimum descendant cut by calling $\SUBTREE{T_2}{u}$.

We now present the cost tuples of the two link-cut trees $T_1,T_2$.
For every $v\in V\setminus \set{s}$ its cost tuple in $T_1$ is of the form $(a_1,a_2=-\LN{v^{\downarrow}},a_3=\priority(v^{\downarrow}))$, where $a_1$ is the vertex costs as calculated as in \cite{GMW20}.
The cost tuple of $v$ in $T_2$ is of the form $(b_1,b_2,b_3=-\priority(v^{\downarrow}))$ where $b_1$ is again calculated as in \cite{GMW20}, and $b_2$ is initialized to $0$.
The following lemma states that whenever the Euler tour reaches a vertex $v$ on the way down, a call to $\SUBTREE{T'}{v}$ returns a minimum cut of the form $v^{\downarrow} \setminus w^{\downarrow}$ for some descendant $w$ of $v$.
\begin{claim}[Lemma 8 of \cite{GMW20}]
    \label{claim:descendant-subtree-query}
    Let $T'$ be a link-cut tree whose vertex costs are maintained as in \cite{GMW20}.
    When encountering the vertex $u$ for the first time in the Euler tour, a call to $\SUBTREE{T'}{u}$ returns a vertex $v$ such that $u^{\downarrow} \setminus v^{\downarrow}$ is a minimum descendant cut of $u$.
    Furthermore, the exact value of the cut can be computed in $O(1)$ time given $\mintcut_G(u^{\downarrow})$ and $\mintcut_G(v^{\downarrow})$.
    Finally, these costs can be maintained throughout the Euler tour using total time $O(m\log n)$.
\end{claim}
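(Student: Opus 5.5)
The plan is to write the value of every descendant cut of $u$ as a fixed term depending only on $u$ plus a term depending only on $v$, and then to maintain the second term as the vertex cost in $T'$. For $v\in u^{\downarrow}$ write $w(A,B)$ for the total weight of edges between $A$ and $B$. Since $v^{\downarrow}\subseteq u^{\downarrow}$, counting the edges leaving $u^{\downarrow}\setminus v^{\downarrow}$ gives
\begin{equation*}
\mintcut_G(u^{\downarrow}\setminus v^{\downarrow}) = \mintcut_G(u^{\downarrow}) + \mintcut_G(v^{\downarrow}) - 2\,w(v^{\downarrow},V\setminus u^{\downarrow}) = \mintcut_G(u^{\downarrow}) - \mintcut_G(v^{\downarrow}) + 2\,w(v^{\downarrow},u^{\downarrow}\setminus v^{\downarrow}).
\end{equation*}
So for fixed $u$, minimizing over descendants $v$ means minimizing $c_u(v) := -\mintcut_G(v^{\downarrow}) + 2\,w(v^{\downarrow},u^{\downarrow}\setminus v^{\downarrow})$. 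This is also why the cut value takes $O(1)$ time: it equals $\mintcut_G(u^{\downarrow})$ plus the returned cost, and all values $\mintcut_G(x^{\downarrow})$ are available in $O(m+n)$ time by Lemma 5.1 of \cite{Karger00}.

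To maintain $c_u$, first initialize the cost of every $v$ to $-\mintcut_G(v^{\downarrow})$. Then characterize which edges contribute to the second term. An edge $(x,y)$ contributes to $w(v^{\downarrow},u^{\downarrow}\setminus v^{\downarrow})$ exactly when $\ell=\mathrm{lca}(x,y)$ lies in $u^{\downarrow}$ and $v$ lies on the tree path from $x$ or from $y$ up to $\ell$, excluding $\ell$. Hence one \emph{activation} of the edge is the three operations $\ADD{T'}{x,2w}$, $\ADD{T'}{y,2w}$, $\ADD{T'}{\ell,-4w}$. Their effect cancels on $\ell$ and all its ancestors, so it raises the cost of exactly the vertices through which the edge contributes. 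All LCAs are computed offline in $O(m+n)$ time, and each vertex $u\neq s$ carries one extra $\ADD{}{}$ of $+\infty$ on itself alone (to $u$ and on its parent), which excludes the degenerate choice $v=u$.

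The main obstacle is the timing. When $u$ is queried, exactly the edges with lca in $u^{\downarrow}$ must be active among those touching $u^{\downarrow}$. Edges whose lca is a proper ancestor of $u$ must not yet be active, since they would wrongly count edges leaving $u^{\downarrow}$. I would fix this by the orientation of the tour: activate an edge at the moment its lca is processed, and process vertices in the order in which the tour \emph{finishes} them (equivalently, run the tour in reverse, which is what "first encounter" means in \cite{GMW20}'s convention). When $u$ is processed, all edges with lca in $u^{\downarrow}$ are then active, and no edge with lca a proper ancestor of $u$ is. Edges already activated at vertices of other, disjoint subtrees have both endpoints and their lca outside $u^{\downarrow}$, so their $\ADD{}{}$ operations only touch ancestors of those endpoints and never a vertex of $u^{\downarrow}$. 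They therefore do not affect $\SUBTREE{T'}{u}$. I would carefully verify this invariant by induction along the tour.

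With the invariant in place, $\SUBTREE{T'}{u}$ returns the minimizer of $c_u$, that is, a minimum descendant cut of $u$. For the running time, every edge is activated once, giving $O(m)$ link-cut tree operations, plus $O(n)$ initialization operations and $O(n)$ queries. Each operation takes amortized $O(\log n)$ time, for a total of $O(m\log n)$.
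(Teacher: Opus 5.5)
\textbf{Gap: the query timing.} The paper does not prove this claim; it imports Lemma 8 of GMW20, where the query happens on the way down. Your identity and the lca-based activation $\ADD{T'}{x,2w}$, $\ADD{T'}{y,2w}$, $\ADD{T'}{\ell,-4w}$ are correct. With vertices processed in post-order your invariant holds, so you do get a minimum descendant cut for every $u$ in $O(m\log n)$ total time. The problem is the timing in the statement.

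Your cost $-\mintcut_G(v^{\downarrow})+2w(v^{\downarrow},u^{\downarrow}\setminus v^{\downarrow})$ is correct at $u$ only once every edge with lca in $u^{\downarrow}$ is active. That includes edges whose lca is a proper descendant of $u$, so it holds only after all of $u^{\downarrow}$ has been processed. The claim asserts correctness at the \emph{first} encounter of $u$. In any Euler tour, read forwards or backwards, the first occurrence of $u$ precedes every occurrence of its proper descendants. Reading the tour in reverse gives reverse post-order, which is ancestors-first, the opposite of what you need. So ``equivalently, run the tour in reverse'' is false, and as written you have not proved the stated claim.

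\textbf{How to repair it.} There are two easy options.
(i) Match the stated timing by using the \emph{first} identity you wrote. Initialize the cost of $v$ to $\mintcut_G(v^{\downarrow})$. At the first encounter of $u$, call $\SUBTREE{T'}{u}$ first, and only then activate every edge $(x,y)$ of weight $w$ with $\mathrm{lca}(x,y)=u$ via $\ADD{T'}{x,-2w}$, $\ADD{T'}{y,-2w}$, $\ADD{T'}{u,4w}$. When $u$ is queried, the active edges are those whose lca is a proper ancestor of $u$, plus those whose lca lies in an already finished disjoint subtree; the latter never touch $u^{\downarrow}$. For $v\in u^{\downarrow}$, an edge of the first kind lowers $v$'s cost by $2w$ exactly when it joins $v^{\downarrow}$ to $V\setminus u^{\downarrow}$. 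Hence the cost is $\mintcut_G(v^{\downarrow})-2w(v^{\downarrow},V\setminus u^{\downarrow})$, and the cut value is $\mintcut_G(u^{\downarrow})$ plus the returned cost.
(ii) Keep your post-order scheme and say explicitly that the query moves to the last encounter. This is enough for the paper's use: in \Cref{alg:descendant-cut} the components $a_2,a_3,b_3$ are static and every change to $b_2$ is undone right after the query, so the loop can run on the way up.

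\textbf{Excluding $v=u$.} Use a finite $M$ larger than the range of all costs, e.g.\ $M=4\sum_{e\in E}w(e)+1$, not $\infty$, so that $+M$ at $u$ and $-M$ at its parent actually cancel. In your post-order scheme this must be undone right after the query at $u$. Otherwise $u$ would be wrongly excluded when its ancestors are queried later.
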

Observe that \Cref{claim:descendant-subtree-query} implies whenever calling $\SUBTREE{T_1}{u}$ or $\SUBTREE{T_2}{u}$ we obtain a minimum descendant cut of $u$ since the first component of both cost tuples is maintained as in \cite{GMW20}.
Therefore, from here on we focus on how the second and third components of the cost tuples of $T_1,T_2$ are used to find the lexicographically-first minimum descendant cut of $u$.

We are now ready to show how our algorithm uses $T_1,T_2$ to find the lexicographically-first descendant cut among all descendant cuts.
It first checks whether there exists a minimum descendant cut that includes the vertex $\LN{u^{\downarrow}}$ using the following claim.
\begin{claim}
    \label{claim:descendant-cut-includes-lex-number}
    Let $(a_1,a_2,a_3)$ be the cost tuple returned by $\SUBTREE{T_1}{u}$.
    Then, $a_2=-\LN{u^{\downarrow}}$ if and only if there exists a minimum descendant cut of $u$ that includes the vertex $\LN{u^{\downarrow}}$.
\end{claim}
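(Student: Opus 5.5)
The plan is to reduce the claim to a statement about the set of minimizers of the first coordinate, and then read off the second coordinate using the lexicographic order of the cost tuples. Fix $u$ and write $x=\LN{u^{\downarrow}}$. Let $M_u$ be the set of descendants $v$ of $u$ for which $u^{\downarrow}\setminus v^{\downarrow}$ is a minimum descendant cut of $u$. By \Cref{claim:descendant-subtree-query}, when the Euler tour first reaches $u$, the first coordinate $a_1$ in $T_1$ is maintained exactly as in \cite{GMW20}. Since tuples are compared lexicographically, $\SUBTREE{T_1}{u}$ returns a vertex $v^\ast\in M_u$ whose second coordinate $a_2=-\LN{v^{\ast\downarrow}}$ is smallest among all $v\in M_u$. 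A preliminary point to check is that the $ADD$ updates of the Euler tour touch only the first coordinate. This holds because $a_2=-\LN{v^{\downarrow}}$ and $a_3=\priority(v^{\downarrow})$ depend only on the static tree $T$, so they are set once in $O(n)$ time via \Cref{corollary:subtree-minimum-query} and the postorder sums of \Cref{lemma:1-respecting-cuts}, and are never changed.

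The key step is a monotonicity fact. For every descendant $v$ of $u$ we have $v^{\downarrow}\subseteq u^{\downarrow}$, hence $\LN{v^{\downarrow}}\ge x$, with equality if and only if $x\in v^{\downarrow}$. Consequently every $v\in M_u$ satisfies $-\LN{v^{\downarrow}}\le -x$, with equality exactly when $x\in v^{\downarrow}$. Because the lexicographic minimum picks the smallest value of this coordinate over $M_u$, $a_2=-x$ holds if and only if no $v\in M_u$ achieves a strictly smaller second coordinate. Equivalently, $a_2=-x$ if and only if $x\in v^{\downarrow}$ for every $v\in M_u$.

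The hard step is matching this with the wording ``includes the vertex $\LN{u^{\downarrow}}$.'' Here I will take a minimum descendant cut of $u$ to be identified, as \Cref{claim:descendant-subtree-query} does, with the vertex $v$ returned by the query. Under that reading, the cut ``includes'' $x$ when $x$ lies in the subtree $v^{\downarrow}$ that specifies it. With this identification, the backward direction is immediate: if every $v\in M_u$ contains $x$ in $v^{\downarrow}$, then every candidate has second coordinate $-x$, so the returned $a_2$ equals $-x$. The forward direction follows because the returned $v^\ast\in M_u$ itself witnesses a minimum descendant cut whose defining subtree contains $x$.

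I expect this last step to be the main obstacle, because it depends on fixing the right convention for ``includes.'' For the vertex set $u^{\downarrow}\setminus v^{\downarrow}$ itself, the monotonicity above gives the opposite correspondence: that set contains $x$ exactly when $x\notin v^{\downarrow}$. So I will state the identification with the defining subtree explicitly before running the two directions. I will also double-check the sign convention on $a_2$ against how $T_2$ is subsequently used, so that the claim feeds correctly into the refinement step.
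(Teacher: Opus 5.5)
Your monotonicity step is correct and is the paper's own argument. Among the vertices $v\in M_u$ that minimize $a_1$, $\SUBTREE{T_1}{u}$ maximizes $\LN{v^{\downarrow}}$. Since $\LN{v^{\downarrow}}\ge x$ with equality iff $x\in v^{\downarrow}$, you get $a_2=-x$ iff $x\in v^{\downarrow}$ for \emph{every} $v\in M_u$, i.e.\ iff every minimum descendant cut $u^{\downarrow}\setminus v^{\downarrow}$ \emph{excludes} $x$. This is what the paper's proof concludes. It is also how \Cref{alg:descendant-cut} uses the claim: Case~1 when $a_2=-\LN{u^{\downarrow}}$ (all minimum descendant cuts exclude it), Case~2 otherwise (some minimum descendant cut includes it).

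So the claim as worded has its condition negated. The true statement is ``$a_2\neq-\LN{u^{\downarrow}}$ iff some minimum descendant cut of $u$ includes $\LN{u^{\downarrow}}$.'' The literal one is false: take $M_u=\{v\}$ with $x\in v^{\downarrow}$; then $a_2=-x$, yet the only minimum descendant cut excludes $x$. You were right to notice the mismatch. The correct move is to prove the corrected equivalence and flag the wording.

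The gap is in how you instead try to rescue the literal wording. First, redefining ``includes'' to mean $x\in v^{\downarrow}$ does not match the paper's usage. Case~2 needs $x\in u^{\downarrow}\setminus v^{\downarrow}$ to conclude $\LN{u^{\downarrow}\setminus v^{\downarrow}}=\LN{u^{\downarrow}}$.

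Second, and more seriously, the ``if'' direction is false even under your reading. Its hypothesis is that \emph{some} $v\in M_u$ has $x\in v^{\downarrow}$, but your argument starts from ``every $v\in M_u$ contains $x$ in $v^{\downarrow}$,'' which silently swaps the quantifier. For a counterexample, let $M_u=\{v,v'\}$ with $x\in v^{\downarrow}$ and $x\notin v'^{\downarrow}$. Then $\LN{v'^{\downarrow}}>x$, so the query returns $v'$ with $a_2=-\LN{v'^{\downarrow}}<-x$, even though the witness $v$ exists.

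So your final step proves neither the literal claim nor a true reinterpretation of it. Drop the reinterpretation and conclude with the universal form that your monotonicity step already establishes.
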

\begin{proof}
    Let $W$ be the set of all vertices that form a minimum descendant cut with $u$.
    Then, $\SUBTREE{T_1}{u}=\arg\max_{x\in W} \LN{x^{\downarrow}}$ by \Cref{claim:descendant-subtree-query} (breaking ties according to $a_3$).
    Therefore, $a_2=-\LN{u^{\downarrow}}$ only if every minimum descendant cut of $u$ excludes $\LN{u^{\downarrow}}$ and otherwise $a_2\ne-\LN{u^{\downarrow}}$.
\end{proof}
Next, we explain how the algorithm find the lexicographically-first minimum descendant cut in each of these cases.
For both cases we need the following easy claim.
\begin{claim}
    \label{claim:descendant-scardinality}
    Let $W$ be the set of all descendants of $u$ for which the pair $(b_1,b_2)$ is minimal.
    Then, $\SUBTREE{T_2}{u} = \arg\min_{x \in W} \priority(u^{\downarrow} \setminus x^{\downarrow})$ (breaking ties arbitrarily).
\end{claim}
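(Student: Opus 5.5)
This claim follows directly from how a link-cut tree compares tuple costs lexicographically, combined with an identity for the third coordinate. The plan is to show that, among descendants $x$ of $u$, minimizing the full tuple $(b_1,b_2,b_3)$ is the same as minimizing $(b_1,b_2)$ and then $\priority(u^{\downarrow}\setminus x^{\downarrow})$.

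First, recall what $\SUBTREE{T_2}{u}$ returns: a vertex $x\in u^{\downarrow}$ whose current cost tuple is lexicographically minimal. Any minimizer of the full tuple must first minimize the first two coordinates. Hence it lies in $W$, the set of descendants attaining the minimal pair $(b_1,b_2)$. Inside $W$, the returned vertex is one minimizing the third coordinate $b_3(x)=-\priority(x^{\downarrow})$, with any remaining ties broken arbitrarily. Note that $b_3$ is a static quantity: the Euler-tour updates touch only $b_1$ (as in \Cref{claim:descendant-subtree-query}) and possibly $b_2$, never $b_3$.

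The key step is the identity
\[
\priority(u^{\downarrow}\setminus x^{\downarrow})=\priority(u^{\downarrow})-\priority(x^{\downarrow}),
\]
valid for every descendant $x$ of $u$. It holds because $x^{\downarrow}\subseteq u^{\downarrow}$ and $\priority$ is additive over vertex sets. Since $\priority(u^{\downarrow})$ is fixed once $u$ is fixed, we get
\[
\arg\min_{x\in W}\bigl(-\priority(x^{\downarrow})\bigr)=\arg\min_{x\in W}\priority(u^{\downarrow}\setminus x^{\downarrow}),
\]
and ties on one side correspond exactly to ties on the other.

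The only point needing care is that $W$ is defined with respect to the costs at the moment of the query, that is, when the Euler tour first reaches $u$. The claim should therefore be read at that time, and the argument above applies to those current costs. The stored $b_3$ values are correct because $\priority(v^{\downarrow})$ is precomputed once, e.g.\ via the postorder traversal in \Cref{lemma:1-respecting-cuts}, and is never modified afterwards. Beyond this, there is no real obstacle: the claim reduces to lexicographic minimization together with this additivity identity.
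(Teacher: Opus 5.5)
Your proof is correct and follows the paper's argument: the lexicographic minimum returned by the link-cut tree lies in $W$, and the identity $\priority(u^{\downarrow}\setminus x^{\downarrow})=\priority(u^{\downarrow})-\priority(x^{\downarrow})$ together with $b_3=-\priority(x^{\downarrow})$ turns minimizing $b_3$ over $W$ into minimizing the cut's priority. Your extra remarks, that $b_3$ is never changed by the ADD operations and that $W$ is read off the costs at query time, make explicit what the paper leaves implicit.
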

\begin{proof}[Proof of \Cref{claim:descendant-scardinality}]
    For every descendant $v$ of $u$ the priority of the cut $u^{\downarrow} \setminus v^{\downarrow}$ is given by $\priority(u^{\downarrow}) - \priority(v^{\downarrow})$.
    Since $\priority(u^{\downarrow})$ is fixed, minimizing the priority of the cut $u^{\downarrow} \setminus v^{\downarrow}$ is equivalent to maximizing $\priority(v^{\downarrow})$.
    By our choice of the $b_3$ component of the cost tuple of $T_2$, $b_3=-\priority(v^{\downarrow})$ for every vertex $v\in V$.
    Therefore, among all descendants of $u$ for which $(b_1,b_2)$ is minimal, a call to $\SUBTREE{T_2}{u}$ returns the vertex $v$ with maximum $\priority(v^{\downarrow})$, and accordingly minimum $\priority(u^{\downarrow} \setminus v^{\downarrow})$.
\end{proof}

\paragraph{Case 1: all minimum descendant cuts exclude $\LN{u^{\downarrow}}$.}
Let $v$ be the vertex defining the lexicographically-first minimum descendant cut of $u$.
Observe that $v$ must be on the path between $u$ and $\LN{u^{\downarrow}}$, since $\LN{u^{\downarrow}}\in v^{\downarrow}$.
Let $x$ be the lowest vertex on the path from $u$ to $\LN{u^{\downarrow}}$ that forms a minimum descendant cut with $u$.
Observe that $x$ must be lower than $v$ along the path (or $x=v$), and hence $u^{\downarrow} \setminus v^{\downarrow} \subseteq u^{\downarrow} \setminus x^{\downarrow}$.
Therefore, $\LN{u^{\downarrow} \setminus v^{\downarrow}} \ge \LN{u^{\downarrow} \setminus x^{\downarrow}}$ and equality is achieved by the choice of $v$.
Given $x$ and $\LN{u^{\downarrow} \setminus x^{\downarrow}}$, one can find the lexicographically-first minimum cut as follows.
First, call $\ADD{T_2}{x,(0,-1,0)}$ and $\ADD{T_2}{\LN{u^{\downarrow} \setminus x^{\downarrow}},(0,1,0)}$ on the second tree.
This sets the $b_2$ component of the cost tuple of every vertex on the path from $x$ to $\LN{u^{\downarrow} \setminus x^{\downarrow}}$ to $-1$, and leaves the $b_2$ component of every other vertex as $0$.
Therefore, calling $\SUBTREE{T_2}{u}$ returns a minimum descendant cuts on the path descending from $\LN{u^{\downarrow} \setminus x^{\downarrow}}$ to $x$.
Notice that such a cut exists since $v$ is a vertex on the path descending from $\LN{u^{\downarrow} \setminus x^{\downarrow}}$ to $x$ (including $x$) by our choice of $x,v$.
Then, calling $\SUBTREE{T_2}{u}$ returns the vertex representing the cut with minimum $\priority(S)$ among all minimum descendant cuts on the path by \Cref{claim:descendant-scardinality}.
Therefore, the vertex returned is $v$ by our assumption that $v$ forms a lexicographically-first minimum descendant cut of $u$.
Finally, the algorithm calls $\ADD{T_2}{x,(0,1,0)}$ and $\ADD{T_2}{\LN{u^{\downarrow} \setminus x^{\downarrow}},(0,-1,0)}$ to reset the state of $T_2$.

To conclude, we explain how to find $x$.
Observe that $\SUBTREE{T_1}{u}$ returns a vertex on the path from $u$ to $\LN{u^{\downarrow}}$ by our assumption that all minimum descendant cuts of $u$ exclude $\LN{u^{\downarrow}}$.
Furthermore, for every vertex $w$ on the path from $u$ to $\LN{u^{\downarrow}}$ we have $\LN{w^{\downarrow}}=\LN{u^{\downarrow}}$ by the minimality of $\LN{u^{\downarrow}}$.
Therefore, the tie-breaker is by the $a_3$ component of the cost tuple, which equals $\priority(w^{\downarrow})$ for every descendant $w$ of $u$, and is monotonically decreasing along the path from $u$ to $\LN{u^{\downarrow}}$.
Hence, since $\SUBTREE{T_1}{u}$ minimizes the tuple $(a_1,a_2,a_3)$ it returns the lowest vertex along the path from $u$ to $\LN{u^{\downarrow}}$ that forms a minimum descendant cut of $u$, which is exactly the vertex $x$ defined above.
Then, given $x$ we can find $\LN{u^{\downarrow} \setminus x^{\downarrow}}$ using \Cref{corollary:subtree-minimum-query} in $O(1)$ time.

\paragraph{Case 2: there exists a minimum descendant cut that includes $\LN{u^{\downarrow}}$.}
The algorithm calls $\ADD{T_2}{\LN{u^{\downarrow}},(0,1,0)}$ on $T_2$, returns the cut $u^{\downarrow} \setminus \SUBTREE{T_2}{u}^{\downarrow}$, and finally calls $\ADD{T_2}{\LN{u^{\downarrow}},(0,-1,0)}$ to reset the state of $T_2$.
Let $v$ be the vertex returned by $\SUBTREE{T_2}{u}$ and denote its corresponding cut by $S=u^{\downarrow} \setminus v^{\downarrow}$.
We now show that $S$ is a lexicographically-first minimum descendant cut of $u$.
First, observe that $S$ is a minimum descendant cut by \Cref{claim:descendant-subtree-query}.
The first tie-breaker of $\SUBTREE{T_2}{u}$ is the component $b_2$, which equals $1$ along the path from $\LN{u^{\downarrow}}$ to the root, and $0$ elsewhere.
Observe that this enforces that the search space of $\SUBTREE{T_2}{u}$ is exactly all descendants $x$ of $u$ such that $u^{\downarrow} \setminus x^{\downarrow}$ is a minimum descendant cut of $u$ and $x$ is not on the path from $\LN{u^{\downarrow}}$ to the root.
Therefore, $\LN{u^{\downarrow}}\not\in v^{\downarrow}$ and we find $\LN{u^{\downarrow} \setminus v^{\downarrow}}=\LN{u^{\downarrow}}$.
In addition, by \Cref{claim:descendant-scardinality} the cut returned minimizes $\priority(S)$ among all cuts considered.
Finally, since $\LN{u^{\downarrow}}$ is the smallest possible lexicographic number of any descendant cut of $u$, the cut returned is lexicographically-first.
The algorithm for finding the lexicographically-first descendant cut for each vertex $u\in V$ is summarized in \Cref{alg:descendant-cut}.

\begin{algorithm}[ht]
\caption{Lexicographically-First Descendant Cut}
\label{alg:descendant-cut}
\begin{algorithmic}[1]
\Procedure{FindDescendantCuts}{$T, G, h$}
    \State Initialize link-cut trees $T_1, T_2$ on the edges of $T$
    \State Set cost tuples: $T_1$ with $(a_1, -\LN{v^{\downarrow}}, \priority(v^{\downarrow}))$, $T_2$ with $(b_1, 0, -\priority(v^{\downarrow}))$ for each $v$
    \For{each vertex $u$ encountered on the way down during an Euler tour of $T$}
        \State Update $a_1, b_1$ components as in \cite{GMW20}
        \State $(a_1, a_2, a_3) \gets \SUBTREE{T_1}{u}$
        \If{$a_2= -\LN{u^{\downarrow}}$}\Comment{\textbf{Case 1}: all min descendant cuts exclude $\LN{u^{\downarrow}}$}
            \State $x \gets \SUBTREE{T_1}{u}$ \Comment{Lowest vertex on path to $\LN{u^{\downarrow}}$ with min cut}
            \State $\ell \gets \LN{u^{\downarrow} \setminus x^{\downarrow}}$ \Comment{Using \Cref{corollary:subtree-minimum-query}}
            \State $\ADD{T_2}{x, (0, -1, 0)}$; $\ADD{T_2}{\ell, (0, 1, 0)}$ \Comment{Restrict to path from $\ell$ to $x$}
            \State $v \gets \SUBTREE{T_2}{u}$
            \State $\ADD{T_2}{x, (0, 1, 0)}$; $\ADD{T_2}{\ell, (0, -1, 0)}$ \Comment{Reset $T_2$}
        \Else \Comment{\textbf{Case 2}: some min descendant cut includes $\LN{u^{\downarrow}}$}
            \State $\ADD{T_2}{\LN{u^{\downarrow}}, (0, 1, 0)}$ \Comment{Exclude descendants of $\LN{u^{\downarrow}}$}
            \State $v \gets \SUBTREE{T_2}{u}$
            \State $\ADD{T_2}{\LN{u^{\downarrow}}, (0, -1, 0)}$ \Comment{Reset $T_2$}
        \EndIf
        \State Record the cut $u^{\downarrow} \setminus v^{\downarrow}$ with tuple $(\mintcut_G(u^{\downarrow} \setminus v^{\downarrow}), \LN{u^{\downarrow} \setminus v^{\downarrow}}, \priority(u^{\downarrow} \setminus v^{\downarrow}))$
    \EndFor
    \State \Return the cut minimizing the recorded tuple across all $u$
\EndProcedure
\end{algorithmic}
\end{algorithm}

The following lemma states that the algorithm indeed finds the lexicographically-first minimum cut among all descendant cuts, by considering all pairs found in case 1 and case 2.
\begin{lemma}
    \label{lemma:descendant-cuts}
    Let $(u_1,v_1),\ldots,(u_n,v_n)$ be the pairs found by the above process, and let $S_i = u_i^{\downarrow} \setminus v_i^{\downarrow}$ be their corresponding cuts.
    Then, the cut minimizing the tuple $(\mintcut_G(S_i),\LN{S_i},\priority(S_i))$ among all $i\in[n]$ is a lexicographically-first minimum descendant cut.
    Furthermore, the algorithm runs in time $O(m\log n)$.
\end{lemma}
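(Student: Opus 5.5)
Two things need to be established: correctness (the best recorded pair is a lexicographically-first minimum descendant cut over all $u$) and the $O(m\log n)$ running time.

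\textbf{Correctness.} The plan is a reduction to the per-vertex guarantee from the two case analyses above.

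Let $S^*=u^{*\downarrow}\setminus w^{\downarrow}$ be a descendant cut that minimizes the tuple $(\mintcut_G(S),\LN{S},\priority(S))$ over all descendant cuts of all vertices. In particular, $S^*$ is a minimum descendant cut of $u^*$, and it is lexicographically first among the descendant cuts of $u^*$.

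First, by \Cref{claim:descendant-cut-includes-lex-number}, the test on $a_2$ correctly decides which of Case 1 and Case 2 applies at $u^*$. The arguments for the two cases (together with \Cref{claim:descendant-subtree-query} and \Cref{claim:descendant-scardinality}) then show that the recorded pair $(u^*,v^*)$ yields a cut $S_{u^*}$ whose tuple equals the tuple of $S^*$. This is because the lexicographically-first descendant cut of a fixed $u$ has a uniquely determined tuple, even though the cut itself may not be unique when priorities are general.

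Second, every recorded $S_i$ is itself a descendant cut. So its tuple is at least the tuple of $S^*$ by the minimality of $S^*$.

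Hence the minimum recorded tuple equals the tuple of $S^*$, and any $S_i$ attaining it is a lexicographically-first minimum descendant cut.

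I will also check one detail: $\LN{S_i}$ and $\priority(S_i)$ are computed correctly. $\priority(S_i)=\priority(u_i^{\downarrow})-\priority(v_i^{\downarrow})$ follows from the subtree sums of \Cref{lemma:1-respecting-cuts}. $\LN{S_i}$ comes from query (ii) of \Cref{corollary:subtree-minimum-query}. The cut value follows in $O(1)$ from \Cref{claim:descendant-subtree-query}.

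\textbf{Running time.} Preprocessing consists of the subtree priority sums, the values $\LN{v^{\downarrow}}$, and the RMQ structure. These take $O(n)$ time by \Cref{lemma:1-respecting-cuts} and \Cref{corollary:subtree-minimum-query}. Initializing $T_1,T_2$ with their tuples takes $O(n\log n)$.

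Maintaining the first components $a_1,b_1$ throughout the Euler tour costs $O(m\log n)$ total by \Cref{claim:descendant-subtree-query}. Applying the same updates to both trees only doubles this cost, since tuple costs do not change the asymptotic cost of link-cut tree operations.

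For each $u$, the algorithm performs $O(1)$ $\ADD$ and $\SUBTREE$ operations, at amortized $O(\log n)$ each, plus $O(1)$ RMQ queries. This gives $O(n\log n)$ in total. The final minimum over $n$ recorded tuples costs $O(n)$. Summing, the running time is $O(m\log n)$ (assuming $m\ge n-1$, as $G$ is connected).

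\textbf{Expected obstacle.} The main subtlety is the reset $\ADD$s. Each case undoes its temporary $b_2$ modifications, so $b_2\equiv 0$ holds before every iteration. Because of this, later queries are not corrupted, and the per-$u$ correctness of both cases can be applied independently at each $u$.
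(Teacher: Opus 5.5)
Your proposal is correct and follows the paper's proof. Correctness comes from the per-vertex guarantee of Cases 1 and 2, which places a globally lexicographically-first descendant cut among the recorded ones. The running time is the same count: $O(m\log n)$ to maintain $a_1,b_1$, plus $O(1)$ link-cut and RMQ operations per vertex. You also make explicit two points the paper leaves implicit: every recorded cut is itself a descendant cut, so none can have a smaller tuple than the optimum, and the reset $\mathrm{ADD}$s keep $b_2\equiv 0$ between iterations.
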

\begin{proof}
    Notice that for each $u\in V$ the argument above shows that by either case 1 or case 2, the algorithm returns a cut that is lexicographically-first among all cuts of the form $u^{\downarrow} \setminus v^{\downarrow}$ for some descendant $v$ of $u$.
    Therefore, the lexicographically-first minimum descendant cut is among the cuts returned by the algorithm.

    It remains to bound the running time of the algorithm.
    The algorithm performs an Euler tour of the tree, which takes $O(n)$ time.
    In addition, it maintains two link-cut trees $T_1,T_2$.
    Note that the $a_1,b_1$ components of the cost tuples can be maintained in $O(m \log n)$ time using \Cref{claim:descendant-subtree-query}.
    In addition, the $a_2,a_3$ components of the first tree and the $b_3$ component of the second are static throughout the execution and can be set using \Cref{lemma:1-respecting-cuts} in $O(n)$ time.

    We now analyze the time spent during case 1 and case 2.
    In case 1 we perform four ADD operations and one SUBTREE operation on $T_2$ each requiring $O(\log n)$ amortized time, and in addition $O(1)$ calls to an RMQ data structure.
    Hence, the overall time spent in case 1 over all vertices is $O(n\log n)$.
    Similarly, in case 2 we perform two ADD operations and one SUBTREE operation on $T_2$ each requiring $O(\log n)$ amortized time.
    Therefore, the overall time spent in case 2 over all vertices is $O(n\log n)$.
    Finally, computing the tuples $(\mintcut_G(S_i),\LN{S_i},\priority(S_i))$ of each cut can be done in $O(1)$ time using \Cref{claim:descendant-subtree-query} and \Cref{corollary:subtree-minimum-query}.
    To conclude, the overall running time of the algorithm is $O(m\log n)$.
\end{proof}

\subsection{Independent Case}
In this section, we show how to find a lexicographically-first minimum cut defined by two independent edges in the tree.
The main result of this section is the following lemma.
\begin{lemma}
    \label{lemma:independent-case}
    There exists an algorithm that, given a graph $G=(V,E)$ with $n$ vertices and $m$ edges, a spanning tree $T$ of $G$, a source vertex $s$, and an ordering $h:V\setminus\set{s}\to [n-1]$, finds the pair of edges $(e,e')$ defining an independent cut $S\subseteq V$ minimizing the tuple $(\mintcut_G(S),\LN{S},\priority(S))$.
    The algorithm runs in time $O(m\log n + n\log n)$.
\end{lemma}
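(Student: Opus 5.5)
The plan is to mimic the independent-case algorithm of \cite{GMW20} and add tie-breaking components to the link-cut tree costs. An independent cut is $S=v^{\downarrow}\cup w^{\downarrow}$ for two vertices $v,w\neq s$, neither an ancestor of the other. Since $s\notin S$, the three tuple components decompose: $\LN{S}=\min(\LN{v^{\downarrow}},\LN{w^{\downarrow}})$ and $\priority(S)=\priority(v^{\downarrow})+\priority(w^{\downarrow})$. The cut value is
\[
\mintcut_G(S)=\mintcut_G(v^{\downarrow})+\mintcut_G(w^{\downarrow})-2w(v^{\downarrow},w^{\downarrow}).
\]
All the single-subtree quantities are precomputed in $O(m+n)$ time via \Cref{lemma:1-respecting-cuts} and \Cref{corollary:subtree-minimum-query}. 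In \cite{GMW20}, for each $v$ one computes the best partner $w$ by a link-cut tree query. The query ranges over a path or subtree whose vertex costs are $\mintcut_G(w^{\downarrow})-2w(v^{\downarrow},w^{\downarrow})$, maintained by ADD operations along root paths for each edge leaving the current subtree. This heavy-path / bough-style processing costs $O(m\log n+n\log n)$ in total.

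The key step is a symmetry-breaking convention that makes the $\min$ in $\LN{S}$ linear. For each unordered pair, designate $v$ to be the endpoint with $\LN{v^{\downarrow}}<\LN{w^{\downarrow}}$; the subtrees are disjoint, so the two values are distinct. With this convention $\LN{S}=\LN{v^{\downarrow}}$ depends only on $v$, and the partner query for a fixed $v$ minimizes $(\mintcut_G(w^{\downarrow})-2w(v^{\downarrow},w^{\downarrow}),\ \priority(w^{\downarrow}))$ over admissible independent $w$ with $\LN{w^{\downarrow}}>\LN{v^{\downarrow}}$. The first component is the \cite{GMW20} cost and the second is a static component, so both fit as a tuple cost in the link-cut tree. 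The overall candidate for $v$ is then scored by $(\mintcut_G(v^{\downarrow})+\text{first},\ \LN{v^{\downarrow}},\ \priority(v^{\downarrow})+\text{second})$, and we return the minimum over all $v$. Correctness follows because every independent cut is counted exactly once, with its correctly ordered representative $v$, and the partner query returns the lexicographically minimal completion for that $v$.

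The main obstacle is enforcing the constraint $\LN{w^{\downarrow}}>\LN{v^{\downarrow}}$ inside the link-cut tree query without extra logarithmic factors. I would try a static penalty component placed before the value. For a vertex $w$, the set of $v$ that may pair with it is determined by comparing $\LN$ values, and $\LN{w^{\downarrow}}$ is monotone non-increasing going up the tree. So along any root path the admissible $w$ form a prefix from below, which has the same shape as the path restriction in Case~1 of the descendant case. Concretely, I would add a temporary $(+1)$ component via $\ADD{T'}{\cdot}$ on the root path of the lowest inadmissible vertex on each queried path, query, and then undo the addition. That vertex is found by a binary search on the monotone $\LN$ values along the path, or more simply as the first vertex on the path whose subtree contains $\LN{v^{\downarrow}}$-smaller elements.

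An alternative, if this proves messy, is to avoid the restriction entirely. Let the query return the best $w$ under the tuple $(\text{value},\ -[\LN{w^{\downarrow}}<\LN{v^{\downarrow}}],\dots)$, which cannot be expressed statically. A cleaner fallback is to rely on the symmetry: when $\LN{w^{\downarrow}}<\LN{v^{\downarrow}}$, the pair is also considered from $w$'s side. So it suffices that the query for $v$ breaks value ties in favor of partners with larger $\LN{w^{\downarrow}}$ and then smaller $\priority(w^{\downarrow})$, using the static tuple $(\text{value},-\LN{w^{\downarrow}},\priority(w^{\downarrow}))$. I would verify that this still yields the global lexicographic minimum after combining over all $v$. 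Each query and update remains $O(\log n)$ amortized, so the total time stays $O(m\log n+n\log n)$.
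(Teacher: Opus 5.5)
\textbf{There is a genuine gap.} It lies in the step you flag as the main obstacle. Your main route is sound \emph{provided} the partner search for $v$ is restricted to $w$ with $\LN{w^{\downarrow}}>\LN{v^{\downarrow}}$. However, the way you propose to enforce this does not work. In the independent case the partner of $v$ is not searched along one root path; in \cite{GMW20} it is a minimum over all edges of the other tree of a bipartite instance. The inadmissible set $\{w : \LN{w^{\downarrow}}<\LN{v^{\downarrow}}\}$ is the union of the root paths of \emph{all} vertices $x$ with $h(x)<\LN{v^{\downarrow}}$. So one temporary ADD does not mark it, and marking it explicitly may require $\Theta(n)$ ADD operations for a single query.

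Your fallback tuple $(\text{value},-\LN{w^{\downarrow}},\priority(w^{\downarrow}))$ is incorrect. Take $G=K_4$ on $\{s,a,b,c\}$ with unit weights, $T$ the star centered at $s$, $h(a)=1$, $h(b)=2$, $h(c)=3$, and $\priority(a)=\priority(b)=1$, $\priority(c)=2$. All three independent cuts have value $4$, and the lexicographically-first one is $\{a,b\}$ with tuple $(4,1,2)$.
\begin{itemize}
\item The query for $a$ sees a value tie between $b$ and $c$ and prefers $c$, producing $(4,1,3)$.
\item The queries for $b$ and for $c$ each prefer the other over $a$, producing $\{b,c\}$ with lexicographic number $2$.
\end{itemize}
So $\{a,b\}$ is never produced. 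Hanging an extra leaf below $c$ by a heavy edge gives the same failure with unit priorities. An uncrossing argument shows this failure cannot occur when the target is a global minimum cut and priorities are positive. But the lemma does not assume that, and you gave no such argument.

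The missing observation, which is what the paper uses, is that the partner search needs neither the restriction nor any lexicographic-number component. Minimize only $(\text{value},\priority(w^{\downarrow}))$ over all partners. Concretely, in the bipartite instances of \cite{GMW20}, give tree edges cost $(c,\priority(v^{\downarrow}))$ and non-tree edges cost $(c,0)$. Then score each returned cut by its true tuple, using $\LN{S}=\min(\LN{v^{\downarrow}},\LN{w^{\downarrow}})$.

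Let $S^*=v^{*\downarrow}\cup w^{*\downarrow}$ be lexicographically-first with $\LN{v^{*\downarrow}}<\LN{w^{*\downarrow}}$, and let $S'$ be the cut returned for $v^*$.
\begin{itemize}
\item Its value is at most $\mintcut_G(S^*)$, hence equal.
\item It contains $v^{*\downarrow}$, so $\LN{S'}\le\LN{v^{*\downarrow}}=\LN{S^*}$, and equality follows from the choice of $S^*$.
\item $w^*$ was a candidate of the same value, so $\priority(S')\le\priority(S^*)$.
\end{itemize}
Thus $S'$ ties $S^*$, and the running time is exactly that of the \cite{GMW20} reduction and bipartite solver.
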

The proof of the lemma uses the bipartite problem framework of \cite{GMW20}.
\begin{definition}[The bipartite problem of \cite{GMW20}]
    Given two trees $T_1,T_2$ with edge costs and a list of non-tree edges $L=\set{(u,v): u\in T_1,v\in T_2}$ along with their costs, find a pair of edges $e\in T_1,e'\in T_2$ that minimize the sum of costs of $e,e'$ plus the sum of costs of edges in $L$ crossing the cut defined by $e,e'$.
    The size of the problem is defined as $|T_1|+|T_2|+|L|$.
\end{definition}
\begin{lemma}[Lemma 11 of \cite{GMW20}]
    \label{lemma:bipartite-problem-solution}
    There exists an algorithm that given an instance of the bipartite problem of size $N$ finds the optimal solution in time $O(N\log N)$.
\end{lemma}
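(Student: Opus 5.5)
This is Lemma 11 of \cite{GMW20}, and we would follow their approach. We first rewrite the objective in a form that separates the two trees. Root $T_1$ and $T_2$. For an edge $e\in T_1$ let $A_e$ be the vertex set below $e$, and for $e'\in T_2$ let $B_{e'}$ be the vertex set below $e'$. Since every edge of $L$ has one endpoint in each tree, an edge $(a,b)\in L$ crosses the cut defined by $(e,e')$ iff exactly one of $a\in A_e$, $b\in B_{e'}$ holds. Hence the objective is
\begin{equation*}
F(e,e') = f(e) + g(e') - 2\,w(L\cap (A_e\times B_{e'})),
\end{equation*}
where $f(e)=c(e)+w(L\cap(A_e\times T_2))$ and $g(e')=c(e')+w(L\cap(T_1\times B_{e'}))$. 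Both $f$ and $g$ are computed in $O(N)$ time by subtree sums, as in \Cref{lemma:1-respecting-cuts}. It remains to minimise the interaction term over all pairs.

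Next we fix $e$ and evaluate $\min_{e'}F(e,e')$ using a link-cut tree on $T_2$ (vertex costs standing for edges, as in the preliminaries). Initialise the cost of each $e'$ to $g(e')$. For every $(a,b)\in L$ with $a\in A_e$, call $\ADD{T_2}{b,-2w(a,b)}$. The minimum cost in $T_2$ then equals $\min_{e'}F(e,e')-f(e)$. Processing a heavy path of $T_1$ bottom-up makes the sets $A_e$ nested, so each $L$-edge is added once per heavy path containing it. By a heavy-light decomposition, each $L$-edge lies in $O(\log N)$ light subtrees, which gives $O(N\log N)$ operations in total and $O(N\log^2 N)$ time.

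The hard step is removing the extra logarithm. We would do this by recursing on $T_1$ with compressed instances. Split $T_1$ at a heavy path or centroid, handle the pairs whose $T_1$-edge lies on the split path as above, and recurse on the hanging subtrees. In each recursive call, replace $T_2$ by its compressed (virtual) tree on the endpoints of the surviving $L$-edges. Maximal chains of degree-$2$ vertices collapse to a single edge that keeps the minimum cost, and the contributions of discarded $L$-edges are folded into the costs of $T_2$ as constant offsets. Each subproblem then has size linear in its number of $L$-edges. The aim is a charging argument in which every link-cut operation is charged to an $L$-edge at a level where its subproblem shrinks geometrically, so that the total cost telescopes to $O(N\log N)$ rather than $O(N\log^2 N)$. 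This amortization, together with correctly maintaining the folded offsets under compression, is the main obstacle, and here we would track the argument of \cite{GMW20} closely rather than re-derive it.
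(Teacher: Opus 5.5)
\textbf{Verdict: genuine gap.} Your argument establishes only an $O(N\log^2 N)$ bound, and the route you sketch for removing the extra logarithm cannot work as stated.

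The paper gives no proof of this statement. It imports it verbatim as Lemma~11 of \cite{GMW20}, so a bare citation would have matched it. Your reformulation $F(e,e')=f(e)+g(e')-2w(L\cap(A_e\times B_{e'}))$ is correct. So is the heavy-path sweep with path-to-root additions on a link-cut tree over $T_2$. The problem is that the sketched improvement still handles each split path by a link-cut sweep. Then a heavy path $P$ whose top has $n_P$ $L$-edges below it costs $n_P$ operations of cost $O(\log n_P)$ each, even on a compressed $T_2$. Heavy-light decomposition only guarantees that these sizes double at each light edge on the root path of an $L$-edge's endpoint. So they can be as small as $n_j\approx 2^j$ at the $j$-th level from the bottom, and that $L$-edge is charged $\sum_j\log n_j=\Theta(\log^2 N)$. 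For $T_1$ a complete binary tree with one $L$-edge per leaf, the accounting gives $\Theta(N\log^2 N)$ whether or not $T_2$ is compressed. Geometric shrinkage of subproblem sizes makes per-level costs decrease only arithmetically, so no charging scheme makes them telescope.

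\textbf{Missing idea.} Consider an $L$-edge whose endpoint lies in a light subtree with $s$ $L$-edges, hanging off a heavy path with $n$ $L$-edges. It must pay only $O(1+\log(n/s))$ at that path; summed over the $O(\log N)$ heavy paths above it, this telescopes to $O(\log N)$. That requires dropping the $\Theta(\log)$-cost data-structure operation per $L$-edge per level. One way to do this:
\begin{enumerate}
\item On each heavy path $P$, run a weight-balanced divide and conquer over the path vertices. The weight of a vertex is $1$ plus the number of $L$-edges at it or in its light subtree. Each node does only linear work on a compressed copy of $T_2$.
\item For the upper part of a node, compute with one subtree-sum pass the contribution of the lower part's $L$-edges to every $e'$. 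This contribution is the same for all path edges of the upper part, since they are all ancestors of the lower part. Fold it into the costs, then recompress with respect to the upper part's $L$-edges, keeping chain minima plus one constant for the discarded edges.
\item For the lower part, simply drop the upper part's $L$-edges.
\item At a leaf $u_i$, one more linear pass gives the best partner of the path edge above $u_i$. The light subtree at $u_i$ is then recursed on with unfolded costs.
\end{enumerate}
A vertex of weight $w$ lands at depth $O(1+\log(n/w))$, which gives $O(N\log N)$ overall.

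This is also where your ``folded offsets'' belong. In a recursion into a hanging subtree, the other $L$-edges simply vanish. Folding is needed only for edges that are ancestors of the discarded endpoints, i.e.\ along the split path itself.
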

\begin{lemma}[Lemma 10 of \cite{GMW20}]
    There exists an algorithm that, given an edge-weighted graph $G$ with $n$ vertices and $m$ edges, a spanning tree $T$ on the vertices of $G$, reduces finding a minimum independent cut for every tree edge $e$ to multiple instances of the bipartite problem, with total size $O(m)$ in $O(m\log n)$ time.%
    \footnote{The original lemma in \cite{GMW20} returns only a single minimum cost pair $(e,e')$, however the algorithm finds the minimizing pair for every edge $e$ in $T_1$. 
    We find the minimizing pair for every edge $e$ in $T_2$ by reversing the roles of $T_1,T_2$ and calling the algorithm again.}
\end{lemma}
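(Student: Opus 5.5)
The statement to prove is the last one above: Lemma 10 of \cite{GMW20}, as adapted in the footnote. It reduces finding a minimum independent cut for every tree edge to bipartite instances of total size $O(m)$ in $O(m\log n)$ time. My plan follows the heavy-path / centroid-style recursive decomposition of \cite{GMW20}, and I also check that the reduction reports, for every tree edge $e$, the best independent partner $e'$.

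\textbf{Cost of an independent cut.} Root $T$ at $s$. An independent pair $(e,e')$ with lower endpoints $v,w$, neither an ancestor of the other, defines the cut $S=v^{\downarrow}\cup w^{\downarrow}$. Its value is
\[
\mintcut_G(S)=\mintcut_G(v^{\downarrow})+\mintcut_G(w^{\downarrow})-2c(v^{\downarrow},w^{\downarrow}),
\]
where $c(v^{\downarrow},w^{\downarrow})$ is the total weight of edges with one endpoint in each subtree. I would set:
\begin{itemize}
\item the cost of each tree edge to $\mintcut_G(v^{\downarrow})$, computed in $O(m+n)$ time by \Cref{lemma:1-respecting-cuts};
\item the cost of each non-tree edge $(x,y)$ to $-2w(x,y)$, counted exactly when the cut separates $x$ and $y$ from the source side in the appropriate sense.
\end{itemize}
The main point is that a non-tree edge contributes only when its endpoints lie in $v^{\downarrow}$ and $w^{\downarrow}$ respectively. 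Equivalently, $v$ must be an ancestor of $x$ and $w$ an ancestor of $y$, lying on the two branches below $\mathrm{lca}(x,y)$.

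\textbf{Recursive decomposition.} Recursively pick a centroid-like vertex $c$. Pairs $(v,w)$ whose independence is witnessed at $c$ (they lie in different child subtrees of $c$) are handled by bipartite instances, grouping the children of $c$ into two sides via a balanced binary split so that every pair of child subtrees is separated at some level. In each instance, $T_1$ and $T_2$ are the unions of the chosen child subtrees, compressed to the vertices relevant to the non-tree edges crossing between them. That compression uses virtual trees built from LCA queries, giving size proportional to the number of crossing edges. Each non-tree edge is assigned to the unique recursion level where its endpoints become separated, so total list size is $O(m)$.

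\textbf{Bounding the tree parts.} Keeping the total tree sizes $O(m)$ is the main obstacle. I would handle it with the heavy-light compression of \cite{GMW20}: paths without relevant vertices are contracted into single edges carrying the minimum of their edge costs. A contracted edge represents choosing the best edge on that path, which is valid because no crossing non-tree edge distinguishes vertices along it. With $O(\log n)$ recursion levels and linear work per level using an $O(1)$ LCA structure, the reduction runs in $O(m\log n)$ time.

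\textbf{Best partner for every edge.} Each bipartite solver from \Cref{lemma:bipartite-problem-solution} returns the best $e'\in T_2$ for every $e\in T_1$, where $e$ may be a contracted path representative that expands back to the original edges. Taking the minimum over the $O(\log n)$ instances containing $e$, and swapping the roles of $T_1,T_2$ as in the footnote, yields a minimum independent cut for every tree edge.
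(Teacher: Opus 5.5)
The paper gives no proof of this lemma. It imports it from \cite{GMW20} and adds only the footnote about rerunning with $T_1,T_2$ swapped, so your reconstruction has to stand on its own.

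Most of your ingredients are right: the value formula for $v^{\downarrow}\cup w^{\downarrow}$, tree-edge costs $\mintcut_G(v^{\downarrow})$, non-tree costs $-2w(x,y)$, a binary split of children, and virtual-tree compression with path-minimum costs. The centroid recursion you use to organize them does not work. At a centroid $c$ you handle only pairs in different child subtrees of $c$, i.e.\ pairs with $\mathrm{lca}(v,w)=c$. An independent pair whose lca is a proper ancestor of $c$, with $v\in c^{\downarrow}$ and $w$ in the parent-side piece, is split by $c$. It is then never handled, neither at $c$ nor in any later subproblem.

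Dually, your rule that a non-tree edge $(x,y)$ lives only at the level where $x,y$ are first separated fails when that centroid is not $\mathrm{lca}(x,y)$. Pairs $(v,w)$ with $v$ on the path from $\mathrm{lca}(x,y)$ down to $c$ stay together in the parent-side piece and need the $-2w(x,y)$ term there. Concretely, let $a$ have children $b,d$; hang a long path $b=b_0,\dots,b_k=x$ below $b$ and a leaf $y$ below $d$; add the non-tree edge $(x,y)$; and let the centroid be $c=b_j$ with $0<j<k$. The pairs $(b_i,d)$ with $i>j$ are never examined, and those with $i\le j$ are evaluated without the saving, so a minimum cut such as $b^{\downarrow}\cup d^{\downarrow}$ can be missed. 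Patching this (a parent-side-versus-$c^{\downarrow}$ instance plus forwarding $(x,y)$ as $(c,y)$) lets an edge appear on up to $O(\log n)$ levels. The $O(m)$ total size, on which the $O(m\log n)$ bound rests, then no longer follows.

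The fix is to drop the centroid recursion and charge everything to lcas. A pair $(v,w)$ is affected only by non-tree edges with $\mathrm{lca}(x,y)=\mathrm{lca}(v,w)$. So assign each non-tree edge to its lca; edges with one endpoint an ancestor of the other never enter any list. At each vertex $a$, split the children (or binarize $T$ with $\infty$-cost dummy edges) so that each edge assigned to $a$ lands in exactly one instance. Compressing both sides to virtual trees over that instance's endpoints gives size $O(|L|)$ per instance, hence $O(m)$ overall, with no depth argument needed.

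Two further points need stating:
\begin{enumerate}
\item Compression discards tree edges with no list endpoint below them. This is harmless only because such a pair has value $\mintcut_G(v^{\downarrow})+\mintcut_G(w^{\downarrow})$, which is at least twice the minimum cut value and never minimum in a connected graph. The ``for every tree edge'' guarantee therefore really covers only pairs that can form a minimum cut.
\item An original tree edge may lie on compressed paths in instances at many of its ancestors. So ``the minimum over the $O(\log n)$ instances containing $e$'' must be replaced by propagating each compressed edge's optimal partner along the path it represents.
\end{enumerate}
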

We modify the reduction to use $\priority(S)$ as a tie-breaker.
This is achieved by changing the costs of edges in the trees $T_1,T_2$ into tuples of the form $(c,\priority(v^{\downarrow}))$ where $c$ is the original cost of the edge and $v$ is the lower vertex of the edge.
Notice that the priority of the cut defined by $(e,e')$ is exactly $\priority(v^{\downarrow}) + \priority(v'^{\downarrow})$ where $v,v'$ are the lower vertices of $e,e'$ respectively.
Therefore, the second component of the cost captures the total priority of the cut defined by the edges in the trees.
In addition, we modify the cost of edges outside the trees to $(c,0)$ since they do not define the priority of the cut.

In order to calculate the modified weights efficiently, we use the fact that the trees $T_1,T_2$ are composed of edges of the original tree $T$.
Therefore, there are at most $n-1$ of them across all the trees and their second component can be computed in  $O(n+m)$ time using \Cref{lemma:1-respecting-cuts}.
We are now ready to prove \Cref{lemma:independent-case}.
\begin{proof}[Proof of \Cref{lemma:independent-case}]
    Begin by running the reduction of \cite{GMW20} with the modification detailed above.
    Notice that it is possible to modify the costs of all edges in each modified tree problem by first updating the costs of the original tree edges, and then applying the reduction on the tree with modified costs.
    The reduction still takes $O(m\log n)$ time as the modified edge costs can be computed in $O(n+m)$ time using \Cref{lemma:1-respecting-cuts}.
    In addition, the total size of all bipartite problems remains $O(m)$ since we only modified the edge costs, and therefore they can be solved in $O(m\log n)$ time by \Cref{lemma:bipartite-problem-solution}.

    We now show how to compute the tuple $(\mintcut_G(S),\LN{S},\priority(S))$ for every cut $S\subseteq V$ that was returned by the algorithm.
    Observe that an independent cut defined by edges $(e,e')$ with lower vertices $u,v$ respectively, can be written as $S = u^{\downarrow} \cup v^{\downarrow}$.
    Therefore, the lexicographic number of the cut is given by $\min(\LN{u^{\downarrow}},\LN{v^{\downarrow}})$ and $\priority(S)=\priority(u^{\downarrow}) + \priority(v^{\downarrow})$.
    Using the values calculated in \Cref{lemma:1-respecting-cuts}, we can compute these values for a cut in $O(1)$ time.
    Finally, there are at most $O(n)$ such cuts as the algorithm returns one candidate per edge in the tree, and hence one can compute the cost tuples of all cuts and return the lexicographically-first of them in $O(n)$ time.

    It remains to show that the cut returned is lexicographically first among all independent cuts.
    Let $(u^*,v^*)$ define a lexicographically-first independent cut $S^* = u^{*\downarrow} \cup v^{*\downarrow}$.
    Assume without loss of generality that $\LN{u^{*\downarrow}} < \LN{v^{*\downarrow}}$.
    Examine the cut $T$ found when the algorithm considers the tree edge $e^*$ whose lower vertex is $u^*$.
    Notice that $\mintcut_G(T) \le \mintcut_G(S^*)$ since the algorithm finds the minimum independent containing $e^*$, and therefore equality is obtained since $S^*$ is a minimum independent cut.
    Furthermore, $\LN{T} \le \LN{u^{*\downarrow}} = \LN{S^*}$ as $u^{*\downarrow}\in T$, and again equality is achieved since $S^*$ is lexicographically first.
    Finally, $\priority(T) \le \priority(S^*)$ since it is used as a tie-breaker in the algorithm.
    Therefore, $(\mintcut_G(S^*),\LN{S^*},\priority(S^*))=(\mintcut_G(T),\LN{T},\priority(T))$ and the algorithm returns a lexicographically-first independent cut.
\end{proof}

\subsection{Putting It All Together}
In this section we prove \Cref{lemma:modified-minimum-cut-2-respecting-algorithm} by combining the three cases detailed above.
\begin{proof}[Proof of \Cref{lemma:modified-minimum-cut-2-respecting-algorithm}]
    Throughout, we assume that the input graph is connected, as otherwise we can find the connected component containing each vertex in $O(n)$ time using DFS or BFS.
    Then, we return the connected component with minimum lexicographic number that does not contain $s$.
    Notice that if $m<n-1$ then the graph is disconnected and hence we may assume that $m\ge O(n)$.

    Assuming the above, given a tree $T$, we find the lexicographically-first minimum cut of each of the three cases, using \Cref{corollary:1-respecting-cuts,lemma:independent-case,lemma:descendant-cuts}.
    Notice that every minimum cut that $2$-respects the tree is covered by one of the three cases.
    Therefore, the minimum among all cuts found is indeed the lexicographically-first minimum cut that $2$-respects the tree.
    Finally, the overall running time of the algorithm is $O(m\log n)$ since each of the three cases runs in this time.
\end{proof}

%% file: additional_models.tex
In this section we prove \Cref{theorem:streaming-result,theorem:cut-query-result}.
Both proofs rely on a weaker version of the non-trivial minimum cut sparsifier (NMC) defined in \Cref{definition:nmc}.
Instead of preserving all minimum cuts, this weaker version only preserves each cut with some constant probability.
We note that in general this weaker version is also called a non-trivial minimum cut sparsifier, but we denote it as weak to differentiate between the two versions.
\begin{definition}[Weak Non-Trivial Minimum Cut Sparsifier]
    \label{definition:weak-nmc}
    A \emph{weak non-trivial minimum cut sparsifier} (weak NMC) of a graph $G=(V,E)$ is a randomized contraction $H=(V_H,E_H)$ of $G$ such that for every non-trivial minimum cut $S\subseteq V$ in $G$, it holds that $\mintcut_H(S)=\mintcut_G(S)$ with constant probability.
\end{definition}
The following lemma shows that by constructing $O(1)$ independent weak NMCs and applying our tie-breaking mechanism to each, we can find the lexicographically-first minimum cut with probability at least $2/3$.

\begin{lemma}[Meta-Lemma]
    \label{lemma:weak-nmc-meta}
    Let $G=(V,E)$ be an unweighted graph with a designated source $s\in V$ and an ordering $h:V\setminus\set{s}\to[n-1]$.
    Given $O(1)$ independent weak NMCs $H_1,\ldots,H_k$ of $G$, each with $\kappa$ edges, and the vertex with minimum degree (breaking ties by $h$), there exists an algorithm that finds the lexicographically-first minimum cut of $G$ can be found with probability at least $2/3$.
\end{lemma}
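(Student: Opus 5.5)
The plan follows the outline in the introduction. On each $H_i$ we assign priorities $\priority(u)=|C_u|$ and lexicographic numbers $h(u)=\min_{x\in C_u} h(x)$, and we take as source the contracted vertex containing $s$. We run the algorithm of \Cref{theorem:modified-karger-algorithm} on each $H_i$, obtaining a cut $S_i^*\subseteq V_H$, and expand it to $S_i'=\bigcup_{u\in S_i^*}C_u\subseteq V$. We also form the trivial cut $\set{v}$, where $v$ is the given minimum-degree vertex; if $v=s$ we use $V\setminus\set{s}$ instead. We then evaluate the tuple $(\mintcut_G(\cdot),\LN{\cdot},|\cdot|)$ for each of these $k+1$ candidates and output the smallest. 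Since the tuple is determined by $H_i$ (see below), no extra access to $G$ is needed beyond what is used to build the sparsifiers and the degree information.

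\emph{Correspondence.} The key fact, as in the proof of \Cref{theorem:dynamic-result}, is that for every $S\subseteq V_H$ we have $(\mintcut_{H}(S),\LN{S},\priority(S))=(\mintcut_G(S'),\LN{S'},|S'|)$ with $S'=\bigcup_{u\in S}C_u$. This holds because contraction preserves the edges crossing a cut that does not split any $C_u$, the minimum over a union of $C_u$'s equals the minimum of the per-part minima, and sizes add. Since $H_i$ is a contraction of $G$, every cut of $H_i$ is a cut of $G$ with the same value, so $\mintcut(H_i)\ge\lambda(G)$.

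\emph{Correctness.} Let $S^\dagger$ be the unique lexicographically-first minimum cut of $G$ given by \Cref{claim:uniqueness-of-lex-first-cut}.

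If $S^\dagger$ is trivial, it equals $\set{v}$ (or $V\setminus\set{s}$). It has value $\delta=\lambda$, and among singleton cuts the tie-break by $h$ picks exactly $v$. Every other candidate is a genuine cut of $G$, so its tuple is at least that of $S^\dagger$. Hence $S^\dagger$ is output deterministically.

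If $S^\dagger$ is non-trivial, each $H_i$ preserves it (no $C_u$ is split) with some constant probability $p$, independently. With probability at least $1-(1-p)^k\ge 2/3$ for a suitable constant $k$, some $H_i$ preserves it. In that $H_i$ the minimum cut value equals $\lambda$, and the cut corresponding to $S^\dagger$ minimizes the tuple among all cuts of $H_i$, since every cut of $H_i$ is a cut of $G$. So $S_i'$ has the same tuple as $S^\dagger$, and by uniqueness $S_i'=S^\dagger$. (The footnote worries about ties under general priorities, but here priorities equal cardinalities, so uniqueness transfers through the correspondence.) Every other candidate has a tuple at least this large, so $S^\dagger$ is output.

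\emph{Main obstacle.} The subtle step is the correspondence and uniqueness argument for the weighted priorities on $H_i$. In particular, the lexicographic order is defined on $V\setminus\set{s}$, but after contraction the source's node $C_s$ may contain other vertices. Those vertices can never lie on the non-source side of a cut of $H_i$, and the definition of $\LN{S'}$ only takes the minimum over $S'$, so they are simply irrelevant.

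A second point needs care: the meta-lemma should account for the cost of running the algorithm on graphs with $\kappa$ edges, namely $O(\kappa\log^2\kappa)$ time. For the query and streaming models, however, only correctness matters, since the local computation is free.
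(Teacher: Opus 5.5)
Your construction and your handling of the non-trivial case match the paper's proof: priorities $|C_u|$, induced order $h(u)=\min_{x\in C_u}h(x)$, one run of \Cref{theorem:modified-karger-algorithm} per $H_i$, and independence across the $k$ sparsifiers. Your correspondence and uniqueness-transfer argument is more explicit than the paper's. The gap is in the trivial case, and the paper's proof has the same blind spot, since it reads ``trivial'' as $|S^*|=1$. The trivial cut isolating $s$ is $V\setminus\set{s}$, with tuple $(\deg(s),1,n-1)$. Its lexicographic number is $1$, the smallest possible. So how it compares with a singleton $\set{x}$, whose tuple is $(\deg(x),h(x),1)$, has nothing to do with where $s$ sits in the degree tie-break. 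Hence your step ``if $S^\dagger$ is trivial, it equals $\set{v}$ (or $V\setminus\set{s}$)'' is false.

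Concretely, attach $s$ and $w$ as pendant (degree-$1$) vertices to a clique on at least three vertices that contains $a=h^{-1}(1)$. The only minimum cuts avoiding $s$ are $\set{w}$ and $V\setminus\set{s}$, so $S^\dagger=V\setminus\set{s}$. Suppose $w$ wins the tie-break, as with the paper's convention $s=h^{-1}(n)$. Then your only trivial candidate is $\set{w}$. Since $G$ has no non-trivial minimum cut, a weak NMC may contract all of $V$, so no candidate need equal $S^\dagger$. Suppose instead that $s$ wins the tie-break, and make $a$ itself the second pendant vertex. Then $S^\dagger=\set{a}$ with tuple $(1,1,1)$, but you may output $V\setminus\set{s}$. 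In both examples, raising the degree of the unreported pendant vertex to $2$ changes $S^\dagger$ without changing your input (a fully contracted sparsifier plus the reported vertex). So the hypothesis as stated is not enough, whichever way $s$ is placed in the tie-break at an end of the order.

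The repair is cheap, because every application of the lemma has all vertex degrees available. Take as trivial candidates the $h$-first vertex of minimum degree in $V\setminus\set{s}$, and also $V\setminus\set{s}$ whenever $\deg(s)=\delta$. Alternatively, keep a single reported vertex but place $s$ immediately after $h^{-1}(1)$ in the degree tie-break. Either way a trivial $S^\dagger$ is always among the candidates, and the rest of your argument goes through unchanged.

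A smaller point: you choose $k$ with $1-(1-p)^k\ge 2/3$ and then treat the run of \Cref{theorem:modified-karger-algorithm} on the preserving $H_i$ as error-free. That algorithm only succeeds with high probability, so you need slack for a union bound, e.g.\ $1-(1-p)^k\ge 4/5$ as the paper uses.
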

\begin{proof}
    Begin by setting the vertex order and priorities of the NMC sparsifiers, we do this similarly to the proof of \Cref{theorem:dynamic-result}.
    Fix some NMC sparsifier $H_i$, then for each contracted vertex $u\in V_{H_i}$ corresponding to $C_u\subseteq V$, set $\priority(u)=|C_u|$ and $h(u)=\min_{x\in C_u} h(x)$.
    Then, run our lexicographically-first minimum cut algorithm (\Cref{theorem:modified-karger-algorithm}) on each $H_i$ to find its lexicographically-first cut $S_i^*$.
    Finally, compare all candidate cuts and return the lexicographically-first among them.

    We now show that the algorithm finds the lexicographically-first minimum cut with probability at least $2/3$.
    Let $S^*$ be the lexicographically-first minimum cut of $G$.
    If $S^*$ is trivial (i.e., $|S^*|=1$), then it is preserved explicitly as the vertex with minimum degree (breaking ties by $h$).
    If $S^*$ is non-trivial, then by \Cref{definition:weak-nmc}, each weak NMC preserves $S^*$ with constant probability $p>0$.
    With $k=O(1)$ independent weak NMCs, the probability that at least one preserves $S^*$ is at least $1-(1-p)^k \ge 4/5$ for sufficiently large $k$.
    When $S^*$ is preserved by some $H_i$, our algorithm returns $S^*$ with high probability.
    Therefore, the overall probability of returning $S^*$ is at least $2/3$.
\end{proof}

\subsection{Proof of \texorpdfstring{\Cref{theorem:streaming-result}}{Theorem 1.3}}
\label{subsection:streaming-proof}
For the proof of \Cref{theorem:streaming-result}, we need the following result from \cite{AD21} for constructing weak NMCs in the dynamic streaming model.
\begin{theorem}[\cite{AD21}]
    \label{theorem:streaming-weak-nmc}
    There exists a dynamic streaming algorithm that, given an unweighted graph $G=(V,E)$ on $n$ vertices presented as a dynamic stream, constructs a weak NMC $H=(V_H,E_H)$ of $G$ with $O(n)$ edges using $O(n\log n)$ bits in two passes, with high probability.
\end{theorem}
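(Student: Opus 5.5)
The plan is to follow the contraction-based sparsification of \cite{RSW18} as implemented in the dynamic streaming model by \cite{AD21}: \emph{sample, contract the connected components of the sample, and store the surviving edges}. Let $\delta$ be the minimum degree of $G$ and $\lambda\le\delta$ its minimum cut value. Sample every edge independently, via a shared hash function so that insertions and deletions of the same edge are treated consistently, with probability $p=c/\delta$ for a small constant $c$. Let $H$ be the graph obtained by contracting every connected component of the sampled subgraph $G_p$.

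The correctness property comes first, and it is the easy part. A cut $S$ satisfies $\mintcut_H(S)=\mintcut_G(S)$ exactly when no component of $G_p$ is split by $S$, i.e.\ when no edge crossing $S$ was sampled. For a minimum cut this happens with probability $(1-p)^{\lambda}\ge(1-c/\delta)^{\delta}\ge e^{-2c}$, which is a constant. Since $H$ is a contraction of $G$, this is exactly the guarantee of \Cref{definition:weak-nmc}, and it holds for every (in particular every non-trivial) minimum cut.

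Next I would bound the size of $H$: $O(n/\delta)$ vertices and $O(n)$ edges with high probability. For the vertex count, each vertex has expected sampled degree about $c$, and the structural lemma of \cite{RSW18} (restricted to non-trivial cuts, in the spirit of Kawarabayashi--Thorup) shows that after contraction only $O(n/\delta)$ super-vertices remain. For the edge count, I would charge each edge of $H$ to a contracted vertex. Every super-vertex $C$ that is not a singleton has small boundary relative to $\delta|C|$, so the total boundary is $O(n)$. The key step is showing that components with many outgoing edges are rare; this uses a Chernoff bound over a $\poly(n)$-sized family of candidate components, after boosting $c$ by a constant if needed. I expect this edge bound, with high probability, to be the main obstacle, and I would take it essentially verbatim from the analysis in \cite{RSW18,AD21} rather than reprove it.

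Finally comes the streaming implementation in two passes and $O(n\log n)$ bits.
\begin{enumerate}
\item \textbf{First pass.} Maintain all vertex degrees exactly, using $O(n\log n)$ bits, which determines $\delta$ and hence $p$ at the end of the pass. In parallel, maintain for each guess $\delta\in\{2^i\}$ a linear sketch of the sampled subgraph at the corresponding rate, from which its spanning forest, and hence the components of $G_p$, can be recovered.
\item \textbf{Second pass.} With the component labels fixed ($O(n\log n)$ bits), store every stream edge whose endpoints lie in different components, with net multiplicities. By the bound above these are only $O(n)$ edges, i.e.\ $O(n\log n)$ bits.
\end{enumerate}
The delicate point here is meeting the $O(n\log n)$ space bound for recovering the components in the first pass. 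Generic connectivity sketches cost $O(n\log^3 n)$ bits, so I would rely on the sparsity of $G_p$: it has $O(n)$ expected edges when $c$ is constant. This lets the forest be recovered from a sparse-recovery sketch of size $O(n\log n)$ bits rather than from $\ell_0$-samplers, as done in \cite{AD21}.
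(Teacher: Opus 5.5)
The paper gives no proof of this statement; it imports it from \cite{AD21}. Judged on its own, your proposal has a genuine gap in the sampling scheme itself. Contracting the components of a \emph{uniform} edge sample at rate $p=c/\delta$ does not yield $O(n/\delta)$ vertices or $O(n)$ edges, and no choice of $c$ fixes this. A vertex of degree $\delta$ receives no sampled edge with probability $(1-c/\delta)^{\delta}\ge e^{-2c}$. It then remains a singleton of $H$, and all $\delta$ of its edges survive. In a $\delta$-regular graph, $H$ therefore has $\Omega(e^{-2c}n)$ vertices and $\Omega(e^{-2c}n\delta)$ edges in expectation, which is $\omega(n)$ once $\delta=\omega(1)$. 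Suppressing these singletons requires $c=\Omega(\log\delta)$. But then a non-trivial minimum cut of value $\lambda=\delta$ survives only with probability about $e^{-c}=\delta^{-\Omega(1)}$, so the weak-NMC guarantee is lost. Your charging argument handles only non-singleton super-vertices, which is exactly where it fails. The structure theorem of \cite{RSW18} does not rescue it: it bounds the number of edges lying in some non-trivial near-minimum cut, not the components of a random sample. The streaming part inherits the problem, since your second pass would have to store $\Theta(n\delta)$ edges, i.e.\ essentially the whole graph.

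The missing idea is the random $2$-out contraction of Ghaffari, Nowicki, and Thorup (SODA 2020), which is what \cite{AD21} builds on: every vertex contracts two uniformly random incident edges. Sampling is then relative to each vertex's own degree, so no vertex can remain a singleton. A Kawarabayashi--Thorup-type analysis shows that only $O(n/\delta)$ super-vertices remain w.h.p. A fixed non-trivial minimum cut still survives with constant probability: minimality forces each vertex to send at most half of its edges across the cut, the cut has only $\lambda\le\delta$ edges, and every degree is at least $\delta$. The $O(n)$ edge bound is a separate step. For example, a $(\delta+1)$-edge-connectivity certificate of the $O(n/\delta)$-vertex contraction has $O(n)$ edges and preserves every cut of value at most $\delta+1$ exactly. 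Your implementation also has three secondary problems. First, $\mathbb{E}|E(G_p)|=cm/\delta$ is not $O(n)$ when $m\gg n\delta$, so your sparse-recovery step can fail. Second, keeping one sketch per guess of $\delta$ costs an extra $\log n$ factor in space. Third, in a dynamic stream you cannot store inter-component edges ``with net multiplicities'' as they arrive, since their number mid-stream can be $\Theta(n^2)$; that pass also needs a linear sparse-recovery sketch.
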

\begin{proof}[Proof of \Cref{theorem:streaming-result}]
    Fix an ordering $h:V\to[n]$ of the vertices deterministically and choose an arbitrary source vertex $s\in V$, e.g. set $s=h^{-1}(n)$.
    Our algorithm construct $k=O(1)$ independent weak NMCs using \Cref{theorem:streaming-weak-nmc}.
    It also tracks the vertex with minimum degree (using $h$ to break ties).
    Using these two inputs we can apply \Cref{lemma:weak-nmc-meta} to find the lexicographically-first minimum cut.

    To conclude, we analyze the complexity of the algorithm.
    Each weak NMC is constructed using $O(n\log n)$ bits in $2$ passes.
    In addition, tracking the vertex with minimum degree requires $O(n\log n)$ bits (storing the degree and $h$-value of the current minimum of every vertex).
    Therefore, the overall space is $O(n\log n)$ bits and the algorithm uses $2$ passes.
\end{proof}

\subsection{Proof of \texorpdfstring{\Cref{theorem:cut-query-result}}{Theorem 1.4}}
\label{subsection:cut-query-proof}
The proof of \Cref{theorem:cut-query-result} follows similarly to that of \Cref{theorem:streaming-result}.
We rely on the minimum cut algorithm of \cite{AEGLMN22} to construct a weak NMC in the cut-query model, however we need to modify their construction to return a weak NMC instead of a minimum cut in certain instances.
We first prove \Cref{theorem:cut-query-result} using the following modified version of their result, and then prove the modified version.
\begin{theorem}[\cite{AEGLMN22}]
    \label{theorem:cut-query-weak-nmc}
    There exists a cut-query algorithm that, given an unweighted graph $G=(V,E)$ on $n$ vertices, constructs a weak NMC $H=(V_H,E_H)$ of $G$ with $O(n)$ edges using $O(n)$ cut-queries, with high probability.
\end{theorem}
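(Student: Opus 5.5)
We need to prove the modified \cite{AEGLMN22} statement: a cut-query algorithm that builds a weak NMC with $O(n)$ edges using $O(n)$ queries. The plan is to follow their minimum cut algorithm step by step. Instead of letting it return a minimum cut, we stop it at the moment it has produced a contracted graph, and output that contraction.

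The \cite{AEGLMN22} algorithm has three phases.
\begin{enumerate}
\item It estimates the minimum degree $\delta$, or more precisely a constant-factor approximation of the min-cut value $\lambda$, using $O(n)$ cut queries.
\item It samples edges at rate about $\log n/\lambda$. It then runs an expander-decomposition / trimming-and-shaving style contraction in the spirit of Kawarabayashi--Thorup and \cite{RSW18}. This yields a contraction of $G$ with $O(n/\delta)$ super-vertices. Each fixed non-trivial minimum cut survives it with constant probability: no edge of that cut is contracted.
\item It learns all edges of the contracted graph, which number $O(n)$, via cut queries on super-vertex pairs using $O(n)$ queries (e.g.\ by group-testing / recovery of sparse edge sets between contracted parts). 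It then computes a minimum cut offline.
\end{enumerate}

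Our modification is simply to output the contracted graph $H$ from phase (3) rather than its minimum cut. Since no further queries are made, the query bound stays $O(n)$.

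The correctness argument has two parts, mirroring Definition~\ref{definition:weak-nmc}.

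First, $H$ is a genuine contraction of $G$ whose edges are recovered exactly. The recovery step of \cite{AEGLMN22} succeeds with high probability, so $\mintcut_H(S)=\mintcut_G(S)$ for every $S$ not splitting a super-vertex.

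Second, every fixed non-trivial minimum cut $S$ has constant probability of not being split. This is exactly the guarantee that \cite{AEGLMN22} already prove for their contraction. The point to note is that their analysis is per-cut: they show a fixed non-trivial min cut survives with constant probability, and only afterwards argue about finding \emph{some} min cut. So we extract this per-cut statement rather than the final output guarantee.

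There is one case split. In the regime where $\lambda$ is close to $\delta$ or the graph is small (e.g.\ $\delta=O(\log n)$), their algorithm may branch to a different procedure that directly finds a minimum cut, for instance by learning the whole graph. In that branch we instead output $G$ itself, or the sparse graph learned, as the trivially valid NMC with no contraction. Its edge count must be $O(n)$, which holds in the low-degree branch since $m=O(n\delta)$ is learned only when $\delta=O(1)$. Otherwise we apply the same contraction with sampling rate adjusted.

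The main obstacle is this bookkeeping. We must verify that every branch of \cite{AEGLMN22} ends in a contraction with $O(n)$ edges that preserves each non-trivial minimum cut with constant probability, rather than merely preserving \emph{one} minimum cut. We also need the high-degree branch's edge bound: $O(n/\delta)$ vertices with total inter-cluster edges $O(n)$, which follows from the standard trimming/shaving bound. Finally, because the meta-lemma only needs constant success probability per sparsifier, any constant lost in these steps is harmless.
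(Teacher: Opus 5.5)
Your proposal has a genuine gap in phase (3) and in the edge bound it relies on. In the high-degree branch ($\delta_G\ge\log^{10}n$), \cite{AEGLMN22} star-contract $G$ to a graph $G'$ on $N=O(n/\delta_G)$ super-vertices. They never learn $G'$; they run the weighted cut-query algorithm of \cite{MN20} on it, simulating its queries through $G$. (Your phase (2) is also not what they do, though the per-cut constant survival probability you extract is the right property.)

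Nothing bounds the number of edges of $G'$ by $O(n)$. A contraction to $N$ super-vertices can have up to $\min(m,N^2)$ distinct weighted edges. For example, take $\delta_G=n^{1/4}$ and a dense random part of degree $n^{0.6}$ on half the vertices. The roughly $n^{3/4}$ super-vertices landing in the dense part are typically pairwise adjacent, which gives $\Theta(n^{3/2})$ edges. That violates the $O(n)$-edge requirement, and learning it would need that many queries.

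The trimming/shaving bound you invoke is Kawarabayashi--Thorup's $\tilde{O}(m/\delta)$, not $O(n)$. An $O(n)$ bound holds for the specific contraction that merges everything not separated by a non-trivial near-minimum cut \cite{RSW18}, and a random contraction that preserves each fixed cut only with constant probability does not achieve that. So ``output the contracted graph'' does not give an $O(n)$-edge weak NMC within $O(n)$ queries.

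The missing step is the one the paper adds: a second contraction on the small graph $G'$ before anything is learned.
\begin{enumerate}
\item Build a $(1\pm 1/100)$-cut sparsifier of $G'$ with $O(N\log^2 N)=o(n)$ queries (\Cref{lemma:cut-query-sparsifier}, simulated through $G$).
\item Offline, enumerate all non-singleton cuts of value at most $(1+3/100)$ times its minimum.
\item Contract every edge crossing none of them, as in \cite{RSW18}.
\end{enumerate}
The resulting graph keeps the minimum cuts that survived in $G'$ and, by the \cite{RSW18} edge bound, has $O(n)$ edges. Only then is it recovered explicitly with $O(n)$ queries via \Cref{lemma:cut-query-reconstruction}.

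Your low-degree branch is also wrong as stated. A small minimum degree gives $m\ge n\delta/2$, not $m=O(n\delta)$, so $G$ itself can have $\Theta(n^2)$ edges even when $\delta=1$. It can therefore neither serve as the output nor be learned with $O(n)$ queries. The paper instead uses the fact that for $\delta_G<\log^{10}n$ the algorithm of \cite{AEGLMN22} already produces an explicit weak NMC, and it outputs that.
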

\begin{proof}[Proof of \Cref{theorem:cut-query-result}]
    Fix an ordering $h:V\to[n]$ of the vertices deterministically and choose an arbitrary source vertex $s\in V$, e.g. set $s=h^{-1}(n)$.
    The algorithm begins by constructing $k=O(1)$ independent weak NMCs using \Cref{theorem:cut-query-weak-nmc}.
    In addition, it queries the degree of every vertex to find the vertex with minimum degree (using $h$ to break ties).
    Using these two inputs it applies \Cref{lemma:weak-nmc-meta} to find the lexicographically-first minimum cut with probability at least $2/3$.

    To conclude, we analyze the query complexity of the algorithm.
    Construction of the NMCs requires $O(n)$ cut-queries each, and finding the vertex with minimum degree requires $O(n)$ cut-queries (one per vertex).
    Therefore, the overall query complexity is $O(n)$ cut-queries.
\end{proof}
The proof of \Cref{theorem:cut-query-weak-nmc} relies on the algorithm of \cite{AEGLMN22}.
We also need the following standard results for the cut-query model.
\begin{lemma}[\cite{RSW18,PRW24}]
    \label{lemma:cut-query-sparsifier}
    There exists a cut-query algorithm that, given a (possibly weighted) graph $G=(V,E)$ on $n$ vertices, constructs a cut sparsifier $H=(V_H,E_H)$ of $G$ with $O(n\log n/\epsilon^2)$ edges using $O(n\log^2 n/\epsilon^2)$ cut-queries, with high probability.
\end{lemma}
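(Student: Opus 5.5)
This lemma is a standard cut-query result, so the plan is to obtain it by combining two known ingredients. The first is a cut sparsifier construction that needs only sampling access to edges. The second is an edge-sampling procedure implementable with cut queries.

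\textbf{Step 1: choose a sampling-based sparsifier.} I would use importance sampling by edge strength, or equivalently by connectivity or effective resistance in the framework of \cite{RSW18}. Sample each edge $e$ independently with probability $p_e=\min(1, c\log n/(\epsilon^2 k_e))$, where $k_e$ is its strength, and give the kept edge weight $w_e/p_e$. The usual Chernoff bound plus cut-counting argument shows that with high probability all cuts are preserved within a $(1\pm\epsilon)$ factor. The expected number of sampled edges is $\sum_e p_e=O(n\log n/\epsilon^2)$, since $\sum_e 1/k_e\le n-1$.

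\textbf{Step 2: simulate the sampling with cut queries.} First I would record the standard primitives. For disjoint sets $A,B$, the value $w(A,B)$ equals $\tfrac12(\mintcut(A)+\mintcut(B)-\mintcut(A\cup B))$, so it costs $O(1)$ queries. Detecting whether an edge exists between $A$ and $B$, and recovering one such edge, can then be done by binary search in $O(\log n)$ queries. Given this, I would run the strength-estimation procedure level by level as in \cite{RSW18}. At level $i$, work on a subsampled graph in which each edge is kept with probability $2^{-i}$. Compute its $k$-connected components by repeatedly peeling spanning forests, where each forest edge is found via the binary-search primitive. Edges whose endpoints fall in the same component at level $i$ but not at level $i+1$ get strength estimate roughly $2^{i}\log n$. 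The sampling probabilities from Step~1 are then realized by taking the forest edges recovered at the appropriate level.

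\textbf{Step 3: count queries.} Each recovered edge costs $O(\log n)$ queries. The number of edges recovered is $O(n\log n/\epsilon^2)$ in total across levels, using geometric decay of the per-level forest sizes together with the $\sum_e 1/k_e\le n$ bound. This gives $O(n\log^2 n/\epsilon^2)$ queries. For weighted graphs I would instead use the weighted binary search of \cite{PRW24}, which samples edges proportionally to weight through cut queries. The main obstacle is this accounting: keeping the number of levels and peeled forests from adding an extra logarithmic factor. I expect to handle it by following the analysis of \cite{PRW24}, which bounds the total number of recovered edges by $O(n\log n/\epsilon^2)$, and to cite that analysis rather than rederive it.
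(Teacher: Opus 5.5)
The paper gives no proof of this lemma; it imports it directly from \cite{RSW18,PRW24}. So what actually carries your proposal is the citation you fall back on at the end, and to that extent you agree with the paper. The sketch you build around the citation, though, fails in two places.

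First, in Step~2 the binary-search primitive cannot be run on $G_{2^{-i}}$. Queries return $\mintcut_G(S)$, and the subsample is a random subset of edges you have never seen, so no value $\mintcut_{G_{2^{-i}}}(S)$ can be computed. There is therefore nothing to binary-search. The fix is to materialize the subsample explicitly. Draw $N\sim\mathrm{Bin}(m',2^{-i})$, where $m'$ is the edge count of the current graph (half the sum of the part cut values). Then draw $N$ distinct uniform edges by a randomized descent that halves the current sides and recurses in proportion to the crossing weights $w(\cdot,\cdot)$, at $O(\log n)$ queries per edge. Forests and strong components of the sample are then computed offline at no query cost. This is affordable when you process levels from the sparsest rate upward, contract components already certified strong, and sample only edges between different parts. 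An edge of strength $k_e$ is then exposed only at rates $2^{-i}=O(k/k_e)$, so the expected number of materialized edges is $O(k\sum_e 1/k_e)=O(kn)$. That is where the geometric decay actually comes from. As you describe it, each level runs on the full vertex set with nothing contracted, and then there is no decay. In $K_n$, each of the $\Theta(\log n)$ levels with $2^{-i}n\ge Ck$ yields $k(n-1)$ forest edges, which is precisely the extra logarithm you were worried about.

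Second, \emph{realizing the probabilities of Step~1 by the forest edges at the appropriate level} conflates a connectivity certificate with an importance sample. Forest edges are selected by the peeling, not kept independently with probability $p_e$, so the Chernoff-plus-cut-counting argument of Step~1 says nothing about them. The two roles also need different parameters. Strength estimation needs only constant-factor accuracy, so $k=\Theta(\log n)$, independent of $\epsilon$. The sparsifier must be a fresh independent sample at rate $\min(1,\rho/\tilde k_e)$, with $\rho=\Theta(\log n/\epsilon^2)$ and underestimates $\tilde k_e\le k_e$ satisfying $\sum_e 1/\tilde k_e=O(n)$. The count is then $O(n\log n)$ materialized edges for estimation plus $O(n\log n/\epsilon^2)$ sampled edges, each at $O(\log n)$ queries, i.e.\ $O(n\log^2 n/\epsilon^2)$ in total. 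Finally, for weighted graphs Step~1 must use $p_e=\min(1,\rho\,w_e/k_e)$. The available bound there is $\sum_e w_e/k_e\le n-1$, whereas $\sum_e 1/k_e$ can be arbitrarily large.
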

\begin{lemma}[Corollary 2.3 of \cite{KK25c}, based on \cite{BM11}]
    \label{lemma:cut-query-reconstruction}
    There exists a deterministic algorithm that, given cut-query access to a weighted graph $G$ on $n$ vertices and $m$ edges, returns the entire graph $G$ using $O(n+m)$ cut-queries.
\end{lemma}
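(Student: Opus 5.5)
The statement to prove is \Cref{lemma:cut-query-reconstruction}: a deterministic algorithm recovers an entire weighted graph with $n$ vertices and $m$ edges using $O(n+m)$ cut-queries. My plan is to reduce edge discovery to additive queries between disjoint vertex sets, and then apply a coin-weighing style recovery in the spirit of \cite{BM11}.

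\textbf{Step 1: from cut queries to set-to-set weights.} For disjoint $A,B\subseteq V$ write $w(A,B)$ for the total weight of edges between $A$ and $B$. Then
\begin{equation*}
  w(A,B)=\tfrac12\bigl(\mintcut_G(A)+\mintcut_G(B)-\mintcut_G(A\cup B)\bigr),
\end{equation*}
so each set-to-set query costs three cut-queries. Similarly, the weighted degree of a vertex is $\mintcut_G(\{v\})$, and $w(v,B)$ costs $O(1)$ cut-queries. It therefore suffices to recover the graph with $O(n+m)$ additive set-to-set queries.

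\textbf{Step 2: recover each vertex's neighborhood as a coin-weighing instance.} Fix an ordering $v_1,\dots,v_n$ and, for each $i$, recover the edges from $v_i$ to $\{v_{i+1},\dots,v_n\}$. The unknown vector $x\in\mathbb{R}_{\ge0}^{n-i}$, with $x_j=w(v_i,v_j)$, is nonnegative with support size $d_i$. A query $w(v_i,B)$ is exactly an additive query $\sum_{j\in B}x_j$.

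I would invoke the deterministic, nonadaptive (or adaptive) search result of \cite{BM11}. It recovers a $d$-sparse nonnegative vector of length $N$ using $O\bigl(d\log(N/d)/\log d\bigr)$ additive queries. That bound is too weak on its own, since it carries extra log factors.

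\textbf{Step 3 (main obstacle): getting exactly $O(n+m)$.} The difficulty is removing the logarithmic overhead; a naive binary-splitting search costs $O(d\log n)$ per vertex. I would handle it with two ingredients.

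First, query multiple vertices simultaneously. Use the \cite{BM11} construction for recovering a hidden weighted graph with $m$ edges via additive queries on vertex subsets. That construction achieves $O\bigl(m\log n/\log m\bigr)$ queries, so it is $O(m)$ whenever $m\ge n^{\epsilon}$.

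Second, handle the sparse regime and isolated structure separately. Spending $n$ degree queries tells us exactly which vertices have nonzero degree and the total weight $m$-scale. The final count is then $O(n)$ for the degree phase plus the cost of the \cite{BM11} phase. The delicate part is checking that the \cite{BM11} bound, as stated in Corollary 2.3 of \cite{KK25c}, absorbs into $O(n+m)$ in all regimes, including small $m$. There I would fall back to a simple adaptive binary splitting restricted to the nonzero-degree vertices, charging each search step to a discovered edge or to a vertex.

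Since all steps are deterministic, the whole algorithm is deterministic, and the total query count is $O(n+m)$.
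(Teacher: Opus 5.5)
Your route is the one behind the cited result: simulate the additive queries of \cite{BM11} with $O(1)$ cut-queries each, then invoke its $O(m\log n/\log m)$ bound for weighted graphs. The paper gives no proof of its own and simply cites Corollary 2.3 of \cite{KK25c}. The outline is sound, but three points in your write-up need fixing.

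\textbf{The simulation formula.} The graph-reconstruction queries of \cite{BM11} ask for the total weight of edges \emph{inside} a set $S$, not between two disjoint sets. State the simulation you actually use. After the $n$ degree queries, $w(G[S])=\frac{1}{2}\left(\sum_{v\in S}\mintcut_G(\{v\})-\mintcut_G(S)\right)$ costs one cut-query, so the total is $n+O(m\log n/\log m)$ cut-queries.

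\textbf{The sparse regime.} There is nothing delicate here. For $2\le m\le\sqrt{n}$ we have $m\log n/\log m\le\sqrt{n}\log n=O(n)$, and for $m\ge\sqrt{n}$ we have $\log m\ge\frac{1}{2}\log n$, so $m\log n/\log m\le 2m$. Hence the bound of \cite{BM11} is already $O(n+m)$ for every $m$, and you should drop the fallback. Its accounting is also wrong as stated. Adaptive binary splitting pays $\Theta(\log n)$ queries per discovered edge, so you cannot charge each search step to an edge or a vertex. The fallback only survives because $m\log n=O(n)$ when $m\le n/\log n$.

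\textbf{Knowing $m$.} In a weighted graph the total weight says nothing about the number of edges. If the version of \cite{BM11} you invoke needs $m$, or an upper bound on it, in advance, you must supply it another way. For example, the lemma's use in the proof of \Cref{theorem:cut-query-weak-nmc} comes with an $O(n)$ edge bound.
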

\begin{proof}[Proof of \Cref{theorem:cut-query-weak-nmc}]
    The algorithm of \cite{AEGLMN22} is partitioned into three cases, depending on the minimum degree $\delta_G$ of the input graph $G$.
    The cases are: (i) $\delta_G \le 5\cdot 10^6$, (ii) $5\cdot 10^6$ and $\log^10 n$, and (iii)  $\delta_G \ge \log^10 n$.
    In the first two cases, the algorithm returns an explicit weak NMC sparsifier of the graph and then finds a minimum cut using any offline algorithm.
    In the third case, the algorithm finds a contraction of the vertex set $V$ into $O(n/\delta_G)$ super-vertices, and then returns a minimum cut by calling the algorithm of \cite{MN20} on the contracted graph.
    Note that in the cut-query model it is possible to simulate cut-queries on a contracted graph using cut queries on the original graph.
    
    We modify the algorithm of the third case to return a weak NMC sparsifier instead of a minimum cut.
    Denote the contracted graph by $G'=(V',E')$ and observe that $G'$ has $O(n/\log^9 n)$ vertices.
    To construct the NMC sparsifier, we will use the following algorithm of \cite{RSW18}.
    \begin{enumerate}
        \item Let $H$ be a simple graph on $N$ vertices.
        \item Compute a quality $(1\pm 1/100)$-cut sparsifier $H'$ of $H$.
        \item Find all non-singleton cuts of size at most $(1+3/100)$ times the minimum cut in $H'$, and contract every edge that does not participate in any such cut.
    \end{enumerate}
    It is shown in \cite{RSW18} that applying this algorithm on a simple graph $H$ with $N$ vertices results in an NMC sparsifier of $H$ with high probability, and that it has at most $O(N)$ edges.
    Therefore, recovering the new contracted graph explicitly using \Cref{lemma:cut-query-reconstruction} yields a weak NMC sparsifier of $G$ with at most $O(N)$ edges.
    Let $G''$ be the output of applying this algorithm on $G'$.
    Notice that $G''$ is the same as the output of applying the algorithm of \cite{AEGLMN22} on $G'$, up to the contraction of small that had already occured in $G$.
    However, every minimum cut that was preserved in $G'$ is also preserved in $G''$, and hence $G''$ is an NMC sparsifier of $G'$.
    Furthermore, it is clear that the number of edges in $G''$ is at most $O(n)$.

    To conclude, we analyze the query complexity of the algorithm.
    Denoting the number of vertices in $G'$ by $N$, constructing the cut sparsifier $H'$ requires $O(N\log^2 N)=O(n/\log^9 n \cdot \log^2 (n/\log^9 n))=O(n/\log^6 n) < O(n)$ cut-queries by \Cref{lemma:cut-query-sparsifier}.
    Then, since the new smaller contracted graph has at most $O(n)$ edges, we can recover it explicitly using $O(n)$ cut-queries by \Cref{lemma:cut-query-reconstruction}.
    Therefore, the overall query complexity of the modified algorithm is $O(n)$ cut-queries.
\end{proof}